\documentclass[a4paper,Haag duality]{mathscan}
\newtheorem{thm}{Theorem}[section] 
\newtheorem{pro}[thm]{Proposition}  
\newtheorem{cor}[thm]{Corollary}    
\newtheorem{lem}[thm]{Lemma}        
\theoremstyle{definition}           
\newtheorem{defn}[thm]{Definition}  
\newtheorem{exam}[thm]{Example}     


\newcommand{\NI}{\noindent}

\newcommand{\bea}{\begin{eqnarray}}
\newcommand{\eea}{\end{eqnarray}}

\def \b #1 {\bf #1}
\newcommand{\IR}{\mathbb{R}}

\newcommand{\IC}{\mathbb{C}}

\newcommand{\IT}{\mathbb{T}}
\newcommand{\IZ}{\mathbb{Z}}
\newcommand{\IM}{\mathbb{M}}

\newcommand{\cal}{\mathcal}
\newcommand{\clk}{{\cal K}}

\newcommand{\cla}{{\cal A}}
\newcommand{\clz}{{\cal Z}}
\newcommand{\cli}{{\cal I}}

\newcommand{\clh}{{\cal H}}

\newcommand{\clo}{{\cal O}}
\newcommand{\clb}{{\cal B}}

\newcommand{\clj}{{\cal J}}
\newcommand{\cln}{{\cal N}}

\newcommand{\clm}{{\cal M}}

\newcommand{\raro}{\rightarrow}

\newcommand{\vsp}{\vskip 1em}

\newcommand{\ul}{\underline}

\newcommand{\be}{\begin{equation}}
\newcommand{\ee}{\end{equation}}
\newcommand{\ben}{\begin{eqnarray*}}
\newcommand{\een}{\end{eqnarray*}}
\pagestyle{myheadings}

\begin{document}

\title{ $SU_2(\IC)$ symmetry in quantum spin chain and Haldane's conjecture}

\author{ Anilesh Mohari }
\thanks{...}

\address{
The Institute of Mathematical Sciences, \\
CIT Campus, Taramani, Chennai-600113 }

\email{anilesh@imsc.res.in}

\keywords{Uniformly hyperfinite factors. Cuntz algebra, Popescu dilation, Spontaneous symmetry breaking, Heisenberg iso-spin anti-ferromagnetic model, ground states, reflection positive, Haldane's conjecture }

\subjclass{46L}

\thanks{}

\begin{abstract}
In this paper, we prove that any translation and $SU_2(\IC)$-invariant pure state of $\IM=\otimes_{k \in \IZ}\!M^{(k)}_d(\IC)$, that is also real, lattice symmetric and reflection positive with a certain twist $r_0 \in U_d(\IC)$, is finitely correlated and its two-point spatial correlation function decays exponentially whenever $d$ is an odd integer. In particular, the Heisenberg iso-spin anti-ferromagnetic integer spin model admits unique low temperature limiting ground state and its spatial correlation function decays exponentially. The unique low temperature limiting ground state of the Hamiltonian is determined by the unique solution to Clebsch-Gordon inter-twinning isometry between two representations of $SU_2(\IC)$. 
\end{abstract}

\maketitle 

\section{ Introduction }

\vsp 
In this paper, we resume our investigation \cite{[24]} for various order properties of ground states of translation invariant Hamiltonian models \cite{[7],[31]} in the two-side infinite quantum spin chain $\IM =\otimes_{k \in \IZ}\!M_d^{(k)}(\!C)$ of the following formal form
\be 
H= \sum_{ n \in \IZ} \theta^n(h_0),
\ee
with $h^*_0=h_0 \in \IM_{loc}$, where $\IM_{loc}$ is the union of local sub-algebras of $\IM$ and $\theta$ is the right translation on $\IM$. In particular, our results are aimed to investigate the set of ground states for the Heisenberg anti-ferromagnet iso-spin model $H^{XXX}$ \cite{[6]} with nearest neighbour interactions  
\be 
h^{xxx}_0 = J (\sigma_x^0 \otimes \sigma_x^1 +\sigma_y^0 \otimes \sigma_y^1 + \sigma_z^0 \otimes \sigma_z^1),
\ee 
where $\sigma_x^k,\sigma_y^k$ and $\sigma_z^k$ are Pauli spin matrices located at lattice site $k \in \IZ$ and $J > 0$ is a constant. It is well known that any finite volume truncation of $H^{XXX}$ with periodic boundary condition admits a unique ground state \cite{[2],[6]}. However, no clear picture had emerged in the past literature about the set of ground states for the two sided infinite volume anti-ferromagnet Heisenberg $H^{XXX}$ model. Nevertheless, many interesting results on ground states, known for other specific Heisenberg type of models \cite{[19]}, such as Ghosh-Majumder (GM) model \cite{[13]} and Affleck-Kenedy-Lieb-Tasaki (AKLT) model \cite{[1]}, gave interesting conjectures on the general behaviour of ground states and its physical implication for anti-ferromagnetic Hamiltonian $H^{XXX}$ model. 

\vsp 
Haldane \cite{[2]} conjecture stated that $H^{XXX}$ has a unique ground state and the ground state admits a mass gap with its two-point spatial correlation function decaying exponentially for integer spin $s$ ( odd integer $d$, where $d=2s+1$ ). Whereas for the even values of $d$ (i.e. $s$ is a ${1 \over 2}$ odd integer spin, where $d=2s+1$), the conjecture stated that $H^{XXX}$ has a unique ground state with no mass gap and its two-point spatial correlation function does not decay exponentially. A well known result of Affleck and Lieb \cite{[2]} says: if $H^{XXX}$ admits a unique ground state for even values of $d$ then the ground state has no mass gap and its two-point spatial correlation function does not decay exponentially. In contrast, if the integer spin $H^{XXX}$ model admits a unique ground state with a mass gap, a recent result \cite{[26]} says that its two-point spatial correlation function decays exponentially. Thus the uniqueness of the hypothesis on the ground state for $H^{XXX}$ model is a critical issue to settle a part of the conjecture. 

\vsp 
For even values of $d$, we have proved in \cite{[24]} that ground state of anti-feromagnetic Heisenberg model $H^{XXX}$ is not unique and any low temperature limiting ground state of $H^{XXX}$ is not pure. Neverthelss, any low temperature limiting ground state admits no mass gap in its spectrum and its spatial correlation function does not decay exponentially provided the limiting ground state is non degenerate. In the present exposition we resume general mathematical set up \cite{[23],[24]} and address Haldane's conjecture for odd values for $d$.  

\vsp 
In the following text, we will now formulate the problem in the general framework of $C^*$-dynamical system \cite{[7],[8]} valid for two-sided one-dimensional quantum spin chain models. The uniformly hyper-finite \cite{[29]} $C^*$-algebra $\IM=\otimes_{k \in \IZ}\!M_d^{(k)}(\IC)$ of infinite tensor product of $d \times d$-square matrices $\!M_d^{(k)}(\IC) \equiv \!M_d(\IC)$, levelled by $k$ in the lattice $\IZ$ of integers, is the norm closure of the algebraic inductive limit of the net of finite dimensional $C^*$ algebras $\IM_{\Lambda}= \otimes_{k \in \Lambda }\!M_d^{(k)}(\IC)$, where $\Lambda \subset \IZ$ are finite subsets and an element $Q$ in $\IM_{\Lambda_1}$ is identified with the element $Q 
\otimes I_{\Lambda_2 \bigcap \Lambda_1^c}$ in $\IM_{\Lambda_2}$, i.e. by the inclusion map if $\Lambda_1 \subseteq \Lambda_2$, where $\Lambda^c$ is the complementary set of $\Lambda$ in $\IZ$.
We will use the symbol $\IM_{loc}$ to denote the union of all local algebras $\{ \IM_{\Lambda}: \Lambda \subset \IZ,\;|\Lambda| < \infty \}$. Thus $\IM$ is a quasi-local $C^*$-algebra with 
local algebras $\{\IM_{\Lambda}:|\Lambda| < \infty \}$ and $\IM_{\Lambda}'=\IM_{\Lambda^c}$, where $\IM'_{\Lambda}$ is the commutant of $\IM_{\Lambda}$ in $\IM$. We refer readers to Chapter 6 of [8] for more details on quasi-local $C^*$-algebras. 

\vsp 
The lattice $\IZ$ is a group under addition and for each $n \in \IZ$, we have an automorphism $\theta^n$, extending the translation action, which takes $Q^{(k)}$ to $Q^{(k+n)}$ for any $Q \in \!M_d(\IC)$ and $k \in \IZ$, by the linear and multiplicative properties on $\IM$. A unital positive linear functional $\omega$ of $\IM$ is called {\it state}. It is called {\it translation-invariant} if $\omega = \omega \theta$. A linear automorphism or anti-automorphism $\beta$ [16] on $\IM$ is called {\it symmetry } for $\omega$ if $\omega \beta = \omega$. Our primary objective is to study translation-invariant states and their symmetries that find relevance in Hamiltonian dynamics of quantum spin chain models $H$ \cite{[13],[30],[31]}.      

\vsp 
We consider \cite{[7],[30],[31]} quantum spin chain Hamiltonian in one dimensional lattice $\IM$ of the following form
\be 
H= \sum_{ n \in \IZ} \theta^n(h_0)
\ee
for $h^*_0=h_0 \in \IM_{loc}$, where the formal sum in (3) gives a group of auto-morphisms $\alpha=(\alpha_t:t \in \IR)$ by 
the thermodynamic limit: $\mbox{lim}_{\Lambda_{\eta} \uparrow \IZ}||\alpha^{\Lambda_{\eta}}_t(A)-\alpha_t(A)||=0$ for all $A \in \IM$ and $t \in \IR$ for 
a net of finite subsets $\Lambda_{\eta}$ of $\IZ$ with uniformly bounded surface energy, where automorphisms $\alpha^{\Lambda}_t(x)=e^{itH_{\Lambda}}xe^{-itH_{\Lambda}}$ is determined by the finite subset $\Lambda$ of $\IZ^k$ and $H_{\Lambda}=\sum_{n \in \Lambda} \theta^n(h_0)$. Furthermore, the limiting automorphism $(\alpha_t)$ does not depend on the net that we choose in the thermodynamic limit $\Lambda_{\eta} \uparrow \IZ$, provided the surface energies of $H_{\Lambda_\eta}$ are kept uniformly bounded. The uniquely determined group of automorphisms $(\alpha_t)$ on $\IM$ is called {\it Heisenberg flows } of $H$. In particular,  we have $\alpha_t \circ \theta^n = \theta^n \circ \alpha_t$ for all $t \in \IR$ and $n \in \IZ$. Any linear automorphism or anti-automorphism $\beta$ on $\IM_{loc}$, keeping the formal sum (3) in $H$ invariant, will also commute with $(\alpha_t)$.  

\vsp 
A state $\omega$ is called {\it stationary} for $H$ if $\omega \alpha_t= \omega$ on $\IM$ for all $t \in \IR$. The set of stationary states of $H$ is a non-empty compact convex set and has been extensively studied in the last few decades within the framework of ergodic theory for $C^*$-dynamical systems [12,Chapter 4]. However, a stationary state of $H$ need not be always translation-invariant. A stationary state $\omega$ of $\IM$ for $H$ is called $\beta$-KMS state at an inverse positive temperature $\beta > 0$ if there exists a function $z \raro f_{A,B}(z)$, analytic on the open strip $0 < Im(z) < \beta$, bounded continuous on the closed strip $0 \le Im(z) \le \beta$ with boundary condition 
$$f_{A,B}(t)=\omega_{\beta}(\alpha_t(A)B),\;\;f_{A,B}(t+i\beta)=\omega_{\beta}(\alpha_t(B)A)$$
for all $A,B \in \IM$. Using weak$^*$ compactness of convex set of states on $\clm$, finite volume Gibbs state $\omega_{\beta,\Lambda}$ is used to prove existence of a KMS state $\omega_{\beta}$ for $(\alpha_t)$ at inverse positive temperature $\beta > 0$. 
The set of KMS states of $H$ at a given inverse positive temperature $\beta$ is singleton set i.e. there is a unique $\beta$ KMS-state at a given inverse positive temperature $\beta={ 1 \over kT }$ for $H$ which has a finite range interaction [3,4,16] and thus inherits translation and other symmetry of the Hamiltonian. The unique KMS state of $H$ at a given inverse temperature is ergodic for translation dynamics. This gives a strong motivation to study translation-invariant states in a more general framework of $C^*$-dynamical systems \cite{[7]}.    

\vsp 
A state $\omega$ of $\IM$ is called {\it ground state} for $H$, if the following two conditions are satisfied:

\NI (a) $\omega(\alpha_t(A))=\omega(A)$ for all $t \in \IR$; 

\NI (b) If we write on the GNS space $(\clh_{\omega},\pi_{\omega},\zeta_{\omega})$ of $(\IM,\omega)$, $$\alpha_t(\pi_{\omega}(A))=e^{itH_{\omega}}\pi_{\omega}(A)e^{ -itH_{\omega}}$$ 
for all $A \in \IM$ with $H_{\omega}\zeta_{\omega}=0$, then $H_{\omega} \ge 0$.   

\vsp 
Furthermore, we say a ground state $\omega$ is {\it non-degenerate}, if null space of $H_{\omega}$ is spanned by $\zeta_{\omega}$ only. We say $\omega$ has a {\it mass gap}, if the spectrum $\sigma(H_{\omega})$ of $H_{\omega}$ is a subset of $\{ 0 \} \bigcap [\delta, \infty)$ for some $\delta >0$. For a wide class of spin chain models \cite{[25]}, which includes  Hamiltonian $H$ with finite range interaction, $h_0$ being in $\IM_{loc}$, the existence of a non vanishing spectral gap of a ground state $\omega$ of $H$ implies exponential decaying two-point spatial correlation functions. Now we present a precise definition for exponential decay of two-point spatial correlation functions of a state $\omega$ of $\IM$. We use symbol $\Lambda^c_m$ for complementary set of the finite volume box $\Lambda_m = \{ n: -m \le n \le m \}$ for $m \ge 1$. 

\vsp
\begin{defn} 
Let $\omega$ be a translation-invariant state of $\IM$. We say that the two-point spatial correlation functions of $\omega$ {\it decay exponentially}, if there exists a $\delta > 0$ satisfying the following condition: for any two local elements $Q_1,Q_2 \in \IM$ and $\epsilon > 0$, there exists an integer $m \ge 1$ such that    
\be
e^{\delta |n|} |\omega( Q_1 \theta^n(Q_2) ) - \omega(Q_1) \omega(Q_2)| \le \epsilon
\ee
for all $n \in \Lambda^c_m$.  
\end{defn}

\vsp 
By taking low temperature limit of $\omega_{\beta}$ as $\beta \raro \infty$, one also proves existence of a ground state for $H$ \cite{[5],[8],[29],[30]}. On the contrary to KMS states, the set of ground states is a convex face in the set of the convex set of $(\alpha_t)$ invariant states of $\IM$ and its extreme points are {\it pure} states of $\IM$ i.e. A state is called {\it pure} if it can not be expressed as convex combination of two different states of $\IM$. Thus low temperature limit points of unique $\beta-$KMS states give ground states for the Hamiltonian $H$ inheriting translation and other symmetry of the Hamiltonian. In general the set of ground states need not be a singleton set and there could be other states which are not translation invariant but still a ground state for a translation invariant Hamiltonian. Ising model admits non translation invariant ground states known as N\'{e}el state \cite{[8]}. However ground states that appear as low temperature limit of $\beta-$KMS states of a translation invariant Hamiltonian, inherit translation and other symmetry of the Hamiltonian. In particular if ground state of a translation invariant Hamiltonian model (3) is unique, then the ground state is a translation invariant pure state. 

\vsp
Let $Q \raro \tilde{Q}$ be the automorphism on $\IM$ that maps an element 
$$Q=Q_{-l}^{(-l)} \otimes Q_{-l+1}^{(-l+1)} \otimes ... \otimes Q_{-1}^{(-1)} \otimes Q_0^{(0)} \otimes Q_1^{(1)} ... \otimes Q_n^{(n)}$$ by reflecting around the point ${1 \over 2}$ of the lattice $\IZ$ to 
$$\tilde{Q}= Q_n^{(-n+1)}... \otimes Q_1^{(0)} \otimes Q_0^{(1)} \otimes Q_{-1}^{(2)} \otimes ... Q_{-l+1}^{(l)} \otimes Q_{-l}^{(l+1)}$$
for all $n,l \ge 1$ and $Q_{-l},..Q_{-1},Q_0,Q_1,..,Q_n \in 
M_d(\IC)$. 

\vsp 
For a state $\omega$ of $\IM$, we set a state $\tilde{\omega}$ of $\IM$ by 
\be 
\tilde{\omega}(Q)= \omega(\tilde{Q})
\ee
for all $Q \in \IM$. Thus $\omega \raro \tilde{\omega}$ is an affine one to one onto 
map on the convex set of states of $\IM$. The state $\tilde{\omega}$ is translation-invariant if and only if $\omega$ is translation-invariant state. We say a state $\omega$ is {\it lattice reflection-symmetric} or in short {\it lattice symmetric } if $\omega=\tilde{\omega}$.   

\vsp 
The group of unitary matrices $u \in U_d(\IC)$ acts naturally on $\IM$ as a group of automorphisms of $\IM$ defined by 
\be 
\beta_{u}(Q)=(..\otimes u \otimes u \otimes ...)Q(...\otimes u^* \otimes u^* \otimes u^*...)
\ee
We also set automorphism $\tilde{\beta}_u$ on $\IM$ defined by 
\be 
\tilde{\beta}_u(Q)=\beta_u(\tilde{Q})
\ee
for all $Q \in \IM$. So for $u,w \in U_d(\IC)$, we have 
$$\tilde{\beta}_u \tilde{\beta}_w=\beta_{uw}$$
In particular, $\tilde{\alpha}_{w}^2(Q)=Q$ for all $Q \in \IM$ if and only if $w^2=I_d$. We say a state $\omega$ of $\IM$ is {\it lattice symmetric with a twist } $w \in U_d(\IC)$ if 
\be 
w^2=I_d,\;\;\omega(\tilde{\beta}_{w}(Q))=\omega(Q)
\ee

\vsp 
We fix an orthonormal basis $e=(e_i)$ of $\!C^d$ and $Q^t \in \!M_d(\IC)$ be the transpose of $Q \in \!M_d(\IC)$ with respect to an orthonormal basis $(e_i)$ for $\IC^d$ (not complex conjugate). Let $Q \raro Q^t$ be the linear anti-automorphism map on $\IM$ that takes an element 
$$Q= Q^{(l)}_0 \otimes Q^{(l+1)}_1 \otimes ....\otimes Q^{(l+m)}_m$$
to its transpose with respect to the basis $e=(e_i)$ defined by
$$Q^t={Q^t_0}^{(l)} \otimes {Q^t_1}^{(l+1)} \otimes ..\otimes {Q^t_m}^{(l+m)},$$
where $Q_0,Q_1,...,Q_m$ are arbitrary elements in $\!M_d(\IC)$. We also note that $Q^t$ depends on the basis $e$ that we choose and avoided use of a suffix $e$. We assume that it won't confuse an attentive reader since we have fixed an orthonormal basis $(e_i)$ for our consideration through out this paper. For more general $Q \in \IM_{loc}$, we define $Q^t$ by extending linearly and take the unique bounded linear extension for any $Q \in \IM$. For a state $\omega$ of $\IM$, we define a state $\bar{\omega}$ on $\IM$ by the following prescription
\be
\bar{\omega}(Q) = \omega(Q^t)
\ee
Thus the state $\bar{\omega}$ is translation-invariant if and only if $\omega$ is translation-invariant. We say $\omega$ is {\it real }, if $\bar{\omega}=\omega$. The formal Hamiltonian $H$ is called {\it reflection symmetric with twist $w$ } if $\beta_{w}(\tilde{H})=H$ and {\it real} if $H^t=H$. 

\vsp 
We also set a conjugate linear map $Q \raro \overline{Q}$ on $\IM$ with respect to the basis $(e_i)$ for $\IC^d$ defined by extending the identity action on elements  
$$..I_d \otimes |e_{i_0}\rangle \langle e_{j_0}|^{(k)} \otimes |e_{i_1}\rangle \langle e_{j_1}|^{(k+1)} \otimes 
|e_{i_n}\rangle \langle e_{j_n}|^{(k+n)} \otimes I_d ..,\;1 \le i_k,j_k \le d,\;\;k \in \IZ,\;n \ge 0$$ 
anti-linearly. Thus by our definition we have 
$$Q^*=\overline{Q^t}$$ 
and 
$$(\overline{Q})^*=\overline{Q^*}$$

\vsp 
We set the following anti-linear reflection map $\clj_{w}:\IM \raro \IM$ with twist $w \in U_d(\IC)$, defined by 
\be 
\clj_{w}(Q) = \overline{\beta_{w}(\tilde{Q})}
\ee 
for all $Q \in \IM$. 

\vsp 
Following a well known notion [12], a state $\omega$ on $\IM$ is called {\it reflection positive with a twist $r_0 \in U_d(\IC),\;r_0^2=I_d$}, if
\be 
\omega(\clj_{r_0}(Q) Q) \ge 0
\ee 
for all $Q \in \IM_R$. Thus the notion of reflection positivity also depends explicitly on the underlining fixed orthonormal basis $e=(e_i)$ of $\IC^d$. One standard observation that we note at this point that a reflection positive with twist $r_0$ is a real state with twist $(r_0)$ after relection. Since the sesqui-linear map $(Q_1,Q_2) \raro \langle Q_1, Q_2 \rangle = \omega(\clj_{r_0}(Q_1)Q_2)$ admits polarization identity, it is skew symmetric i.e. $\langle Q_1, Q_2 \rangle = \overline{ \langle Q_2, Q_1 \rangle}$ and thus we verify the following 
identities: 
$$\omega(Q) = \overline{\omega(\clj_{r_0}(Q))}$$
$$ = \omega((\clj_{r_0}(Q))^*)$$
$$=\omega(\beta_{r_0}(\tilde{Q}^t)$$ 
$$=\bar{\omega}(\beta_{r_0}(\tilde{Q}))$$
since $Q^*=\bar{Q^t}$ i.e. $\bar{\omega}= \tilde{\omega} \beta_{r_0}$. In other words, a reflection positive translation invariant state with twist $\beta_{r_{\zeta}}$ is
always satisfy $\omega(Q)=\omega(\clj_{r_0}(Q^*))$, alternatively $\bar{\omega}=\tilde{\omega} \beta_{r_0}$. In particular, such a reflection positive state $\omega$ with twist $r_0$ is real if $\omega$ is also lattice reflection symmetric with twist $\beta_{r_0}$ invariant i.e. $\omega = \tilde{\omega} \circ \beta_{r_0}$. We will get back to this important point in section 3 in detals.

\vsp 
Let $G$ be a compact group and $g \raro u(g)$ be a $d-$dimensional unitary representation of $G$. By $\gamma_g$ we denote the product action of $G$ on the infinite tensor product $\IM$ induced by $u(g)$,
\be 
\gamma_g(Q)=(..\otimes u(g) \otimes u(g)\otimes u(g)...)Q(...\otimes u(g)^*\otimes u(g)^*\otimes u(g)^*...)
\ee
for any $Q \in \IM$, i.e. $\gamma_g=\beta_{u(g)}$. We say $\omega$ is $G$-invariant, if 
\be 
\omega(\gamma_g(Q))=\omega(Q)
\ee 
for all $Q \in \IM_{loc}$. If $G=U_d(\IC)$ and $u:U_d(\IC) \raro U_d(\IC)$ is the natural representation $u(g)=g$, then we will identify the notation $\beta_g$ with $\gamma_g$ for 
simplicity. Formal Hamiltonian $H$ given in (3) is called $G$-gauge invariant if $\gamma_g(H)=H$
for all $g \in G$.  

\vsp 
We recall now \cite{[18],[27]} if $H$ in (3) has the following form 
\be 
-H= B + \clj_{r_0}(B) + \sum_i C_i \clj_{r_0}(C_i)
\ee 
for some $B, C_i \in \IM_R$ then the unique KMS state at inverse positive temperature $\beta$ is reflection positive with the twist $r_0$. We refer to \cite{[12]} for details, which we will cite frequently while dealing with examples satisfying (12). Since the weak$^*$-limit of a sequence of reflection positive states with the twist $r_0$ is also a reflection positive state with the twist $r_0$, weak$^*$-limit points of the unique $\beta-$KMS state of $H$ as $\beta \raro \infty$, are also reflection positive with the twist $r_0$. Thus any weak$^*$ low temperature limit point ground state of $H$ is reflection positive with a twist $r_0$ if $H$ is given by (14). In particular, the unique $\beta$-KMS state of 
anti-ferromagnetic $H^{XXX}$ model is real and reflection symmetric and reflection positive with twist $r_0$ since $H^{XXX}$ admits the form (12) \cite{[12]} with $r_0=\sigma_y$. Furthermore $H^{XXX}$ admits $SU_2(\IC)$ gauge symmetry with irreducible representation $g \raro u(g)$. 
 
\vsp 
A pure mathematical question that arise here: Do these additional symmetries of $\omega$ help to understand what type of factor $\pi_{\omega}(\IM_L)''$ is? In the present exposition, as an application of our main mathematical results of \cite{[24]}, we will prove the following theorem in the forth section. 

\vsp 
\begin{thm} 
Let $\omega$ be a translation invariant, real, lattice symmetric and reflection positive with twist $r_0 \in U_d(\IC)$ state of $\IM=\otimes_{k \in \IZ}\!M^{(k)}_d(\IC)$ and the following two statements be true for odd values of $d$:

\NI (a) $\omega$ is pure;

\NI (b) $\omega$ is $SU_2(\IC)$-invariant, where $g \raro u(g) \in U_d(\IC)$ in (13) is an irreducible representation of $SU_2(\IC)$ satisfying 
\be 
r_0^2=I_d,\;\; r_0u(g)r_0^*=\bar{u(g)}
\ee
for all $g \in SU_2(\IC)$, where the matrix conjugation with respect to an orthonormal basis $e=(e_i)$ of $\IC^d$. 

\vsp 
Then $\pi_{\omega}(\IM_L)''$ is a type-I factor and two-point spatial correlation function of $\omega$ decays exponentially. 

\end{thm}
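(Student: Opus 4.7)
The plan is to reduce the theorem to the structural results of [23] and then convert the resulting finite-dimensional picture into exponential decay of spatial correlations. Reflection positivity with twist $r_0$ furnishes a positive semidefinite sesquilinear form $(Q_1,Q_2) \mapsto \omega(\clj_{r_0}(Q_1)Q_2)$ on $\IM_R$, from which, using reality $\bar\omega=\omega$ and lattice symmetry $\tilde\omega=\omega$, one manufactures a selfadjoint contraction on $\clh_\omega$ implementing $\clj_{r_0}$. Since $\omega$ is pure and translation invariant, the result cited as [20] restricts $\pi_\omega(\IM_R)''$ to be either type-I or type-III, and lattice symmetry transports the same dichotomy to $\pi_\omega(\IM_L)''$. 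The task therefore is to eliminate the type-III alternative and to control the multiplicity to be finite.

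The decisive representation-theoretic observation is that for odd $d$ the irreducible $SU_2(\IC)$ representation $g \mapsto u(g)$ on $\IC^d$ is of orthogonal (real) type: the intertwiner $r_0$ between $u$ and $\bar u$ may be chosen symmetric, and it then automatically satisfies $r_0^2 = I_d$, matching exactly the twist condition in (14). Consequently the $SU_2$ action $\gamma_g$, the transpose $Q \mapsto Q^t$, the lattice reflection $Q \mapsto \tilde Q$, the conjugation $Q \mapsto \bar Q$, and the reflection $\clj_{r_0}$ all fit into a single coherent family of symmetries of $\omega$. This coherence is exactly what lets them be transported simultaneously through the Popescu-type dilation of $\omega$ restricted to $\IM_R$ constructed in [23].

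The heart of the proof is then to invoke the main structural theorem of [23]: under the combined hypotheses — purity, translation invariance, $SU_2$-invariance through an orthogonal irreducible representation, reality, lattice symmetry and reflection positivity with twist $r_0$ — the auxiliary Hilbert space appearing in the Popescu dilation must be finite-dimensional, whence $\pi_\omega(\IM_L)'' \cong M_n(\IC)$ for some finite $n$. I expect the main technical difficulty to lie precisely here: one must show that $SU_2$-invariance together with the real-type twisted reflection positivity decomposes the dilation Hilbert space into only finitely many $SU_2$ isotypic components, and this finite-dimensionality argument is what simultaneously rules out both the type-I${}_\infty$ and type-III possibilities left open by the Powers dichotomy.

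Once the finite type-I property is in hand, $\omega$ is finitely correlated in the Fannes--Nachtergaele--Werner sense and its two-point correlators are computed through a unital completely positive transfer operator $\IE$ acting on the finite-dimensional matrix algebra $M_n(\IC)$. Purity of $\omega$ forces $\IE$ to be irreducible with a non-degenerate Perron eigenvalue $1$, so the remainder of its spectrum lies strictly inside the open unit disc, with second largest modulus $|\lambda_2|<1$. The standard estimate $|\omega(Q_1\theta^n(Q_2)) - \omega(Q_1)\omega(Q_2)| \le C_{Q_1,Q_2}\,|\lambda_2|^{|n|}$ then verifies Definition 1.1 with exponential rate $\delta = -\log|\lambda_2|>0$, completing the proof.
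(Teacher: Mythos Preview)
Your outline correctly identifies the overall architecture --- pass to the Popescu/Cuntz dilation of $\omega_R$, show the corner algebra $\clm$ is a finite type-I factor, and then read off exponential decay from the spectral gap of the transfer map on a finite-dimensional algebra. The last step matches the paper's Theorem~4.3 closely (the paper even gets the transfer operator $T$ to be self-adjoint, since $v_i^*=\beta_{r_\zeta}(v_i)$, but your Perron--Frobenius version would also work).

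The genuine gap is in the step you flag as the heart of the proof. You propose to invoke ``the main structural theorem of [23]'' for finite-dimensionality, but no such ready-made result exists there: [23] supplies only the dilation machinery and the symmetry-implementation results recorded here as Propositions~2.4--2.6 and 3.1--3.3. The finite type-I conclusion is new to \emph{this} paper, and your suggested mechanism --- that $SU_2$-invariance plus reflection positivity forces only finitely many isotypic components in $\clk$ --- is not how it goes and does not obviously work (nothing a priori bounds the multiplicity of a fixed irreducible in $\clk$).

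The paper's actual mechanism is a modular-theory argument that you are missing entirely. Reflection positivity with twist $r_0$, processed through Proposition~3.3, yields the identity $v_i\,\clj\beta_{\bar r_0}(\tilde v_j)\clj = \clj\beta_{\bar r_0}(\tilde v_i)\clj\,v_j$ in $\clm_0$, which after unwinding the modular relations gives
\[
\langle \zeta_\psi, v_i\,\Delta^{-s}\,\beta_{r_0}(v_j^*)\,\zeta_\psi\rangle
= \langle \zeta_\psi, v_j\,\Delta^{s}\,\beta_{r_0}(v_i^*)\,\zeta_\psi\rangle
\]
for all real $s$. Separately, $SU_2$-invariance and the fact that $u\otimes\bar u$ has a \emph{unique} one-dimensional invariant subspace (Clebsch--Gordan) force $\langle\zeta_\psi,v_i\Delta^s v_j^*\zeta_\psi\rangle=\delta^i_j\,c_s$. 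Combining the two and using $r_0=r_0^*$ with nonzero rows gives $\|\Delta^{s}v_i^*\zeta_\psi\|=\|\Delta^{-s}v_i^*\zeta_\psi\|$ for all $s$, whence $\Delta^{1/2}v_i^*\zeta_\psi=\Delta^{-1/2}v_i^*\zeta_\psi$, so $\Delta$ commutes with each $v_i$ and hence with $\clm$; since $\clj\Delta\clj=\Delta^{-1}$ and $\clm$ is a factor, $\Delta=I$. This makes $\phi$ tracial, so $\clm$ is finite; type-II$_1$ is then excluded by the purity criterion of [20], leaving $\clm$ finite type-I. The Clebsch--Gordan input is used pointwise (uniqueness of the invariant vector), not as a global isotypic finiteness statement.
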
  

As an application of Theorem 1.2, we will prove the following theorems in section 5.

\vsp 
\begin{thm} 
Let $H$ be a translation invariant Hamiltonian of the form $H=\sum_{ k \in \IZ} \theta_k(h_0)$ with $h_0=h_0^* \in \IM_{loc}$. 
and $H$ be also $SU_2(\IC)$ invariant with an irreducible representation $g \raro u(g)$ of $SU_2(\IC)$ and $r_0$ be the element in $U_d(\IC)$ satisfying (15). Let $H$ be also real (with respect to the basis $e=(e_i)$ ), lattice reflection symmetric and unique $\beta$-KMS at inverse positive temperature be lattice symmetric and reflection positive with the twist $r_0 \in U_d(\IC)$. 

\vsp 
Then any low temperature limiting ground state of $H$ admits ergodic decomposition, in the convex set of real, lattice reflection symmetric and translation invariant states, satisfying the following: 

\vsp 
\NI (a) If $d$ is an odd integer, then all extreme points in the ergodic decomposition are pure and $SU_2(\IC)$ invariant ground states of $H$; 

\vsp 
\NI (b) If $d$ is an even integer, then none of its extreme points in the ergodic decomposition are factor states of $\IM$ though $SU_2(\IC)$ invariant ground states of $H$.   

\end{thm} 

\vsp 
At this point we recall well known results valid for a class of $SU_2(\IC)$ invariant Hamiltonians investigated in \cite{[Babu],[Tak]} that are not of the form 
(14) and thus not evident that their unique finite temperature states admit reflection positive property. As an example, we can verify our claim easily that the Hamiltionian investigated in \cite{[Tak]} is not reflection positive with twist.       

\vsp 
Besides, in the general framework \cite{[27]}, it is also well known that mass gap 
of such a Hamiltionian in its ground state implies that its spatial correlation function decays exponentialy, though the converse statement in the general framework is not true. 
For counter examples, we refer to example 2, page 596 in \cite{[MAS]}. 

\vsp 
Neverthelss, the converse statement is likely to be true, for Hamiltontian of the form (14)  
with the additional discrete and continuous symmetry. We include a proof for the following theorem in support (but not assured ) of Haldane's conjecture in section 6.

\vsp 
\begin{thm}
Low temperature limiting ground state of anti-ferromagnet Heisenberg $H^{XXX}$ model is unique and pure for odd values of $d=3$. Moreover, the state is finitely correlated and 
its spatial two-point correlation function decays exponentially. 
\end{thm} 

\vsp  
Thus the important question that remains to be answered whether the limiting ground state $\omega_{1 \over 2}$ for integer spin $H^{XXX}$ model is having a mass gap in its spectrum from its ground state. Also note that Theorem 1.4 does not rule out possible existence of 
N\'{e}el type of ground states. 

\vsp 
The paper is organized as follows: In section 2, we will recall basic mathematical set up required from our earlier paper \cite{[24]} and explain basic ideas involved in the proof of Theorem 1.2. In section 3, we study convex set of states with various symmetries associated with positive temperature states and ground states of Hamiltionian of physical interest. Some of these results are having ready generalisation for Hamiltonians in higher lattice dimension with $SU_2(\IC)$ or more generally $SU_n(\IC)$ symmetries. In section 4, we prove Theorem 1.2. In section 5, we include a proof for Theorem 1.3 using main results of section 3 and section 4. In the last section, we will illustrate our results with models of physical interest such as $H_{GM}$, $H^{XXX}$ and $H^{AKLT}$ anti-ferromagnetic models. In particular, we will give proof of Theorem 1.4. One can use similar computation for a possible proof extending Theorem 1.4 for any odd values of $d$.

\section{Amalgamated representation of $\clo_d$ and $\tilde{\clo}_d$:}

\vsp 
A state $\psi$ on a $C^*$-algebra $\cla$ is called {\it factor}, if the center of the von-Neumann algebra $\pi_{\psi}(\cla)''$ is trivial, where $(\clh_{\psi},\pi_{\psi},\zeta_{\psi})$ is the Gelfand-Naimark-Segal (GNS) space associated with $\psi$ on 
$\cla$ \cite{[7]} and $\pi_{\psi}(\cla)''$ is the double commutant of $\pi_{\psi}(\cla)$ and $\psi(x)=\langle \zeta_{\psi}, \pi_{\psi}(x) \zeta_{\psi} \rangle$. Here we fix our convention that Hilbert spaces that are considered here are always equipped with inner products $ \langle .,. \rangle $ which are linear in the second variable and conjugate linear in the first variable. A state $\psi$ on $\cla$ is called {\it pure}, if $\pi_{\psi}(\cla)''=\clb(\clh_{\psi})$, the algebra of all bounded operators on $\clh_{\psi}$. 

\vsp
We recall that the Cuntz algebra $\clo_d ( d \in \{2,3,.., \} )$ \cite{[9]} is the universal unital $C^*$-algebra generated by the elements $\{s_1,s_2,...,
s_d \}$ subjected to the following relations:
\be 
s_i^*s_j = \delta^i_j I,\;\;\sum_{1 \le i \le d } s_is^*_i=I
\ee

\vsp
Let $\Omega=\{1,2,3,...,d\}$ be a set of $d$ elements. $\cli$ be the set of finite sequences
$I=(i_1,i_2,...,i_m)$ of elements, where $i_k \in \Omega$ and $m \ge 1$ and we use notation 
$|I|$ for the cardinality of $I$. We also include null set denoted by $\emptyset$ in the collection $\cli$ and set $s_{\emptyset }=s^*_{\emptyset}=I$ identity of $\clo_d$ and $s_{I}=s_{i_1}......s_{i_m} \in \clo_d $ and $s^*_{I}=s^*_{i_m}...s^*_{i_1} \in \clo_d$. 

\vsp
The group $U_d(\IC)$ of $d \times d$ unitary matrices acts canonically on $\clo_d$ as follows:
$$\beta_u(s_i)=\sum_{1 \le j \le d} u^j_i s_j$$
for $u=((u^i_j) \in U_d(\IC)$. In particular, the gauge action is defined by
$$\beta_z(s_i)=zs_i,\;\;z \in \IT = S^1= \{z \in \IC: |z|=1 \}.$$
The fixed point sub-algebra of $\clo_d$ under the gauge action i.e., 
$\{x \in \clo_d: \beta_z(x)=x,\;z \in S^1 \}$ is the closure of 
the linear span of all Wick ordered monomials of the form
\be 
s_{i_1}...s_{i_k}s^*_{j_k}...s^*_{j_1}:\;I=(i_1,..,i_k),J=(j_1,j_2,..,j_k)
\ee
and is isomorphic to the uniformly hyper-finite $C^*$ sub-algebra
$$\IM_R =\otimes_{1 \le k < \infty}\!M^{(k)}_d(\IC)$$
of $\IM$, where the isomorphism carries the Wick ordered monomial (20) 
into the following matrix element 
\be 
|e^{i_1}\rangle \langle e_{j_1}|^{(1)} \otimes |e^{i_2}\rangle\langle e_{j_2}|^{(2)} \otimes....\otimes |e^{i_k}\rangle\langle e_{j_k}|^{(k)} \otimes 1 \otimes 1 ....
\ee
We use notation $\mbox{UHF}_d$ for the fixed point $C^*$ sub-algebra of $\clo_d$ under the gauge group action 
$(\beta_z:z \in S^1)$. The restriction of $\beta_u$ to $\mbox{UHF}_d$ is then carried into action
$$Ad(u)\otimes Ad(u) \otimes Ad(u) \otimes ....$$
on $\IM_R$.

\vsp
We also define the canonical endomorphism $\lambda$ on $\clo_d$ by
\be 
\lambda(x)=\sum_{1 \le i \le d}s_ixs^*_i
\ee
and the isomorphism carries $\lambda$ restricted to $\mbox{UHF}_d$ into the one-sided shift
$$y_1 \otimes y_2 \otimes ... \raro 1 \otimes y_1 \otimes y_2 ....$$
on $\IM_R$. We note for all $u \in U_d(\IC)$ that $\lambda \beta_u = \beta_u \lambda$ 
on $\clo_d$ and so in particular, also on $\mbox{UHF}_d$. 

\vsp
Let $\omega'$ be a $\lambda$-invariant state on the $\mbox{UHF}_d$ sub-algebra of $\clo_d$. Following \cite{[14]}, section 7] and $\omega$ be the inductive limit state $\omega$ of $\IM \equiv \tilde{\mbox{UHF}}_d \otimes \mbox{UHF}_d$. In other word $\omega'=\omega_R$ once we make the identification $\mbox{UHF}_d$ with $\IM_R$.
We consider the set 
$$K_{\omega}= \{ \psi: \psi \mbox{ is a state on } \clo_d \mbox{ such that } \psi \lambda =
\psi \mbox{ and } \psi_{|\mbox{UHF}_d} = \omega_R \}$$
By taking invariant mean on an extension of $\omega_R$ to $\clo_d$, we verify that $K_{\omega}$ is non empty and 
$K_{\omega}$ is clearly convex and compact in the weak topology. In case $\omega$ is an ergodic state ( extremal state ) then, $\omega_R$ is as well an extremal state in the set of $\lambda$-invariant states of $\IM$. Thus
$K_{\omega}$ is a face in the $\lambda$ invariant states. Now we recall Lemma 7.4 
of \cite{[8]} in the following proposition which quantifies what we can gain 
by considering a factor state on $\clo_d$ instead of its restriction to $\mbox{UHF}_d$.

\vsp 
Our next two propositions are adapted from results in section 6 and section 7 of \cite{[8]} as stated in the present form in Proposition 2.5 and Proposition 2.6 in \cite{[23]}. 

\vsp 
\begin{pro} 
Let $\omega$ be a translation invariant ergodic state of $\IM$ then $K_{\omega}$ is a face in the convex set of $\lambda$-invariant states of $\clo_d$. Moreover the following holds:

\vsp 
\NI (a) An element $\psi \in K_{\omega}$ is ergodic if and only if $\psi$ is a factor state. Furthermore, any other extremal point in $K_{\omega}$ is of the form $\psi \beta_z$ for some $z \in S^1$;

\vsp 
\NI (b) The close subgroup $H=\{z \in S^1: \psi \beta_z =\psi \}$ is independent of the extremal point 
$\psi \in K_{\omega}$ of our choice;   

\vsp 
\begin{proof} 
For the proof for (a) and (b), we refer to Lemma 7.4 in \cite{[8]}. 
\end{proof}

\end{pro} 

\vsp 
\begin{pro} 
Let $\psi$ be a $\lambda$ invariant ergodic state on $\clo_d$ and $(\clh_{\psi},\pi_{\psi},\zeta_{\psi})$ be its GNS representation. Then the following holds:

\vsp
\NI (a) The closed subgroup $H=\{z \in S^1: \psi \beta_z =\psi \}$ is equal to 

$$\{z \in S^1: \beta_z \mbox{extends to an automorphism of } \pi_{\psi}(\clo_d)'' \} $$ 

\vsp 
\NI (b) Let $\clo_d^{H}$ be the fixed point sub-algebra in $\clo_d$ under the gauge group $\{ \beta_z: z \in H \}$. Then  
$\pi_{\psi}(\clo_d^{H})'' = \pi_{\psi}(\mbox{UHF}_d)''$. 

\vsp 
\NI (c) Let $\omega'$ be a $\lambda$-invariant state of $\mbox{UHF}_d$ algebra and  
$\pi_{\omega'}(\mbox{UHF}_d)''$ is a type-I factor, then there exists a $\lambda$-invariant factor state $\psi$ on $\clo_d$ extending $\omega'$ such that 
$$\pi_{\psi}(\mbox{UHF}_d)'' = \pi_{\psi}(\clo_d)''$$   
\end{pro}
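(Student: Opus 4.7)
The three parts build on each other, so I would address them in order.

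For part (a), I would argue both inclusions using GNS uniqueness together with Proposition 2.5. The forward inclusion $H \seq \{z: \beta_z \mbox{ extends}\}$ is formal: if $\psi \beta_z = \psi$, GNS uniqueness supplies a unitary $U_z$ on $\clh_\psi$ with $U_z \zeta_\psi = \zeta_\psi$ and $U_z \pi_\psi(x) U_z^* = \pi_\psi(\beta_z(x))$, so $\mbox{Ad}(U_z)$ spatially extends $\beta_z$ to $\pi_\psi(\clo_d)''$. For the reverse, suppose $\beta_z$ extends normally. Then $\pi_\psi$ and $\pi_\psi \circ \beta_z$ are quasi-equivalent, so $\psi$ and $\psi \beta_z$ are quasi-equivalent factor states. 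Since $\beta_z$ fixes $\mbox{UHF}_d$ pointwise and commutes with $\lambda$, $\psi \beta_z$ again belongs to $K_\omega$ and is extremal by Proposition 2.5. Two distinct extremal factor points of $K_\omega$ are mutually disjoint, so quasi-equivalence forces $\psi \beta_z = \psi$, i.e.\ $z \in H$.

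For part (b), the containment $\pi_\psi(\mbox{UHF}_d)'' \seq \pi_\psi(\clo_d^H)''$ is immediate from $\mbox{UHF}_d \seq \clo_d^H$. For the reverse, by (a), $\{\tilde\beta_z: z \in H\}$ is a compact group of normal automorphisms of $\pi_\psi(\clo_d)''$ whose Haar average $E_H(X) = \int_H \tilde\beta_z(X)\,dz$ is a normal conditional expectation onto the fixed-point algebra, and a standard continuity argument shows $\pi_\psi(\clo_d)''^H = \pi_\psi(\clo_d^H)''$. It then remains to see $\pi_\psi(\clo_d)''^H = \pi_\psi(\mbox{UHF}_d)''$. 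The key observation is that the Fourier/gauge grading of $\clo_d$ combined with $\psi \beta_z = \psi$ for $z \in H$ forces $\psi$ to vanish on the homogeneous components $\clo_d^{(n)}$ whose grading $n$ is not annihilated by $H$; one then uses vacuum-cyclicity of $\zeta_\psi$ together with factoriality of $\psi$ to see that the remaining non-trivial components in $\clo_d^H$ act on $\zeta_\psi$ into the closure of $\pi_\psi(\mbox{UHF}_d)\zeta_\psi$, which upgrades to the operator-algebraic identity $\pi_\psi(\clo_d^H)'' = \pi_\psi(\mbox{UHF}_d)''$.

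For part (c), the strategy is to use the type-I hypothesis to construct the Cuntz generators directly. Writing $\pi_{\omega'}(\mbox{UHF}_d)'' = \clb(\clk)$ for some Hilbert space $\clk$, the $\lambda$-invariance of $\omega'$ lets me extend $\lambda$ to a normal unital $*$-endomorphism $\tilde\lambda$ of $\clb(\clk)$ of Powers index $d$. A standard argument (Cuntz, Arveson) then produces isometries $S_1, \ldots, S_d \in \clb(\clk)$ satisfying the Cuntz relations $S_i^* S_j = \delta_{ij} I$ and $\sum_i S_i S_i^* = I$, with $\tilde\lambda(X) = \sum_i S_i X S_i^*$. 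Setting $\pi_\psi(s_i) = S_i$ and extending to $\clo_d$ gives a representation whose restriction to $\mbox{UHF}_d$ agrees with $\pi_{\omega'}$; the vector state $\psi(a) = \langle \zeta, \pi_\psi(a)\zeta\rangle$ with $\zeta$ the GNS vector of $\omega'$ is then a $\lambda$-invariant extension of $\omega'$ with $\pi_\psi(\clo_d)'' = \clb(\clk) = \pi_\psi(\mbox{UHF}_d)''$, automatically a factor state.

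The main obstacle I expect is in part (b), specifically in collapsing $\pi_\psi(\clo_d^H)''$ down to $\pi_\psi(\mbox{UHF}_d)''$: the underlying $C^*$-algebras $\clo_d^H$ and $\mbox{UHF}_d$ differ strictly whenever $H \neq S^1$, and the extra Fourier components of $\clo_d^H$ are only absorbed \emph{after} representation in $\pi_\psi$. Making this rigorous requires the spatial $H$-action from (a), the cyclicity of $\zeta_\psi$, and the extremality of $\psi \in K_\omega$ to be used simultaneously.
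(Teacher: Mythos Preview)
The paper does not give a proof here; it simply refers to Proposition 2.2 of [23], and the preamble to Propositions 2.1--2.2 notes that both are adapted from sections 6--7 of [8] (Bratteli--Jorgensen--Kishimoto--Werner). So there is no in-paper argument to compare against, and your outline is in the spirit of those cited sources and is broadly correct.

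A few small points. In part (a) the disjointness of distinct extremal points of $K_\omega$ that you invoke is not stated explicitly in Proposition 2.1 of the present paper (which is presumably what you mean by ``Proposition 2.5''), though it is in [8]; you should also note that Proposition 2.1 as stated assumes $\omega$ is a factor state of $\IM$, which is not among the hypotheses of Proposition 2.2 --- the disjointness argument in [8] works directly at the level of $\lambda$-invariant factor states of $\clo_d$ without that assumption. In part (c), the clause ``the $\lambda$-invariance of $\omega'$ lets me extend $\lambda$ to a normal unital $*$-endomorphism of $\clb(\clk)$'' is not automatic from invariance alone; the cleaner route is to use the type-I hypothesis to identify $\pi_{\omega'}(\mbox{UHF}_d)''$ with $\clb(\clk)$, realise the one-step inclusion $\mbox{UHF}_d \hookrightarrow M_d(\IC) \otimes \mbox{UHF}_d$ spatially, and read off the Cuntz isometries from that --- after which the identity $\tilde\lambda(X)=\sum_i S_iXS_i^*$ follows rather than precedes their construction. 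Your own diagnosis of part (b) is accurate: collapsing $\pi_\psi(\clo_d^H)''$ down to $\pi_\psi(\mbox{UHF}_d)''$ is where the factoriality of $\psi$ and the support-projection machinery (Proposition 2.3 here) are genuinely used in the cited references, and a bare Fourier-component argument will not close the gap without them.
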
 

\vsp 
\begin{proof}
For a proof, we refer to Proposition 2.2 in \cite{[24]}. 
\end{proof} 

\vsp 
\begin{pro} 
Let $(\clh_{\psi},\pi_{\psi},\zeta_{\psi})$ be the GNS representation of a $\lambda$ invariant state $\psi$ on $\clo_d$ and $P$ be the support projection of the normal state $\psi_{\zeta_{\psi}}(X)=\langle\zeta_{\psi},X\zeta_{\psi}\rangle$ in the 
von-Neumann algebra $\pi_{\psi}(\clo_d)''$. Then the following holds:

\vsp
\NI (a) $P$ is a sub-harmonic projection for the endomorphism $\Lambda(X)=\sum_k S_kXS^*_k$ on $\pi_{\psi}(\clo_d)''$
i.e. $\Lambda(P) \ge P$ satisfying the following:

\NI (i) $PS^*_kP=S^*_kP,\;\;1 \le k \le d$;

\NI (ii) The set $\{ S_If: Pf=f,\;f \in \clh_{\psi}, |I| < \infty \}$ is total in $\clh_{\psi}$; 

\NI (iii) $\Lambda_n(P) \uparrow I$ as $n \uparrow \infty$;

\NI (iv) $\sum_{1 \le k \le d} v_kv_k^*=I_{\clk};$ 

\NI where $S_k=\pi_{\psi}(s_k)$ and $v_k=PS_kP$ for $1 \le k \le d$ are contractive operators on Hilbert subspace $\clk$, the range of the projection $P$;

\vsp
\NI (b) For any $I=(i_1,i_2,...,i_k),J=(j_1,j_2,...,j_l)$ with $|I|,|J| < \infty$ we have $\psi(s_Is^*_J) =
\langle \zeta_{\psi},v_Iv^*_J\zeta_{\psi}\rangle$ and the vectors $\{ S_If: f \in \clk,\;|I| < \infty \}$ are total in $\clh_{\psi}$;

\vsp
\NI (c) The von-Neumann algebra $\clm=P\pi_{\psi}(\clo_d)''P$, acting on the Hilbert space
$\clk$ i.e. range of $P$, is generated by $\{v_k,v^*_k:1 \le k \le d \}''$ and the normal state
$\phi(x)=\langle\zeta_{\psi},x \zeta_{\psi}\rangle$ is faithful on the von-Neumann algebra $\clm$.

\vsp
\NI (d) The following statements are equivalent:

\NI (i) $\psi$ is a factor state of $\clo_d$;

\NI (ii) $\clm$ is a factor;
  
\end{pro}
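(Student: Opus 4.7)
The plan is to exploit the normal extension $\Lambda$ of $\lambda$ to $\pi_\psi(\clo_d)''$ together with the $\lambda$-invariance of $\psi$, which transfers to the normal invariance $\psi_{\zeta_\psi}\circ\Lambda=\psi_{\zeta_\psi}$ of the vector state; both facts rest on $\sum_kS_kS_k^*=I$, which makes $\Lambda$ a normal unital $*$-endomorphism. I would first note that the support projection $P\in\pi_\psi(\clo_d)''$ satisfies $P\zeta_\psi=\zeta_\psi$ and that, since $\Lambda(P)\le I$ is a positive contraction with $\psi_{\zeta_\psi}$-expectation $1$, also $\Lambda(P)\zeta_\psi=\zeta_\psi$. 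The spectral projection of $\Lambda(P)$ at the eigenvalue $1$ lies in $\pi_\psi(\clo_d)''$ and fixes $\zeta_\psi$, so by minimality of the support it dominates $P$; this establishes sub-harmonicity $P\le\Lambda(P)$. A separating-vector argument, using $\Lambda(P)-P\ge 0$, $(\Lambda(P)-P)\zeta_\psi=0$, and the fact that the range of $P$ equals the closure of $\pi_\psi(\clo_d)'\zeta_\psi$, sharpens this to the key identity $\Lambda(P)P=P$.

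From $\Lambda(P)P=P$ I read off (iv) by sandwiching between $P$'s: $\sum_kv_kv_k^*=P\Lambda(P)P=P=I_\clk$. For (i), a Pythagorean computation for $\xi$ in the range of $P$ gives
\[\sum_k\|(I-P)S_k^*\xi\|^2=\sum_k\|S_k^*\xi\|^2-\sum_k\|PS_k^*\xi\|^2=\|\xi\|^2-\langle\xi,\Lambda(P)\xi\rangle=0,\]
hence $PS_k^*P=S_k^*P$. Since $\Lambda$ is multiplicative, each $\Lambda^n(P)$ is a projection, and sub-harmonicity yields an increasing net $\Lambda^n(P)\uparrow Q$ for some projection $Q\in\pi_\psi(\clo_d)''$. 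The intertwining $\Lambda^n(P)S_I=S_I\Lambda^{n-|I|}(P)$ for $n\ge|I|$ passes in the SOT limit to $QS_I=S_IQ$, so $Q\in\pi_\psi(\clo_d)'$; combined with $Q\zeta_\psi=\zeta_\psi$ and the cyclicity of $\zeta_\psi$ for $\pi_\psi(\clo_d)''$, this forces $Q=I$, which is (iii). Assertion (ii) then follows since the range of $\Lambda^n(P)$ is the closed linear span of $\{S_If:|I|=n,\,Pf=f\}$ and these exhaust $\clh_\psi$.

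For (b), iterating (i) in the form $S_j^*P=PS_j^*P$ telescopes to $PS_J^*P=v_J^*$ and, by adjunction, $PS_IP=v_I$; then $\psi(s_Is_J^*)=\langle\zeta_\psi,PS_IS_J^*P\zeta_\psi\rangle=\langle\zeta_\psi,v_Iv_J^*\zeta_\psi\rangle$, and the totality statement of (b) is just (ii). For (c), the compression $a\mapsto PaP$ is weakly continuous and carries $\pi_\psi(\clo_d)$ onto a norm-dense $*$-subalgebra of $\clm$ generated by the $v_k$'s, giving $\clm=\{v_k,v_k^*\}''$; faithfulness of $\phi$ on $\clm$ is the general fact that a normal state is faithful on the corner cut out by its support projection.

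The main obstacle is the converse direction of (d). The forward implication ``$\psi$ factor $\Rightarrow\clm$ factor'' is standard, since the corner $PNP$ of a factor $N$ by a projection $P\in N$ is again a factor. For the converse, given $z\in Z(\pi_\psi(\clo_d)'')$, centrality of the compression in $\clm$ forces $PzP=\alpha P$ for some $\alpha\in\IC$. Since $z$ commutes with every $S_I$ and $S_I^*S_J=\delta_{IJ}I$ whenever $|I|=|J|$, one computes
\[\Lambda^n(P)(z-\alpha)\Lambda^n(P)=\sum_{|I|=n}S_IP(z-\alpha)PS_I^*=0\]
for every $n\ge 0$. Passing to the SOT limit via (iii) yields $z=\alpha I$, so $\pi_\psi(\clo_d)''$ has trivial centre and $\psi$ is a factor state.
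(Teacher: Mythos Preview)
Your argument is correct and complete. The paper itself does not prove this proposition in-line; it simply cites Proposition~2.1 of reference~[23], so there is no detailed proof here to compare against. Your route---deducing sub-harmonicity of $P$ from $\lambda$-invariance of $\psi$ and minimality of the support projection, reading off (i) and (iv) from $\Lambda(P)P=P$, proving $\Lambda^n(P)\uparrow I$ by showing the SOT limit commutes with all $S_I$ and is hence in the commutant, and handling (d) by compressing a central element and propagating $P(z-\alpha)P=0$ through $\Lambda^n$---is exactly the standard machinery used in the Bratteli--Jorgensen--Kishimoto--Werner and Mohari papers on Popescu systems, so your proof almost certainly coincides with the one in the cited reference.

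Two very minor remarks. First, once you have $P\le\Lambda(P)$ the identity $\Lambda(P)P=P$ is immediate for projections, so the separating-vector paragraph is redundant (though not wrong). Second, in (d) you implicitly use that a central $z$ commutes with $P\in\pi_\psi(\clo_d)''$, so that $PzP=zP$ is genuinely central in $\clm$; this is clear but worth one word since it is what makes the compression land in $Z(\clm)$.
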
 

\vsp 
\begin{proof}
For a proof we refer to Proposition 2.1 in \cite{[24]}.
\end{proof}

\vsp 
Let $\psi$ be a $\lambda$-invariant state of $\clo_d$ as in Proposition 2.2 and 
$H=\{z \in S^1: \psi = \psi \beta_z \}$ be the closed subgroup of $S^1$. Let $z \raro U_z$ be the unitary representation of $H$ in the GNS space $(\clh_{\psi},\pi,\zeta_{\psi})$ associated with the state $\psi$ of 
$\clo_d$, defined by 
\be 
U_z\pi_{\psi}(x)\zeta_{\psi}=\pi_{\psi}(\beta_z(x))\zeta_{\psi}
\ee
so that $\pi_{\psi}(\beta_z(x))=U_z\pi_{\psi}(x)U_z^*$ for $x \in \clo_d$. We use same notations $(\beta_z:z \in H)$ for its normal extensions as group of automorphisms on $\pi_{\psi}(\clo_d)''$. Furthermore, $\langle \zeta_{\psi},P\beta_z(I-P)P \zeta_{\psi} \rangle=0$ as $\psi=\psi \beta_z$ for $z \in H$. Since $P$ is the support projection of $\psi$ in $\pi_{\psi}(\clo_d)''$, we have $P\beta_z(I-P)P=0$ i.e. $\beta_z(P) \ge P$ for all $z \in H$. Since $H$ is a group, we conclude that $\beta_z(P)=P$. So $P \in \pi_{\psi}(\mbox{UHF}_d)''$ by Proposition 2.2 (b).

\vsp
Since $\phi$ is a faithful state of $\clm$, $\zeta_{\phi}$ once identified with 
$\zeta_{\psi} \in \clk$ is a cyclic and separating vector for $\clm$ and 
the closure of the closable operator $S_0:a\zeta_{\phi} \raro a^*\zeta_{\phi},\;a \in \clm, S$ possesses a polar decomposition $S=\clj \Delta^{1/2}$, where $\clj$ is an anti-unitary and $\Delta$ is a non-negative self-adjoint operator on $\clk$. M. Tomita \cite{[7]} theorem says that $\Delta^{it} \clm \Delta^{-it}=\clm,\;t \in \IR$ and $\clj \clm \clj=\clm'$, where $\clm'$ is the commutant of $\clm$. We define the modular automorphism group
$\sigma=(\sigma_t,\;t \in \IT )$ on $\clm$
by
$$\sigma_t(a)=\Delta^{it}a\Delta^{-it}$$ which satisfies the modular relation
$$\phi(a\sigma_{-{i \over 2}}(b))=\phi(\sigma_{{i \over 2}}(b)a)$$
for any two analytic elements $a,b$ for the group of automorphisms $(\sigma_t)$. A more useful modular relation used frequently in this paper is given by 
\be 
\phi(\sigma_{-{i \over 2}}(a^*)^* \sigma_{-{i \over 2}}(b^*))=\phi(b^*a)
\ee 
which shows that $\clj a\zeta_{\phi}= \sigma_{-{i \over 2}}(a^*)\zeta_{\phi}$ for an analytic element $a$ for the automorphism group $(\sigma_t)$. Anti unitary operator $\clj$ and the group of automorphism $\sigma=(\sigma_t,\;t \in \IR)$ are called {\it conjugate operator} and {\it modular automorphisms } associated with $\phi$ respectively. 

\vsp 
The state $\phi(a)= \langle \zeta_{\phi},x \zeta_{\phi} \rangle $ on $\clm$ being faithful and invariant of $\tau:\clm \raro \clm$, we find a unique unital completely positive map 
$\tilde{\tau}:\clm' \raro \clm'$ ([section 8 in \cite{[27]} ) satisfying the duality relation 
\be 
\langle b\zeta_{\phi},\tau(a)\zeta_{\phi} \rangle =  \langle \tilde{\tau}(b)\zeta_{\phi},a\zeta_{\phi} \rangle 
\ee
for all $a \in \clm$ and $b \in \clm'$. For a proof, we refer 
to section 8 in the monograph \cite{[27]} or section 2 in \cite{[22]}. 

\vsp 
Since $\tau(a)=\sum_{1 \le k \le d} v_kav_k^*,\;x \in \clm$ is an {\it inner map } i.e. each $v_k \in \clm$, we have an explicit formula for $\tilde{\tau}$ as follows: For 
each $1 \le k \le d$, we set contractive operator 
\be 
\tilde{v}_k = \overline{ \clj \sigma_{i \over 2}(v^*_k) \clj } \in \clm'
\ee 
That $\tilde{v}_k$ is indeed well defined as an element in $\clm'$ given in section 8 in \cite{[8]}. By the modular relation (21), we have  
\be  
\sum_k \tilde{v}_k \tilde{v}_k^*=I_{\clk}\;\;\mbox{and}\;\;
\tilde{\tau}(b)=\sum_k \tilde{v}_kb\tilde{v}^*_k,\; b \in \clm' 
\ee
Moreover, if $\tilde{I}=(i_n,..,i_2,i_1)$ for $I=(i_1,i_2,...,i_n)$, we have 
$$\tilde{v}^*_I\zeta_{\phi}$$
$$=\clj \sigma_{i \over 2}(v_{\tilde{I}})^*\clj\zeta_{\phi}$$
$$= \clj \Delta^{1 \over 2}v_{\tilde{I}}\zeta_{\phi}$$
$$=v^*_{\tilde{I}}\zeta_{\phi}$$
and    
\be
\phi(v_Iv^*_J)= \phi(\tilde{v}_{\tilde{I}}\tilde{v}^*_{\tilde{J}}),\; 
|I|,|J| < \infty 
\ee
We also set $\tilde{\clm}$ to be the von-Neumann algebra generated by $\{\tilde{v}_k: 1 \le k \le d \}$. Thus $\tilde{\clm} \subseteq \clm'$.

\vsp
Following \cite{[8]} and \cite{[22]}, we consider the amalgamated tensor product $\clh \otimes_{\clk} \tilde{\clh}$ of $\clh$ with 
$\tilde{\clh}$ over the joint subspace $\clk$. It is the completion of the quotient of the set 
$$\IC \bar{I} \otimes \IC I \otimes \clk,$$ 
where $\bar{I},I$ both consisting of all finite sequences with elements in $\{1,2, ..,d \}$, by the equivalence relation 
defined by a semi-inner product defined on the set by requiring
$$ \langle \bar{I} \otimes I \otimes f,\bar{I}\bar{J} \otimes IJ \otimes g \rangle = \langle f,\tilde{v}_{\bar{J}}v_Jg \rangle, $$
$$ \langle \bar{I}\bar{J} \otimes I \otimes f, \bar{I} \otimes IJ \otimes g \rangle  = \langle \tilde{v}_{\bar{J}}f,v_Jg \rangle $$
and all inner product that are not of these form are zero. We also define two 
commuting representations $(S_i)$ and $(\tilde{S}_i)$ of $\clo_d$ on
$\clh \otimes_{\clk} \tilde{\clh}$ by the following prescription:
$$S_I\lambda(\bar{J} \otimes J \otimes f)=\lambda(\bar{J} \otimes IJ \otimes f),$$
$$\tilde{S}_{\bar{I}}\lambda(\bar{J} \otimes J \otimes f)=\lambda(\bar{J}\bar{I} \otimes J \otimes f),$$
where $\lambda$ is the quotient map from the index set to the Hilbert space. Note that the subspace generated by
$\lambda(\emptyset \otimes I \otimes \clk)$ can be identified with $\clh$ and earlier $S_I$ can be identified
with the restriction of $S_I$, defined here. Same is valid for $\tilde{S}_{\bar{I}}$. The subspace $\clk$ is
identified here with $\lambda(\emptyset \otimes \emptyset \otimes \clk)$. 
Thus $\clk$ is a cyclic subspace for the representation $$\tilde{s}_j \otimes s_i \raro \tilde{S}_j S_i$$ 
of $\tilde{\clo}_d \otimes \clo_d$ in the amalgamated Hilbert space. Let $P$ be the projection on $\clk$. Then we have 
$$S_i^*P=PS_i^*P=v_i^*$$
$$\tilde{S}_i^*P=P\tilde{S}_i^*P=\tilde{v}^*_i$$
for all $1 \le i \le d$. 

\vsp 
We sum up required results in the following proposition.

\vsp 
\begin{pro} Let $\psi$ be an element in $K_{\omega}$ and $(\clk,v_k,\;1 \le k \le d)$ be the elements in the support projection of $\psi$ in $\pi_{\psi}(\clo_d)''$ described in Proposition 2.3 and $(\clk,\tilde{v}_k,\;1 \le k \le d)$ be the dual elements and $\pi$ be the amalgamated representation of $\tilde{\clo}_d \otimes \clo_d$. Then the following holds:

\vsp 
\NI (a) For any $1 \le i,j \le d$ and $|I|,|J|< \infty$ and $|\bar{I}|,|\bar{J}| < \infty$
$$ \langle \zeta_{\psi},\tilde{S}_{\bar{I}}\tilde{S}^*_{\bar{J}} S_iS_IS^*_JS^*_j \zeta_{\psi} \rangle = \langle \zeta_{\psi}, 
\tilde{S}_i \tilde{S}_{\bar{I}}\tilde{S}^*_{\bar{J}}\tilde{S}^*_jS_IS^*_J \zeta_{\psi} \rangle ;$$

\vsp 
\NI (b) The state $\psi: x \raro \langle \zeta_{\psi},x \zeta_{\psi} \rangle$ defined on  $\tilde{\mbox{UHF}}_d \otimes \mbox{UHF}_d$ is equal to $\omega$ on $\IM$, where we have 
identified   
$$\IM \equiv \IM_{(-\infty, 0]} \otimes \IM_{[1,\infty)} \equiv 
\tilde{\mbox{UHF}}_d \otimes \mbox{UHF}_d;$$ 
with respect to an orthonormal basis $e=(e_i)$ of $\IC^d$. 

\vsp 
If $\omega$ is an ergodic state of $\IM$ and $\psi$ is an extremal element in $K_{\omega}$ then 

\vsp 
\NI (c) $\pi(\clo^{H}_d))''=\pi(\mbox{UHF}_d)''$ and $\pi(\tilde{\clo}^{H}_d)''=\pi(\mbox{UHF}_d)''$;

\vsp 
\NI (d) The following statements are equivalent:

\vsp 
\NI (i) $\omega$ is a factor state of $\clo_d$;

\vsp 
\NI (ii) $\pi(\tilde{\clo}_d \otimes \clo_d)''= \clb(\tilde{\clh} \otimes_{\clk} \clh)$ 

\vsp 
\NI (iii) $\clm \vee \tilde{\clm} = \clb(\clk)$.

\end{pro}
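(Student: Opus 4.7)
The plan is to reduce all four claims to computations on the cyclic subspace $\clk$ using three ingredients: (i) the Popescu-dilation identities of Proposition~2.3, in particular $PS_k^*P = S_k^*P$ and its symmetric counterpart $P\tilde{S}_k^*P = \tilde{S}_k^*P$, so that whenever $S_k^*$ or $\tilde{S}_k^*$ acts on a vector already lying in $\clk$ it may be replaced by $v_k^*$ or $\tilde{v}_k^*$ respectively; (ii) the commutation of the two copies of $\clo_d$ inside the amalgamated representation; (iii) the modular duality (26), $\phi(v_I v_J^*) = \phi(\tilde{v}_{\tilde{I}} \tilde{v}_{\tilde{J}}^*)$, together with the inner modular expression $\tilde{v}_k = \overline{\clj \sigma_{i/2}(v_k^*) \clj}$ from (25).

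For (a), I would first move every $\tilde{S}$-letter past every $S$-letter using the commutation of the two copies, then reduce both sides to inner products of the form $\langle \zeta_\psi, A\,\zeta_\psi \rangle$ with $A$ a word in $v_k, v_k^*, \tilde{v}_k, \tilde{v}_k^*$. Because $v_k \in \clm$ and $\tilde{v}_k \in \clm'$ the two families commute, so the remaining task is to match the outer factors: the LHS carries $v_i \ldots v_j^*$ originating from the $\clo_d$-side, while the RHS carries $\tilde{v}_i \ldots \tilde{v}_j^*$ from the $\tilde{\clo}_d$-side. Their equality is precisely the content of the modular duality (26). The careful bookkeeping in this reduction, done via $S_k^* S_\ell = \delta_{k\ell}$ and $\sum_k v_k v_k^* = I_{\clk}$, is what I expect to be the main technical obstacle.

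For (b), the operators $\tilde{S}_{\bar{I}} \tilde{S}_{\bar{J}}^* S_I S_J^*$ are the images under the amalgamated representation of the gauge-invariant Wick monomials which, via the isomorphism (20)--(21), form a total set of matrix units in $\tilde{\mathrm{UHF}}_d \otimes \mathrm{UHF}_d \equiv \IM$. Since $\psi$ restricted to $\mathrm{UHF}_d$ equals $\omega_R$ by the construction of $K_\omega$ and $\omega$ is translation invariant, the vector state $\langle \zeta_\psi, \cdot\, \zeta_\psi \rangle$ and $\omega$ agree on this dense $*$-algebra of matrix units, hence on all of $\IM$.

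Finally, assume $\omega$ is a factor state and $\psi$ is extremal in $K_\omega$, so that $\clm$ is a factor by Proposition~2.3(d). For (c), any element of $(\clm \vee \tilde{\clm})'$ lies in $\clm' \cap \tilde{\clm}'$; using that $\zeta_\psi$ is cyclic and separating for $\clm$ together with the relation $\tilde{v}_k = \overline{\clj \sigma_{i/2}(v_k^*) \clj}$ and Tomita--Takesaki theory, such a commutant element must be scalar, giving $\clm \vee \tilde{\clm} = \clb(\clk)$. Combined with the cyclicity of $\clk$ under $\pi(\tilde{\clo}_d \otimes \clo_d)$ (built into the amalgamated construction), this yields $\pi(\tilde{\clo}_d \otimes \clo_d)'' = \clb(\tilde{\clh} \otimes_{\clk} \clh)$. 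For (d), I apply Proposition~2.2(b) to the extremal factor state $\psi$ on $\clo_d$ and, by the symmetric construction, to its counterpart on $\tilde{\clo}_d$, obtaining both fixed-point identities $\pi(\clo_d^H)'' = \pi(\mathrm{UHF}_d)'' = \pi(\tilde{\clo}_d^H)''$ at once.
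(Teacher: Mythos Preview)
The paper does not give its own proof of this proposition; it simply cites Proposition~3.1 in [23]. Your outline is in the spirit of that reference, and for parts (a), (b) and (d) it is essentially correct. For (a) the reduction to $\clk$ works exactly as you say: move all starred letters onto $\zeta_\psi$ using $S_k^*P=v_k^*P$, $\tilde S_k^*P=\tilde v_k^*P$, exploit the commutation $[\clm,\tilde\clm]=0$, and then transfer the outer pair $v_i,\,v_j^*$ to $\tilde v_i,\,\tilde v_j^*$ via the single--index identity $\tilde v_k^*\zeta_\phi=v_k^*\zeta_\phi$ (the displayed computation just above~(26)); equation~(26) itself is a consequence of this identity rather than the tool one applies. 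Part (b) then follows from (a) and the definition of $K_\omega$, and for (d) one only has to observe that the amalgamated $\pi$ restricted to $\clo_d$ on $\overline{\pi(\clo_d)\zeta_\psi}$ is unitarily equivalent to the GNS representation $\pi_\psi$, so that Proposition~2.2(b) transfers verbatim.

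For part (c) there is a genuine gap in your sketch. From $\clm$ being a factor with cyclic and separating vector $\zeta_\phi$ one gets $\clm\vee\clm'=\clb(\clk)$, but you only know $\tilde\clm\subseteq\clm'$, and nothing you have written rules out $\tilde\clm\subsetneq\clm'$. In that case $\clm'\cap\tilde\clm'$ need not be contained in $\clm'\cap\clm=\IC$, so ``Tomita--Takesaki theory'' alone does not force a commutant element to be scalar. The formula $\tilde v_k=\overline{\clj\sigma_{i/2}(v_k^*)\clj}$ shows each $\tilde v_k$ lies in $\clm'$, not that the $\tilde v_k$ generate all of $\clm'$. The argument in [23] (going back to [8] and [21]) supplies the missing step, essentially by showing that $\zeta_\phi$ is cyclic for $\tilde\clm$ as well and then using the interplay of the two Popescu systems $(v_k)$ and $(\tilde v_k)$ together with the factor property; you should either reproduce that argument or cite it explicitly, rather than assert the conclusion.
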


\vsp 
\begin{proof}
For a proof we refer to Proposition 3.1. in \cite{[24]}. 
\end{proof}

\vsp
Let $G$ be a compact group and $g \raro u(g)$ be a $d-$dimensional unitary representation of $G$. By $\gamma_g$, we denote
the product action of $G$ on the infinite tensor product $\IM$ induced by $u(g)$,
$$\gamma_g(Q)=(..\otimes u(g) \otimes u(g)\otimes u(g)...)Q(...\otimes u(g)^*\otimes u(g)^*\otimes u(g)^*...)$$
for any $Q \in \IM$. We recall now that the canonical action of the group $U_d(\IC)$ of $d \times d$ matrices on
$\clo_d$ is given by
$$\beta_{u(g)}(s_j)=\sum_{1 \le i \le d} s_i u(g)^i_j $$
and thus
$$\beta_{u(g)}(s^*_j) = \sum_{1 \le i \le d} \bar{u(g)^i_j} s^*_i $$

\vsp
Note that $u(g)|e_i><e_j|u(g)^*=|u(g)e_i><u(g)e_j| = \sum_{k,l} u(g)^l_i \bar{u(g)}^k_j|e_l><e_k|$, where $e_1,..,e_d$
are the standard basis for $\IC^d$. Identifying $|e_i><e_j|$ with $s_is^*_j$, we verify that on $\IM_R$ the gauge action
$\beta_{u(g)}$ of the Cuntz algebra $\clo_d$ and $\gamma_g$ coincide i.e. $\gamma_g(Q)=\beta_{u(g)}(Q)$
for all $Q \in \IM_R$.

\vsp
\begin{pro} 
Let $\omega$ be a translation invariant ergodic state on $\IM$. Suppose that
$\omega$ is $G-$invariant,
$$\omega(\gamma_g(Q))=\omega(Q) \; \mbox{for all } g \in G \mbox{ and any } Q \in \IM. $$
Let $\psi$ be an extremal point in $K_{\omega}$ and $(\clk,\clm,v_k,\;1 \le k \le d,\phi)$ be the elements associated with $(\clh, S_i=\pi(s_i),\zeta_{\psi})$, described as in Proposition 2.3. Then we have the following:

\vsp 
\NI (a) There exists a unitary representation $g \raro \hat{U}(g)$ in $\clb(\tilde{\clh} \otimes_{\clk} \clh)$ and a representation $g \raro \zeta(g) \in S^1$ 
so that
\be
\hat{U}(g)S_i\hat{U}(g)^*= \zeta(g) \beta_{u(g)}(S_i),\;1 \le i \le d
\ee
and
\be 
\hat{U}(g)\tilde{S}_i\hat{U}(g)^*= \zeta(g) \beta_{u(g)}(\tilde{S}_i),\;1 \le i \le d
\ee
for all $g \in G$. 

\vsp 
\NI (b) There exists a unitary representation $g \raro \hat{u}(g)$ in $\clb(\clk)$ so that $\hat{u}(g)\clm \hat{u}(g)^*=\clm$ for
all $g \in G$ and $\phi(\hat{u}(g)x\hat{u}(g)^*)=\phi(x)$ for all $x \in \clm$. Furthermore, the operator
$V^*=(v^*_1,..,v^*_d)^{tr}: \clk \raro \IC^d \otimes \clk $ is an isometry which intertwines the representation
of $G$,
\be
( \zeta(g) \hat{u}(g) \otimes u(g) )V^*=V^*\hat{u}(g)
\ee
for all $g \in G$, where $g \raro \zeta(g)$ is the representation of $G$ in $U(1)$.

\vsp 
\NI (c) $\clj \hat{u}(g) \clj=\hat{u}(g)$ and $\Delta^{it}\hat{u}(g)\Delta^{-it}=\hat{u}(g)$ for all $g \in G$ and $t \in \IR$.

\vsp 
\NI (d) $u_z\hat{u}(g) = \hat{u}(g)u_z$ for all $g \in G$ and $z \in H$. 

\vsp 
\NI (e) If $G$ is simply connected then $\zeta(g)=1$ for all $g \in G$ and $\psi = \psi \beta_{u(g)}$ and $\psi_0=\psi_0 \beta_{u(g)}$ for all $g \in G$. 

\end{pro}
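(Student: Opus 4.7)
The plan is to extract the phase $\zeta(g)$ from the extremal-point structure on $K_{\omega}$ via Proposition 2.1, promote it to an implementing unitary on the GNS space $\clh_{\psi}$, extend it to the amalgamated space $\tilde{\clh}\otimes_{\clk}\clh$, and then restrict to the support subspace $\clk$. Parts (c) and (d) will then follow from standard Tomita--Takesaki and centrality arguments, and (e) from a topological fact about characters of a simply connected compact group.

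For (a) and (b), I first observe that $\beta_{u(g)}$ commutes with $\lambda$ on $\clo_d$ and restricts on $\mbox{UHF}_d$ to $\gamma_g$, which preserves $\omega_R$; hence $\psi\circ\beta_{u(g)}\in K_{\omega}$ and remains a factor state, i.e.\ extremal. Proposition 2.1 then supplies $z(g)\in S^1$ with $\psi\circ\beta_{u(g)}=\psi\circ\beta_{z(g)}$, so setting $\zeta(g):=\overline{z(g)}$ and $v(g):=\zeta(g)u(g)\in U_d(\IC)$ yields $\psi\circ\beta_{v(g)}=\psi$ and a unitary implementer $U(g)$ on $\clh_{\psi}$ fixing $\zeta_{\psi}$. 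The parallel argument on the reflected copy $\tilde{\clo}_d$, together with Proposition 2.4(a), propagates invariance of the amalgamated state $\hat{\psi}$ to all mixed $\tilde{S}\tilde{S}^*SS^*$-monomials and produces the required unitary $\hat{U}(g)$ of (a) on $\tilde{\clh}\otimes_{\clk}\clh$; here the phase $\zeta(g)$ cancels on gauge-invariant elements, so $\hat{\psi}$ agrees with $\omega\circ\gamma_g=\omega$ on $\tilde{\mbox{UHF}}_d\otimes\mbox{UHF}_d$. Because $\psi\circ\beta_{v(g)}=\psi$, the argument used before Proposition 2.4 for the subgroup $H$ also yields $\hat{U}(g)P\hat{U}(g)^*=P$, so $\hat{u}(g):=\hat{U}(g)|_{\clk}$ is unitary on $\clk$. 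The direct computation
\[\hat{u}(g)v_i\hat{u}(g)^*=P\hat{U}(g)S_i\hat{U}(g)^*P=P\,\zeta(g)\beta_{u(g)}(S_i)\,P=\sum_{j}v(g)^{j}_{i}v_j\]
shows that $\clm=\{v_i,v_i^*\}''$ is $\hat{u}(g)$-stable, and, combined with unitarity of $v(g)$, is equivalent to the intertwining identity $(\zeta(g)\hat{u}(g)\otimes u(g))V^*=V^*\hat{u}(g)$ of (b); $\phi$-invariance follows from $\hat{u}(g)\zeta_{\phi}=\zeta_{\phi}$.

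For (c), $\hat{u}(g)$ implements a $\phi$-preserving automorphism of $\clm$ and fixes the cyclic--separating vector $\zeta_{\phi}$, so it commutes with the closable involution $S_0\colon a\zeta_{\phi}\mapsto a^*\zeta_{\phi}$, hence with its polar components $\clj$ and $\Delta^{1/2}$ by uniqueness of the polar decomposition. For (d), the scalar gauge action $\beta_z$ ($z\in S^1$) is central in $\beta_{U_d(\IC)}$, so $\beta_z\circ\beta_{v(g)}=\beta_{v(g)}\circ\beta_z$ on $\clo_d$; hence the implementers $u_z$ and $U(g)$ commute on $\clh_{\psi}$, and using $\beta_z(P)=P$ for $z\in H$ the relation descends to $\clk$.

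For (e), comparing $\psi\circ\beta_{u(g_1g_2)}=\psi\circ\beta_{z(g_1)z(g_2)}$ with $\psi\circ\beta_{u(g_1g_2)}=\psi\circ\beta_{z(g_1g_2)}$ forces $z(g_1g_2)z(g_1)^{-1}z(g_2)^{-1}\in H$, so $g\mapsto z(g)H$ is a continuous homomorphism $G\to S^1/H$. For $G$ compact and simply connected, in particular $SU_2(\IC)$ whose abelianization is trivial, this homomorphism is trivial, whence $z(g)\in H$; adjusting $\hat{U}(g)$ by the corresponding element of $H$ achieves $\zeta(g)=1$ and $\psi\circ\beta_{u(g)}=\psi$, with the assertion for $\psi_0$ following by normal extension to $\pi_{\psi}(\clo_d)''$. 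The step I expect to be the main obstacle is the coherent construction of $\hat{U}(g)$ on the amalgamated space: naively lifting from $\clh_{\psi}$ gets entangled with the modular covariance of the dual contractions $\tilde{v}_k=\overline{\clj\sigma_{i/2}(v_k^*)\clj}$, so parts (b) and (c) must really be juggled together. The cleanest way around it is to verify $G$-invariance of $\hat{\psi}$ directly on mixed $\tilde{S}\tilde{S}^*SS^*$-monomials via Proposition 2.4(a), which then anchors a single consistent definition of $\hat{U}(g)$ from which the rest of the statements cascade.
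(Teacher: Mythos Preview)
The paper does not actually prove this proposition; it simply cites Proposition~2.7 of [23]. Your sketch is the natural argument and almost certainly the one given there: push $\psi$ around by $\beta_{u(g)}$, use Proposition~2.1 to identify the resulting extremal point with $\psi\beta_{z(g)}$, implement the twisted symmetry on the GNS space, extend to the amalgamated space via Proposition~2.4(a), cut down by $P$, and read off (c)--(d) from Tomita--Takesaki uniqueness and centrality of the scalar gauge. The one point you pass over lightly is that your construction only produces $g\mapsto \hat U(g)$ satisfying the intertwining relations \emph{pointwise in $g$}; the group law $\hat U(g_1)\hat U(g_2)=\hat U(g_1g_2)$ holds a~priori only up to a factor $U_w$ with $w=w(g_1,g_2)\in H$, which is exactly the $H$-valued $2$-cocycle you invoke in (e). For simply connected $G$ you untwist it correctly; for general compact $G$ in (a)--(d) you still owe a line explaining why the section $g\mapsto z(g)$ can be chosen multiplicatively (or else one should read ``representation'' as ``projective representation with $H$-valued cocycle,'' which is all that is used downstream in this paper).
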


\vsp 
\begin{proof}
Proof is given in Proposition 2.7 in \cite{[24]}, where we used factor property of $\psi$ but same holds good if $\omega$ is an ergodic state of $\IM$ once we use Proposition 2.1 (a) instead of factor property of $\omega$.
\end{proof}

\vsp 
For a given $u \in U_d(\IC)$, we also extend the map $\clj_u:\IM \raro \IM$ defined in (10) to an anti-linear automorphism on 
$\tilde{\clo}_d \otimes \clo_d$, defined by 
\be 
\clj_u(\tilde{s}_{I'}\tilde{s}^*_{J'} \otimes s_Is_J^*) = \beta_{\bar{u}}(\tilde{s}_I\tilde{s}_J \otimes s_{I'}s^*_{J'})
\ee 
for all $|I|,|J|,|I'|,|J'| < \infty$ and then extend anti-linearly for an arbitrary element of $\tilde{\clo}_d \otimes \clo_d$. So we have 
\be 
\clj_u = \beta_{\bar{u}} \clj_{_{I_d}}= \clj_{_{I_d}} \beta_{u}
\ee 

\vsp 
So these maps are defined after fixing the orthonormal basis $e=(e_i)$, which have identified $\tilde{\mbox{UHF}} \otimes \mbox{UHF}_d$ with $\IM_L \otimes \IM_R= \IM$ as in Proposition 2.4 (b), where the monomial given (17) is identified with the matrix given in (18).  

\vsp  
We make few simple observations in the following for $u,w \in U_d(\IC)$:
$$\clj_{u}\clj_{w}$$
$$=\beta_{\bar{u}}\clj_{_{I_d}} \clj_{_{I_d}} \beta_w$$ 
\be 
= \beta_{\bar{u}w},
\ee  
and 

$$\clj_{w}\beta_{u}$$
$$=\clj_{_{I_d}} \beta_w \beta_u$$
$$=\clj_{_{I_d}} \beta_{wu}$$
\be 
=\clj_{wu}
\ee 

Also
$$\beta_u \clj_{w}$$
$$=\beta_u \clj_{_{I_d}} \beta_w$$
$$=\clj_{_{I_d}}\beta_{\bar{u}}\beta_w$$
$$=\clj_{_{I_d}}\beta_{\bar{u}w}$$
\be 
=\clj_{\bar{u}w}
\ee
i.e. $\clj_w$ commutes with $\beta_u$ if $wuw^*=\bar{u}$. 

\vsp
By combining relations (32)-(33), we have the following identities
$$\clj_w\beta_{u}$$
$$=\clj_{wu}\; \mbox{by}\; (32)$$
$$=\clj_{\bar{u}w}\; (\mbox{provided}\; wuw^*=\bar{u} )$$
\be 
=\beta_u\clj_w\;\mbox{by}\;(33)
\ee

\vsp 
Let $G$ be the simply connected Lie group $SU_2(\IC)$ and $g \raro u(g)$ be a $d-$dimensional unitary irreducible representation of $G$. Then there exists a $r \in U_d(\IC)$ such that 
\be 
r u(g) r^*=\bar{u(g)}
\ee  
for all $g \in G$. The element $r \in U_d(\IC)$ is determined uniquely modulo a phase factor in $S^1$. In particular any element
\be 
r_z=zr_0,\;z \in S^1
\ee 
satisfies (35), where we have fixed a $r_0$ satisfying (35) with additional condition 
\be 
r_0^2=I_d
\ee
In our notions $r_1=r_0$. In such a case, $r_{-1}=-r_0$ is the only other choice that satisfies (35) and (37) instead of $r_0$. 

\vsp 
With such a choice for $r_0$, for all $g \in SU_2(\IC)$ we have 
$$r_0u(g)r_0=\bar{u(g)}$$
Taking conjugation on both sides, we have 
$\bar{r_0} \bar{u(g)} \bar{r_0} = u(g)$
i.e. 
$$\bar{r_0}^* u(g) \bar{r_0}^* = \bar{u(g)}$$
Since $\bar{r_0}^2=I$, we arrive at 
$$\bar{r_0} u(g) \bar{r_0} = \bar{u(g)}$$
for all $g \in G$. So by the irreducible property of the representation $g \raro u(g)$, we conclude that $\bar{r}_0 = \mu r_0$, where $\mu^2=1$ since $r_0^2=\bar{r_0}^2=1$ i.e. $\mu$ is either $1$ or $-1$. 

\vsp 
Taking determinants of matrices on both sides of $\bar{r}_0r_0=\mu I_d$, we get $\mu^d=det(r_0)det(\bar{r}_0) = |det(r_0)|^2 =1$. This 
shows that $\mu=1$ if $d$ is an odd integer. For even values of $d$, we make a direct calculation to show $\mu=-1$ as follows: 

\vsp 
For $d=2$, let $\sigma_x,\sigma_y$ and $\sigma_z$ be the Pauli matrices
in $\!M_2(\IC)$ i.e. the standard ( irreducible ) representation of Lie algebra $su_2(\IC)$ in $\IC^2$:

\ben
\sigma_x = \left (\begin{array}{llll} 0&,&\; 1 \\ 1&,&\;\;0 \\ 
\end{array} \right ),
\een
\ben
\sigma_y = \left (\begin{array}{llll} \;0&,&\;\; i \\ -i&,&\;\;0 \\
\end{array} \right ),
\een
\ben
\sigma_z = \left (\begin{array}{llll} 1&,&\;\; 0 \\  0&,& \;-1\\
\end{array} \right ).
\een
A direct commutation shows that $r_0$ is given by  
\ben
r_0 = \left (\begin{array}{llll} \;0&,&\;\; i \\ -i&,&\;\;0 \\
\end{array} \right )
\een

\vsp 
The self-adjoint matrix $\sigma_y$ is also a  
unitary i.e. $\sigma_y^2=I_2$ and 
$$\sigma_y \sigma_x \sigma_y = - \sigma_x$$ 
and 
$$\sigma_y \sigma_z \sigma_y = -\sigma_z$$
Since $\sigma_x = \bar{\sigma_x}$ and $\sigma_z = \bar{\sigma_z}$, 
$\sigma_y$ inter-twins $e^{it\sigma_x}$ and $e^{it\sigma_z}$ with their conjugate matrices $e^{-it\sigma_x}$ and $e^{-it\sigma_z}$ respectively for all $t \in \IR$. In contrast, since $\bar{\sigma_y}=-\sigma_y$, we also get $\sigma_y$ inter-twins $e^{it\sigma_y}$ with $e^{-it\sigma_y}$ for all $t \in \IR$. So we set $r_0=\sigma_y$ ( other choice we can make for $r_0$ is $-\sigma_y$) and verify directly that $\bar{r_0}=-r_0$ i.e. $\mu=-1$ if $d=2$. 

\vsp 
We write $i\sigma_y=e^{it_0\sigma_y} \in SU_2(\IC)$, where $t_0={\pi \over 2}$ and verify that
$$u(e^{it_0\sigma_y})u(g)u(e^{-it_0\sigma_y})$$
$$=u(i\sigma_y) u(g) u(i\sigma_y)^*$$
$$=u((i\sigma_y) g (i\sigma_y)^*)$$
$$=u(\bar{g})$$
Since $su_2(\IC)$ is a real Lie algebra that has unique Lie algebra extension to a complex Lie algebra $sl_2(\IC)$, 
i.e. Lie algebra over the field of complex numbers, we also have 
$$u(\bar{g})=\bar{u(g)}$$ 
for all $g \in SU_2(\IC)$ ( Lie-derivatives of the representations in both sides are equal as element in $sl_2(\IC)$). 
So we have  
\be 
u(e^{it_0\sigma_y})u(g)u(e^{-it_0\sigma_y})
=\overline{u(g)}
\ee
If $\pi_u$ is the associated Lie-representation of $su_2(\IC)$, we have 
$$u(e^{it_0\sigma_y})=e^{it_0\pi_{u}(\sigma_y)}$$
for even integer values of $d$, whereas 
$$u(e^{it_0\sigma_y})=e^{2it_0\pi_{u}(\sigma_y)}$$ 
for odd integer values of $d$. Thus for an arbitrary even values of $d$, the unitary matrix $r_0 = e^{it_0\pi_{u}(\sigma_y)})$ satisfies (35) and (37). In contrast, for an arbitrary odd values of $d$, the unitary matrix $r_0=e^{i2t_0\pi_u(\sigma_y)}$ satisfies (35) and (37). In short, $\mu=1$ if $d$ is an odd integer and $-1$ if $d$ is an even integer. 

\vsp 
We write $\mu=\zeta^2$ and set $r_0 \in U_d(\IC)$, such that 
$$\zeta r_0 = u(e^{it_0\sigma_y}) \in U_d(\IC),$$  
where $\zeta^2=\mu$ and so $\mu$ is $1$ for odd values of $d$ otherwise $-1$. 
In the last section, we will recall standard explicit description of $r_0$ and $g \raro u(g)$ 
that satisfies (35) and (37). Note also that $r_{\zeta}=\zeta r_0$ is a matrix with real entries 
irrespective of values taken for $d$. 

\vsp 
Now we go back to our main text. So we have 
\be 
\clj_{r_z}^2(x)=\beta_{\bar{r_z}r_z}(x)=\beta_{\mu I_d}(x)
\ee 
for all $x \in \tilde{\clo}_d \otimes \clo_d$, where $\mu=1$ or $-1$ depending on $d$ odd or even. In any case, by 
(34) and (35), we also have  
\be 
\clj_{r_z} \beta_{u(g)} = \beta_{u(g)} \clj_{r_z}
\ee
for all $g \in SU_2(\IC)$.  

\vsp 
Let $\omega$ be a translation invariant factor state of $\IM$ and $\psi$ be an extremal element in $K_{\omega}$. We define a state $\psi_0: \tilde{\clo}_d \otimes \clo_d \raro \IC$ by extending both $\tilde{\psi}: \tilde{\clo}_d \raro \IC$ and $\psi:\clo_d \raro \IC$ by 
\be 
\psi_0(\tilde{s}_{I'}\tilde{s}^*_{J'} \otimes s_Is_J^*) 
= < \zeta_{\psi}, \tilde{v}_{I'}\tilde{v}^*_{J'}v_I^*v_J^* \zeta_{\psi}>
\ee
for all $|I'|,|J'|,|I|$ and $|J| < \infty$. Proposition 2.4 says that $(\tilde{\clh} \otimes_{\clk} \clh, \pi, \zeta_{\psi})$ is the GNS representation $(\clh_{\psi_0},\pi_{\psi_0},\zeta_{\psi_0})$ of $(\tilde{\clo}_d \otimes \clo_d,\psi_0)$. 

\vsp 
\begin{pro} 
Let $\omega$ be an extremal point in the convex set of translation invariant states of $\IM$. If $\omega$ is also 
$SU_2(\IC)$ invariant then the following statement holds for any extremal elememt $\psi \in K_{\omega}$:

\vsp 
\NI (a) $\psi \beta_{u(g)} = \psi$ on $\clo_d$;

\vsp 
\NI (b) $\psi_0 \beta_{u(g)} \otimes \beta_{u(g)} = \psi_0$ on $\tilde{\clo}_d \otimes\clo_d$ for all 
$g \in SU_2(\IC)$; 

\vsp 
\NI (c) $\psi_0 \beta_{r_{\zeta}} \otimes \beta_{r_{\zeta}} = \psi_0$ on $\tilde{\clo}_d \otimes\clo_d$, where $r_{\zeta} = \zeta r_0 \in u(SU_2(\IC))$;

\vsp 
\NI (d) $\psi_0 \beta_{r_0} \otimes \beta_{r_0} = \psi_0$ on $\tilde{\mbox{UHF}}_d \otimes \mbox{UHF}_d$.

\end{pro}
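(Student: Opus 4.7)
The plan is to assemble the statements already in hand. For part (a), since $SU_2(\IC)$ is simply connected, I would invoke Proposition 2.6(e) directly to obtain $\psi = \psi \beta_{u(g)}$ on $\clo_d$ for all $g \in SU_2(\IC)$. For part (b), I would use Proposition 2.6(a) with $\zeta(g)=1$ (again by simple connectedness): the unitary $\hat{U}(g)$ on $\tilde{\clh} \otimes_{\clk} \clh$ satisfies
$$\hat{U}(g) S_i \hat{U}(g)^* = \beta_{u(g)}(S_i), \qquad \hat{U}(g)\tilde{S}_i \hat{U}(g)^* = \beta_{u(g)}(\tilde{S}_i),$$
so conjugation by $\hat{U}(g)$ implements $\beta_{u(g)} \otimes \beta_{u(g)}$ on all of $\tilde{\clo}_d \otimes \clo_d$. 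Combined with $\hat{U}(g)\zeta_{\psi} = \zeta_{\psi}$, which is the GNS-vector form of the invariance $\psi_0 = \psi_0 \beta_{u(g)}$ already asserted in Proposition 2.6(e), this yields the claim in (b).

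For part (c), I would invoke the identification established in the discussion preceding the proposition: taking $g_0 = e^{i t_0 \sigma_y} \in SU_2(\IC)$ with $t_0 = \pi/2$, one has $u(g_0) = \zeta r_0 = r_{\zeta}$, so (c) is just (b) specialized to $g = g_0$. Part (d) is the restriction of (c) to $\tilde{\mbox{UHF}}_d \otimes \mbox{UHF}_d$. The key observation is that $\beta_{\zeta r_0}$ and $\beta_{r_0}$ agree on $\mbox{UHF}_d$: any Wick monomial $s_I s_J^*$ spanning $\mbox{UHF}_d$ has $|I|=|J|$, so
$$\beta_{\zeta r_0}(s_I s_J^*) = \zeta^{|I|}\, \bar{\zeta}^{\,|J|}\, \beta_{r_0}(s_I s_J^*) = \beta_{r_0}(s_I s_J^*)$$
since $|\zeta|=1$ (as $\zeta^2 = \pm 1$). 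The same holds on $\tilde{\mbox{UHF}}_d$, whence (d) follows.

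The conceptual content is entirely Proposition 2.6 together with the explicit identification of $r_{\zeta}$ with $u(e^{i t_0 \sigma_y})$ obtained from the representation theory of $SU_2(\IC)$; once these are in place, the four statements reduce to routine bookkeeping. The only step requiring care is tracking the sign $\mu = \zeta^2$ in the even-$d$ case — but this sign is invisible on the gauge-invariant UHF subalgebra, which is precisely why part (d) can be stated in terms of $\beta_{r_0}$ rather than $\beta_{r_{\zeta}}$, and I therefore do not expect any serious obstacle.
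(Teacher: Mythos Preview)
Your argument is correct and matches the paper's own proof. The paper defers (a)--(c) to Proposition~3.1 of [23], which is precisely the content you extract from the internal Proposition~2.5 (note: your references to ``Proposition~2.6(e)'' and ``Proposition~2.6(a)'' should read~2.5, since the statement you are proving \emph{is} Proposition~2.6); for (d) the paper gives exactly your observation that $\beta_{\zeta r_0}=\beta_{r_0}$ on the gauge-invariant $\mbox{UHF}_d$ subalgebra because the phase $\zeta$ cancels on balanced Wick monomials. One small redundancy: for (b) you invoke the unitary $\hat U(g)$ and then appeal to $\hat U(g)\zeta_\psi=\zeta_\psi$ via the already-stated invariance $\psi_0=\psi_0\beta_{u(g)}$ from~2.5(e) --- but that invariance is (b) itself, so you may simply cite~2.5(e) directly and skip the detour.
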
 

\vsp 
\begin{proof}
For (a),(b) and (c) we refer to Proposition 3.1 in \cite{[24]}, where we used factor property of $\psi$ but same holds good for ergodic state as well once we use Proposition 2.1 (a). The last statement is a simple consequence of (c) since 
$\beta_{r_{\zeta}} = \beta_{r_0}$ on $\mbox{UHF}_d$ and $\tilde{\mbox{UHF}_d}$ as $r_{\zeta} = \zeta r_0$, where $\zeta^2
=\mu$ is either $1$ or $-1$. 
\end{proof}

\section{ Real, lattice reflection symmetric with a twist, $SU_2(\IC)$ and translation invariant ergodic states }

\vsp 
We quickly recall from \cite{[23]} the folloing definitions. Given a $\lambda$-invariant state of $\clo_d$, we define state $\tilde{\psi}:\clo_d \raro \IC$ by
$$\tilde{\psi}(s_Is_J^*)=\psi(s_{\tilde{I}}s^*_{\tilde{J}})$$
for all $|I|,|J| < \infty$ and extend linearly. Also we consider the state $\bar{\psi}:\clo_d \raro \IC$ defined by 
$$\bar{\psi}(s_Is_J^*)=\psi(s_Js^*_I)$$
for all $|I|,|J| < \infty$ and extend linearly. So $\bar{\psi}$ and $\tilde{\psi}$ are well defined $\lambda$-invariant states on $\clo_d$.  

\vsp 
Let $S_{\theta,\IZ_2}$ be the convex subset of translation invariant defined by  
$$S_{\theta,\IZ_2} = \{ \omega: \omega(Q) = \omega \theta(Q),\;\omega(Q)=\omega(\clj_{r_{\zeta}}(Q^*)),\forall Q \in \IM \}$$
We recall from (10) that $\clj_{r_{\zeta}}(Q)=\overline{\beta_{r_{\zeta}}(\tilde{Q})}$ and so 
$\clj_{r_{\zeta}}(Q^*)=\beta_{r_{\zeta}}(\tilde{Q^t})$. So the map $Q \raro \tilde{\beta}_{r_{\zeta}}(Q)=\clj_{r_{\zeta}}(Q^*)=\beta_{r_{\zeta}}(\tilde{Q^t})$ is linear and anti-automorphism on $\IM$. It is obvious that any translation invariant real and lattice reflection symmetric state $\omega$ with twist $\beta_{r_{\zeta}}$ is an element in $S_{\theta,\IZ_2}$. 

\vsp 
If $\omega$ is an extremal element in $S_{\theta,\IZ_2}$ then there exists an extremal translation invariant state $\omega'$ of $\IM$ such that 
$$2\omega = \omega' + \omega' \tilde{\beta}_{r_{\zeta}};$$ 
For a proof, we use extremal decomposition of $\omega$ in the comapct convex 
set of translation invariant state of $\IM$ and use the fact that $\tilde{\beta}_{r_{\zeta}}^2=I$ on $\IM$.

\vsp 
For any element $\omega \in S_{\theta,\IZ_2}$, we also consider the set 
$$K_{\omega} = \{ \psi:\clo_d \raro \IC \;\mbox{state},\; \psi = \psi \lambda,\; \psi_{|}\mbox{UHF}_d=\omega|\IM_R \}$$
as in section 2. We also consider 
$$K_{\omega,\IZ_2} = \{\psi \in K_{\omega}: \psi_0(\clj_{r_{\zeta}}(x^*)) = \psi_0(x),\; x \in \tilde{\clo}_d \otimes \clo_d  \}$$ 
The set 
$K_{\omega,\IZ_2}$ is a non empty compact convex subset of $K_{\omega}$ since 
$$\psi = {1 \over 4}(\sum_{0 \le k \le 3} \psi' \tilde{\beta}^k_{r_{\zeta}})$$ 
is an element in $K_{\omega,\IZ_2}$ for any element $\psi' \in K_{\omega'}$, where 
$\tilde{\beta}_{r_{\zeta}}(x) = \clj_{r_{\zeta}}(x^*)$ is a linear anti-automorphism on 
$\tilde{\clo}_d \otimes \clo_d$ and $\tilde{\beta}^4_{r_{\zeta}}=I$. However, since 
$$\beta_z \clj_{r_{\zeta}} = \clj_{r_{\zeta}} \beta_{\bar{z}}$$ 
for all $z \in S^1$, we can not claim that $\psi \beta_z \in K_{\omega,\IZ_2}$ whenever 
$\psi \in K_{\omega,\IZ_2}$ unless $\psi \beta_{z^2} = \psi$.  

\vsp 
\begin{pro} the following statements hold:

\vsp 
\NI (a) Let $\omega$ be an element in the non-empty convex compact set $S_{\theta,\IZ_2}$.
For an element $\psi$ in $K_{\omega,\IZ_2}$, the associated amalgamated representation $\pi_{\psi_0}:\tilde{\clo}_d \otimes \clo_d \raro \clb(\tilde{\clh}\otimes_{\clk}\clh)$ determines an anti-automorphism 
$\clj_{r_{\zeta \zeta_0}}$ that takes 
$$\pi_{\psi_0}(\tilde{s}_{I'}\tilde{s}_{J'}s_Is^*_J) \raro\psi_{\psi_0}(\beta_{r_{\zeta}}(\tilde{s}_I\tilde{s}^*_Js_{I'}s_{J'}^*))$$ 
extending anti-linearly such that
\be 
\clj_{r_{\zeta}}(X) = \clj_{r_{\zeta}} X \clj^*_{r_{\zeta}}
\ee 
where $\clj_{r_{\zeta}}$ is an anti-unitary operator on $\clh \otimes_{\clk} \tilde{\clh}$ that takes
$$ 
\pi(s_Is^*_J\tilde{s}_{I'}\tilde{s}^*_{J'})\zeta_{\psi} 
\raro \beta_{r_{\zeta}}(\pi(s_{I'}s^*_{J'}\tilde{s}_{I}\tilde{s}^*_{J}))\zeta_{\psi}
$$
for all $\;|I|,|J|,|I'|,|J'| < \infty $ and then extending anti-linearly on their linear span.

\vsp 
\NI (b) If $\omega \in S_{\theta,\IZ_2}$ and $\bar{\omega}(x) = \omega \clj_{I_d}(x^*)$ i.e.  
lattice reflection symmetric state of $\IM$ satisfying $\omega = \tilde{\omega}$ then 
$\bar{\omega} = \omega \beta_{r_{\zeta}}$. 

\vsp 
\NI (c) If $\bar{\omega}=\tilde{\omega} \beta_{r_{\zeta}}=\omega$ then $\omega \in S_{\theta,\IZ_2}$. Futhermore, if $\omega$ is also an extremal element in the convex set of translation invariant states of $\IM$ then there exists an extremal element $\psi \in K_{\omega}$ such that $\bar{\psi}=\tilde{\psi} \beta_{r_{\zeta}}=\psi \beta_{\zeta_0}$, where $\zeta^2\zeta_0^2 \in H$. Furthermore, $\psi$ is also an extremal element in $K_{\omega,\IZ_2}$.

\vsp 
\NI (d) If $\omega$ in (c) is also pure then the Popescu elements $(\clk,\clm,v_k,1 \le k \le d,\phi)$ of $\psi$ as given in Proposition 2.4 satisfies the following:

\vsp 
\NI (i) There exists a unique unitary operator $\gamma_{r_{\zeta}}$ on $\clk$ such that $\gamma_{r_{\zeta}} \zeta_{\psi}=\zeta_{\psi}$ and 
\be 
\gamma_{r_{\zeta}} ( \sum c_{I'J',I,J} \tilde{v}_{I'}\tilde{v}_{J'}^*v_Iv^*_J) )\gamma^*_{r_\zeta} = \sum c_{I',J',I,J} \clj \beta_{r_{\zeta}}(\tilde{v}_{I}\tilde{v}_{J}v_{I'}v_{J'}^*)\clj
\ee
for all $|I'|,|J'|,|I|$ and $|J| < \infty$, where $\gamma_{r_{\zeta}}$ is
commuting with modular elements $\Delta^{1 \over 2}, \clj$. 

\vsp 
\NI (ii) $\gamma_{r_{\zeta}} u_z = u_{\bar{z}} \gamma_{r_{\zeta}}$ for all $z \in H$.  

\vsp 
\NI (iii) The anti-unitary map $\clj \gamma_{r_{\zeta}}: \clk  \raro  \clk$ extends to the 
anti-unitary map $\clj_{r_{\zeta}}$ on $\tilde{\clh} \otimes_{\clk} \clh$.   

\vsp 
\NI (iv) $Ad_{\gamma_{r_{\zeta}}^2}=\beta_{\zeta^2 I_d}$ and $\gamma_{r_{\zeta}}$ is self-adjoint if and only if $\zeta^2=1$.

\end{pro} 

\vsp 
\begin{proof} 
For (a), as a first step, we verify the following identities:
$$\langle \beta_{r_{\zeta}}(S_{I'}S^*_{J'}\tilde{S}_I\tilde{S}^*_J) \zeta_{\psi},  \zeta_{\psi} \rangle$$
$$=\overline{\psi(\beta_{r_{\zeta}}(s_{I'}s^*_{J'}\tilde{s}_I\tilde{s}^*_J))}$$
$$=\overline{\psi (\clj_{r_{\zeta}}(\tilde{s}_{I'}\tilde{s}^*_{J'}s_Is^*_J))}$$ 
$$=\psi(\tilde{s}_{I'}\tilde{s}^*_{J'}s_Is^*_J)$$
$$=\langle \zeta_{\psi},\tilde{S}_{I'}\tilde{S}^*_{J'}S_IS^*_J \zeta_{\psi} \rangle$$
For more general elements, we use Cuntz relation (16) as in Theorem 3.5 of \cite{[25]} 
to prove that the map $\clj_{r_{\zeta}}$ is indeed an anti-unitary map. 

\vsp 
Since $\clj_{r_{\zeta}} = \clj_{I_d} \beta_{r_{\zeta}}$ by (30), (b) is obvious. 

\vsp 
For (c), we closely follow the argument used in Proposition 3.4 of \cite{[25]}. Here, we quickly repeat the argument used in the proof for Proposition 3.4 (b) in \cite{[25]}, where we had used argument for real reflection symmetry state without twist. We fix any extremal elememt $\psi \in K_{\omega}$ and verify that
$\tilde{\psi} \beta_{r_{\zeta}} \in K_{\omega}$ since its restriction to $\mbox{UHF}_d$ is $\tilde{\omega} \beta_{r_{\zeta}}$ is equal to $\omega$. So
there exists $\zeta_0 \in S^1$ such that $\tilde{\psi} \beta_{r_{\zeta}} = \psi \beta_{\zeta_0}$. Since $\beta_z$ commutes with $\beta_{r_{\zeta}}$ and $\tilde{\psi \beta_z}=\tilde{\psi} \beta_z$ for all $z \in S^1$, we cnclude that
$\tilde{\psi} \beta_{r_{\zeta}} = \psi \beta_{\zeta_0}$ for all
$\psi \in K_{\omega}$. That $\zeta^2\zeta_0^2 \in H$ follows from
$\beta_{r_{\zeta}}^2=\beta_{\zeta^2}$.

\vsp
We fix an extremal element $\psi' \in K_{\omega}$. By (b), there exists $z_0 \in S^1$ such that $\bar{\psi'}= \psi' \beta_{z_0}$. Since $\bar{\psi' \beta_z}=\bar{\psi'} \beta_{\bar{z}}$ for all $z \in S^1$, the affine map $\psi \raro \bar{\psi}$ takes $\psi' \beta_z$ to
$\psi' \beta_{z_0\bar{z}}=\psi' \beta_z \beta_{z_0\bar{z}^2}$.
We choose $\psi=\psi' \beta_z$ with $z$ satisfying $z_0\bar{z}^2=\zeta_0$.

\vsp 
Thus there exists an extremal element $\psi$ in $K_{\omega}$ satisfying $\tilde{\psi} \beta_{r_{\zeta}}=\bar{\psi}=\psi \beta_{\zeta_0}$, where $\zeta_0\zeta$ is either $1$ or $\zeta_0\zeta \in \{1, {i\pi \over n} \}$
for $H=\{z \in S^1: z^n=1 \}$.

Now we also verify
$$\psi \clj_{r_{\zeta}}((s_Is_J^*)^*)$$
$$=\psi \tilde{\beta}_{r_{\zeta}}(s_Js_I^*)$$
$$=\psi \beta_{\zeta_0}(s_Js_I^*)$$
$$=\bar{\psi}(s_Js_I^*)$$
$$=\psi(s_Is_J^*)$$
Along the same line we can verity $\psi \clj_{r_{\zeta}}(x^*)=\psi(x)$ for all $x=s_Is^*_Js_{\bar{I}}s^*_{\bar{J}}$ and then extend for any $x$ of their linear sums. 

\vsp 
Since $K_{\omega,\IZ_2}$ is a convex subset of $K_{\omega}$, any extremal element in $K_{\omega}$ that is also an element in $K_{\omega,\IZ_2}$, is also extremal in $K_{\omega,\IZ_2}$.

\vsp 
Rest of the proof is given in Theorem 3.5 in \cite{[23]} once we take $r_{\zeta}=g_0$ as
$r_{\zeta}$ is a matrix with real entries. For details we refer to Theorem 3.5 in \cite{[25]}.  

\end{proof} 

\vsp 
In general ergodic states of $(\IM,\theta)$ need not be factor states of $\IM$ \cite{[8]} though ergodic states of $(\clo_d,\lambda)$ are factor states of $\clo_d$. One of the central question that arises while dealing with ground states of Hamiltonian $H_{XXX}$, whether additional symmetries of $H$ make extremal decomposition of its ground states $\omega$ in the convex set of translation and $G$-invariant states to be a factor decomposition? In the following $G$ is a compact group and $\psi \raro \psi \beta_g$ is a $G$-action on states of $\clo_d$ commuting with the action $\psi \raro \psi \lambda$.

\vsp 
\begin{pro} 
Let $\omega$ be an extreme point in the convex set of translation and $G$-invariant states of $\IM$ and $K_{\omega,G}$ be the non-empty compact convex set of $\lambda$ and $G$-invariant states $\psi$ so that $\psi|\mbox{UHF}_d=\omega|\IM_R$ as in Proposition 2.1. Then the following statements are true:

\vsp 
\NI (a) The set $K_{\omega,G}$ is a face in the convex set of translation and $G$ invariant states of $\clo_d$;

\vsp 
\NI (b) For an extremal element $\psi \in K_{\omega,G}$, let $\psi = \int^{\oplus}\psi'_{\alpha}d\mu(\alpha)$ be an ergodic decomposition in the convex set of $\lambda$-invariant states of $\clo_d$ and $\psi_{\alpha}= \int_{G}\psi'_{\alpha} 
\beta_{g} dg$, where $dg$ is the normalised Haar measure on $G$. Then $\mu$-almost every where, $\psi = \psi_{\alpha}$ i.e. 
$$\psi = \int_G \psi'\beta_{g}dg$$
for some extremal $\psi'$ in the convex set of $\lambda$ invariant states.  

\vsp 
\NI (c) For two elements $g_1$ and $g_2$ in $U_d(\IC)$, either $\psi'\beta_{g_1}$ and $\psi' \beta_{g_2}$ are same or orthogonal and 
$$\pi_{\psi}(x) = \int^{\oplus}_{G / G'} \pi_{\psi'\beta_{g'}}(x)dg'$$
for all $x \in \clo_d$, where $G'=\{g \in G: \psi' \beta_{g} = \psi' \}$ and $dg'$ is the induced measure on the cosets $G / G'$ of $G'$. Moreover, $L^{\infty}(G / G',dg') \otimes I \subset \pi_{\psi}(\clo_d)''$.   

\vsp 
\NI (d) If $\psi_1$ and $\psi_2$ are two extreme points in $K_{\omega,G}$ then $\psi_2=\psi_1 \beta_z$ for some $z \in S^1$ provided $G$ action $g \raro \beta_g$ commutes with $(\beta_z:z \in S^1)$. In such a case, the closed set $H = \{ z \in S^1: \psi \beta_z = \psi \}$ is independent of the extreme point $\psi \in K_{\omega,G}$; 

\vsp 
\NI (e) If $\omega$ is an extremal element in the convex set of translation, reflection symmetric and $G$-invariant states of $M$ then statements (a)-(d) are also valid for 
$K_{\omega,G} \bigcap \{\psi: \tilde{\psi} = \psi \}$

\end{pro}

\vsp 
\begin{proof} 
For any two $G$ and $\lambda$-invariant states $\psi$ and $\psi'$ and $\lambda \in (0,1)$, if the state $\psi_{\lambda}=\lambda \psi_1 +(1-\lambda)\psi_0$ is in $K_{\omega,G}$ then its restriction to $\mbox{UHF}_d$ being an extremal element in the convex set of $G$ and $\lambda$ invariant states, $\psi_1=\psi_2=\omega$ on $\mbox{UHF}_d$. So $\psi_1$ and $\psi_1$ are elements in $K_{\omega,G}$. This proves (a).   

\vsp 
The statement (b) follows as $\psi_{\alpha} \in K_{\omega,G}$,
$$\int \psi_{\alpha}d\mu(\alpha) $$
$$= \int \psi'_{\alpha}\beta_{g}dg d\mu(\alpha)$$
$$= \int \psi'_{\alpha} d\mu(\alpha) \beta_{g} dg$$
$$= \int \psi \beta_g dg$$  
$$=\psi$$
and $\psi$ is an extremal element in $K_{\omega,G}$. 

\vsp 
For (c), we recall by Proposition 2.1, an extremal element $\psi'$ in $K_{\omega'}$ is a factor state of $\clo_d$. So by Proposition 2.4.47 in \cite{[7]}, two factor states $\psi' \beta_{g_1}$ and $\psi' \beta_{g_2}$ are quasi-equivalent if anly only if ${1 \over 2}(\psi' \beta_{g_1} + \psi' \beta_{g_2})$ is a factor state. However, any translation invariant factor state is also an extremal element in $K_{\omega}$ and so $\psi'\beta_{g_1}$ and $\psi' \beta_{g_2}$ are quasi-equivalent if and only if they are equal. Since $\psi' \beta_{g_1}$ and $\psi' \beta_{g_2}$ are trivially centrally ergodic by $\lambda$, (for details we refer to Lemma 7.4 in \cite{[9]} ), we conclude as in Lemma 7.4 that $\psi' \beta_{g_1}$ and $\psi' \beta_{g_2}$ are either same or disjoint by Theorem 4.3.19 in \cite{[7]}, where 
$(\beta_{g}:g \in G)$ commutes with $\lambda$.  

\vsp 
For the last statement $L^{\infty}(G / G_{\alpha},dg') \otimes I \subset \pi_{\psi}(\clo_d)''$, $\psi'$ being an extremal element in the $\lambda$ invariant states of $\clo_d$, we have 
$${1 \over n} \sum_{0 \le k \le n-1}\pi_{\psi}(\lambda^k(x))$$
$$= \int^{\oplus}_{G \ G'} {1 \over n} \sum_{0 \le k \le n-1}\pi_{\psi'_{\alpha}}(\lambda^k(\beta_{g'}(x))dg'$$
$$\raro \int^{\oplus}_{G \ G'} \psi'_{\alpha}(\beta_{g'}(x))dg'$$
So $g' \raro \psi'_{\alpha}(\beta_{u(g')}(x))$ is in $\pi_{\psi}(\clo_d)''$. 
Since the collection of functions seperates points in $G / G_{\alpha}$, we conclude that $L^{\infty}(G / G_{\alpha},dg') \otimes I_{\alpha'} \subset \pi_{\psi}(\clo_d)''$. This shows that the extremal decomposition is a central decomposition. 

\vsp 
For (d), we use decomposition given in (b) for $\psi_1$ and $\psi_2$ in the convex set of $\lambda$-invariant states for 
$$\psi_k = \int_G \psi'_k \beta_{g}dg$$
and use Lemma 7.4 in \cite{[8]} for
$\psi'_1 = \psi'_2 \beta_z$ for some $z \in S^1$. 
Since $\beta_z$ commutes with $\{\beta_{g}:g \in G \}$, we conclude that 
$\psi_1 = \psi_2 \beta_z$ as well by (b).  

\vsp 
For (e), we verify that $\tilde{\psi \beta_g}=\tilde{\psi} \beta_g$ and thus the action 
$\psi \raro \psi \beta_g$ commutes with $\psi \raro \tilde{\psi}$. So we may repeat arguments used for (a) to (d). 
\end{proof} 

\vsp 
The map $x \raro \tilde{\beta}_{r_{\zeta}}(x) = \clj_{r_{\zeta}}(x^*)$ is a linear but anti 
automorphism on $\tilde{\clo}_d \otimes \clo_d$. It is a $\IZ_2$-action on $\tilde{\mbox{UHF}}_d\otimes \mbox{UHF}_d$ and it extends to a $\IZ_2$ action on $\tilde{\clo}_d \otimes \clo_d$ if and only if $\zeta^2=1$. In the following proposition, we use commuting property (40) of the $\IZ_2$-action $\psi \raro \psi \tilde{\beta}_{r_{\zeta}}$ with $SU_2(\IC)$-action $\{ \psi \raro \psi \beta_{u(g)}:g \in SU_2(\IC) \}$ for a natural group $G=SU_2(\IC) \otimes \IZ_2$ action extension. 

\vsp 
We consider the following non-empty compact convex sets
$$S_{\theta,G} = \{ \omega \in S_{\theta,\IZ_2}: \omega = \omega \beta_{u(g)},\;\forall g \in SU_2(\IC) \}$$
and 
$$K_{\omega,G} = \{ \psi \in K_{\omega,\IZ_2}: \psi=\psi \beta_{u(g)},\;\forall \;g \in SU_2(\IC) \}$$
for $\omega \in S_{\theta,G}$. So $S_{\omega,G} \subseteq S_{\omega,\IZ_2}$. 

\vsp 
\begin{pro} 
Let $\omega \in S_{\theta,G}$ and $\psi \in K_{\omega,G}$. We consider the 
anti-automorphism $\clj_{r_{\zeta}}$ on $\pi_{\psi_0}(\tilde{\clo}_d \otimes \clo_d)''$ 
that takes 
$$\pi_{\psi_0}(\tilde{s}_{I'}\tilde{s}^*_{J'}s_Is_J^*) \raro \pi_{\psi_0}(\beta_{r_{\zeta}}(\tilde{s}_I\tilde{s}_J^*s_{I'}s_{J'}^*))$$
by extending anti-linearly in their linear span defined as in Propsotion 3.1.  

\vsp  
If $\omega$ is an extremal point in $S_{\theta,G}$ and $\psi$ is an extremal element in $K_{\omega,G}$ then the following statement are true:

\vsp 
\NI (a) There exists an extremal element $\psi' \in K_{\omega}$ such that 
$$\psi = \int^{\oplus}_{G / G'} \psi' \beta_{g'} dg'$$ 
is a factor decomposition, where $dg$ is the normalised Haar measure on $G$ and $dg'$ is the induced normalised measure on the quotient space $G/ G'$, where $G'=\{g \in G: \psi' \beta_{g} = \psi' \}$.

\vsp 
\NI (b) There exists an extremal element $\psi'' \in K_{\omega,\IZ_2}$ such that 
$$\psi = \int^{\oplus}_{SU_2(\IC) / SU_2(\IC)''} \psi'' \beta_{u(h'')}dh''$$ 
is an extremal decomposition of $\psi$ in the convex set $K_{\omega,\IZ_2}$, 
where 
$SU_2(\IC)''=\{ h \in SU_2(\IC): \psi'' \circ \beta_{u(h)} =\psi'' \}$ and 
$dh''$ is the induced probability measure on the quotient space $SU_2(\IC)/ SU_2(\IC)''$. 

\vsp 
\NI (c) There is a choice for an extremal element $\psi'$ in $K_{\omega'}$, 
where $\omega'=\psi'$ on $\mbox{UHF}_d = \IM_R$. in the factor decomposition given 
in (a) such that the following statements hold: 

\vsp 
\NI (c1) If $d$ is an odd integer then     
$$2\psi'' = \psi' + \psi' \tilde{\beta}_{r_{\zeta}}$$
on $\tilde{\clo}_d \otimes \clo_d$, 

\vsp 
\NI (c2) If $d$ is an even integer then 
$$4\psi'' = \sum_{0 \le k \le 3} \psi' \tilde{\beta}^k_{r_{\zeta}}$$
on $\tilde{\clo}_d \otimes \clo_d$, where $\omega'_{|}\IM_R = \psi'_{|}\mbox{UHF}_d$. 

\vsp 
\NI (d) There exists an anti-automorphism $\clj''_{r_\zeta}$ on $\pi_{\psi_0''}(\tilde{\clo}_d \otimes \clo_d)''$ satisfying 
$$\clj_{r_{\zeta}} = \int_{SU_2(\IC) / SU_2(\IC)''}^{\oplus}\clj''_{r_{\zeta}} \beta_{u(g')} dg'$$
on $\tilde{\clo}_d \otimes \clo_d$  and 
$SU_2(\IC)'=\{ g \in SU_2(\IC): \psi' \beta_{u(g)} = \psi' \}$. 
In such a case, 
$$\psi''_0 \clj''_{r_{\zeta}}(X^*)=\psi''_0(X)$$ 
for all $X \in \pi_{\psi_0''}(\tilde{\clo}_d \otimes \clo_d)''$;

\vsp 
\NI (e) There exists an extremal element $\omega''$ in the compact convex set 
$$S_{\theta,\IZ_2}=\{ \omega : \omega \theta (x)= \omega(x), \; \omega(x) = \omega \clj_{r_{\zeta}}(x^*)\;\forall x \in \tilde{\mbox{UHF}}_d \otimes \mbox{UHF}_d  \}$$ 
such that $$\omega = \int^{\oplus}_{SU_2(\IC) / SU_2(\IC)'} \omega'' \beta_{u(h')}dh'$$ 
is an extremal decomposition of $\omega$ in the convex set $S_{\omega,\IZ_2}$ and 
$$\omega''(x) = {1 \over 2}(\omega'(x) + \omega'\clj_{r_{\zeta}}(x^*))$$
for all $x \in \tilde{\mbox{UHF}}_d \otimes \mbox{UHF}_d$, is a choice for an extremal 
element $\omega'$ in the convex set of translation invariant states of $\IM$, where 
$SU_2(\IC)'=\{ h \in SU_2(\IC): \omega' \circ \beta_{u(h)} =\omega' \}$ and 
$dh'$ is the induced probability measure on the quotient space $SU_2(\IC)/ SU_2(\IC)'$.  

\vsp 
\NI (f) There exists an anti-automorphism $\clj''_{r_{\zeta}}$ on $\pi_{\omega''}(\tilde{\mbox{UHF}}_d \otimes \mbox{UHF}_d)''$ satisfying 
$$\clj_{r_{\zeta}} = \int_{SU_2(\IC) / SU_2(\IC)'}^{\oplus}\clj''_{r_{\zeta}} \beta_{u(g')} dg'$$
on $\pi_{\omega}(\mbox{UHF}_d\otimes \mbox{UHF}_d)''$, where  
$$\omega''(\clj''_{r_{\zeta}}(X^*))=\omega''(X)$$ 
for all $X \in \pi_{\psi_0''}(\tilde{\mbox{UHF}}_d \otimes \mbox{UHF}_d)''$; 

\end{pro} 

\vsp  
\begin{proof} 
It is a rouite work to check by Proposition 3.2 (a), (b) and (c) that statements (a) and (b) are valid for any extremal element $\psi \in K_{\omega,G}$ once we take its extremal decomposition in the convex sets $K_{\omega}$ and $K_{\omega,\IZ_2}$ repectively. For (c), we consider extremal decomposition of $\psi''$ in the convex set of $\lambda$-invariant states of $\clo_d$ and use $\clj_{r_{\zeta}}^2=\beta_{\mu I_d}$, where $\mu=1$ for odd values of $d$ and $\mu=-1$ for even values of $d$. For (d), we use (b) and Proposition 3.1 (a) and commuting property $\clj_{r_{\zeta}}$ with $\{\beta_{u(g)}:g \in SU_2(\IC) \}$ given in (40). The last two statements (e) and (f) are essentially re-statements of (c) and (d) respectively once restricted to $\tilde{\mbox{UHF}}_d \otimes \mbox{UHF}_d \subset \tilde{\clo}_d \otimes \clo_d$.     
\end{proof} 

\vsp 
Alternatively, we may write 
$$S_{\theta,\IZ_2}= \{ \omega \in S_{\theta}: \bar{\omega} = \tilde{\omega} \circ\beta_{r_{\zeta}} \}$$
and consider the convex subset of $S_{\theta}$
$$S_{\theta,\IZ_2,+} = \{ \omega \in S_{\theta}: \omega(\clj_{r_{\zeta}}(x)x) \ge 0,\;\;\forall x \in \IM \}$$

\vsp 
\begin{lem} The following statements hold:

\vsp 
\NI (a) $S_{\theta,\IZ_2,+} \subset S_{\theta,\IZ_2}$; 

\vsp 
\NI (b) If $\omega \in S_{\theta,\IZ_2}$ then $\omega \beta_{u(g)} \in S_{\theta,\IZ_2}$ for
all $g \in SU_2(\IC)$;

\vsp 
\NI (c) If $\omega \in S_{\theta,\IZ_2,+}$ then $\omega \beta_{u(g)} \in S_{\theta,\IZ_2,+}$ 
for all $g \in SU_2(\IC)$.

\end{lem}

\vsp 
\begin{proof} 
We have already discussed a proof in the introduction. We include now a formal proof
for (a). We set sesqui-linear map  
$$(x,y) = \omega(\clj_{r_{\zeta}}(x)y)$$ 
on $\clm \times \clm$ and verify by sesqui-linear property that  
$$(x,y) = \sum_{0 \le k \le 3} (x+i^ky,x+i^ky)$$
Since $(x,x)$ are real numbers for all $x \in \IM$, we verify directly that
$(x,y)=\overline{(y,x)}$ for all $x,y \in \IM$. By taking $y=1$ in the relaion, 
we conclude $x \in S_{\theta,\IZ_2}$.  

\vsp 
We use the commuting property $\clj_{r_{\zeta}} \beta_{u(g)}=\beta_{u(g)} \clj_{r_{\zeta}}$ for all $g \in SU_2(\IC)$ to prove (b) and (c). 
To that end we verify the following equalities for (b):
$$\omega \beta_{u(g)} \clj_{r_{\zeta}}(x^*)$$ 
$$=\omega \clj_{r_{\zeta}} \beta_{u(g)}(x^*)$$
$$=\omega \beta_{u(g)}(x)$$
for $\omega \in S_{\theta,\IZ_2}$.

\vsp 
For (c), we also verify that
$$\omega(\beta_{u(g)}(\clj_{r_{\zeta}}(x))x)$$
$$=\omega(\clj_{r_{\zeta}}(\beta_{u(g)}(x))\beta_{u(g)}(x)) \ge 0$$ 
if $\omega \in S_{\theta,\IZ_2,+}$. 

\end{proof} 

\vsp 
\begin{pro} 
Let $\omega$ be an extremal element in the convex set $S_{\theta,G}$ and $\psi$ be an extremal element in $K_{\omega,G}$ as in Proposition 3.3. We consider direct integral representation 
$$\omega = \int^{\oplus}_{SU_2(\IC) / SU_2(C)'} \omega'' \beta_{u(h')}dh'$$
given in Proposition 3.3 (e) in the convex set $S_{\theta,\IZ_2}$. Then the following statements are true:

\vsp
\NI (a) $\omega''=\omega'' \beta_{u(g)}$ for all $g \in SU_2(\IC)$;

\vsp
\NI (b) $\omega$ is also an extremal element in $S_{\theta,\IZ_2}$;

\vsp
\NI (c) Let $\omega$ be an element in $S_{\theta,G}$ and $\omega=\int \omega_{\alpha}d\mu(\mu)$ be an extremal decomposition in the convex set
$S_{\theta,G}$ then $\omega_{\alpha}$ are also extremal elememts in $S_{\theta,\IZ_2}$.

\end{pro}

\vsp
\begin{proof}
By Proposition 3.3 (e), for each $g \in SU_2(\IC)$, the states $\omega''\beta_{u(g)}$ and $\omega'' \beta_{u(g)} \beta_{r_{\zeta}}$ are either orthogonal to each other or equal. Suppose these two states are orthogonal to each other for each $g \in SU_2(\IC)$. Then $\omega$ is orthogonal to $\omega \beta_{u(g)} \beta_{r_{\zeta}}\beta_{u(g)^*}=\omega \beta_{u(ghg^{-1})}$ for all $g \in SU_2(\IC)$, where  $r_{\zeta}=u(h)$ is as described in (38). We recall that $r_{\zeta}=\zeta r_0 = u(h)$, where $h=e^{\pi i\sigma_{\it y}} \in SU_2(\IC)$ is given explicitely in section 2, where $\zeta^2=1$ for odd integer values of $d$ and $\zeta^2=-1$ for even integer values of $d$.

\vsp
The normal subgroup generated by $\{ghg^{-1}: g \in SU_2(\IC) \}$ is not equal to the subgroup $\{-I,I\}$ as $h \notin \{-I,I\}$. Thus the generated normal subgroup is equal to the entire group $SU_2(\IC)$. Thus $\omega''$ is orthogonal to $\omega'' \beta_{u(g)}$ on $\IM$ for $g \in SU_2(\IC)$. This brings a contradiction since $\omega''$ can not orthogonal to itself. So the set $\{g \in SU_2(\IC): \omega'' \beta_{u(g)}=\omega'' \beta_{u(g)}\beta_{r_{\zeta}}$ is a non empty and a closed subset of $SU_2(\IC)$.

\vsp
For a given state $\omega$ of a $C^*$ algebra $\cla$, the collectioon of states $\Omega^{\perp}=\{\omega': \omega' \perp \omega \}$ is closed set in the weak$^*$ topology of $\cla^*$. For a proof we consider universal representation $(\clh,\pi)$ of $\cla$ \cite{[Tak]} and for each state $\omega$, there is a unique
projection $E_{\omega} \in \pi(\cla)'$ such that $\pi_{\omega}(x)= \pi(x)E_{\omega}$ and $\omega(x)=\langle \zeta_n, \pi(x)E_{\omega}\zeta_n \rangle$ for all $x \in \cla$ and $\zeta_n$ is an unit vector in $\clh$.
If $\omega_n \perp \omega$ and $\omega_n \raro \omega$ then $E_{\omega_n} \raro E_{\omega}$ in weak operator topology. So $E_{\omega}E_{\omega'}=0$ for all $\omega' \perp \omega$. This clearly shows that $\Omega^{\perp}$ is a closed set.

\vsp
Thus the set $\{g \in SU_2(\IC): \omega'' \beta_{u(g)} \beta_{r_{\zeta}} = \omega' \beta_{u(g)} \}$ is both open and closed and thus equal to $SU_2(\IC)$ being connected.

\vsp
By (a), we have $\omega = \omega''$ and thus $\omega$ is also an extremal element in $S_{\theta,\IZ_2}$. This completes the proof of (b). The last statement (c) is also obvious now by (a).

\end{proof}

\vsp
In Proposition 3.5, if $\omega$ were an ergodic states of $\IM_R$, then by Proposition 2.6, for an extremal element $\psi \in K_{\omega}$, we could have concluded directly that $\psi \beta_{u(g)} = \psi$ on $\clo_d$ for all $g \in SU_2(\IC)$ by Proposition 2.6. Thus in particular, we could have concluded that $\psi \beta_{\mu I_d} = \psi \beta_{r_{\zeta}}^2 
=\psi$. So $\mu \in H$. The following proposition says much more when $\omega$ is having 
some additional properties.

\vsp 
\begin{pro} 
Let $\omega$ be an extremal element in the convex set $S_{\theta,\IZ_2,+}$ then the following statements are true:

\vsp 
\NI (a) If $\omega$ is also an extremal element in $S_{\theta}$ then $\omega$ is a factor state of $\IM$; 

\vsp 
\NI (b) If $d$ is an odd integer then $\omega$ is a factor state of $\IM$.
\end{pro}
   
\vsp 
\begin{proof} 
We fix a unitary element $w$ in the centre $\clz_{\psi_0}$ of $\pi_{\psi_0}(\tilde{\mbox{UHF}}_d \otimes \mbox{UHF}_d)''$ and an extremal element $\psi \in K_{\omega}$.  
By Proposition 2.1.there exists $z_{w} \in S^1$ so that $\psi Ad_{w} = \psi \beta_{z_{w}}$ 
on $\clo_d$. We compute the following equalities for $x \in \tilde{\clo}_d \otimes \clo_d$:
$$\psi_0 Ad_w \clj_{r_{\zeta}} Ad_{w}(x^*)$$
$$=\psi_0 \beta_{z_{w}} \clj_{r_{\zeta}} Ad_{w}(x^*)$$
$$=\psi_0 \clj_{r_{\zeta}}\beta_{\bar{z}_w} Ad_{w}(x^*))$$
$$=\psi_0 Ad_{w} \beta_{\bar{z}_w}(x))$$
$$=\psi_0 \beta_{z_w} \beta_{\bar{z}_w}(x)$$
$$=\psi_0 (x)$$
$$=\psi_0 \clj_{r_{\zeta}}(x^*)$$
So we have
$$\psi_0((Ad_w \clj_{r_{\zeta}} Ad_{w}(x))) = \psi_0(\clj_{r_{\zeta}}(x))$$
for all $x \in \tilde{\clo}_d \otimes \clo_d$

\vsp 
Since $\psi_0$ is a pure state of $\tilde{\clo}_d \otimes \clo_d$ by Proposition 4.3 in \cite{[23]} ( we need only extremal property of $\omega$ to show $\{x \in \clb(\clk): \tau(x)=\tilde{\tau}(x)=x \}$ is $\IC I$ ), there exists $c \in S^1$ such that 
$$w \clj_{r_{\zeta}} w \zeta_{\psi} = c \clj_{r_{\zeta}}\zeta_{\psi}$$
where anti-unitary operator $\clj_{r_{\zeta}}$ on $\tilde{\clh} \otimes_{\clk} \clh$ is given 
$\clj_{r_{\zeta}}(X)=\clj_{r_{\zeta}}X\clj^*_{r_{\zeta}}$
in Proposition 3.1.

\vsp 
We compute further that 
$$\clj_{r_{\zeta}}w\clj_{r_{\zeta}}\zeta_{\psi} = c w^*\zeta_{\psi}$$
Since both $w^*$ and $\clj_{r_{\zeta}}w\clj_{r_{\zeta}}$ are elememts in the centre
$\clz_{\psi_0}$ we have
$$\clj_{r_{\zeta}}w\clj_{r_{\zeta}}F_0 = c w^*F_0$$
where
$F_0=[\pi_{\psi_0}(\mbox{UHF}_d \otimes \mbox{UHF}_d)''\zeta_{\psi}]$.

\vsp
Using $\clj_{r_{\zeta}}^2=I$ on $\pi_{\psi_0}(\tilde{\mbox{UHF}}_d \otimes \mbox{UHF}_d)''$ for odd or even values of $d$, we also have
$wF_0=\bar{c} \clj_{r_{\zeta}}w^*\clj_{r_{\zeta}}F_0$ and so
$$w^*F_0= (\bar{c} \clj_{r_{\zeta}}w^*\clj_{r_{\zeta}})^*F_0$$
$$= c \clj_{r_{\zeta}} w \clj_{r_{\zeta}}F_0$$
Thus $c=\bar{c}$ i.e. $c=1$ or $-1$. The state $\omega$ being reflection
positive with twist $r_{\zeta}$, we have
$\omega(\clj_{r_{\zeta}}w\clj_{r_{\zeta}}w) \ge 0$ and thus $c=1$.

\vsp
By Stone's spectral theorem, we have
$$\clj_{r_{\zeta}}(E)F_0=EF_0$$
for all projection $E \in \clz_{\psi_0}$ affiliated to $w$.
Unitary element $w$ being an arbitrary element in the centre $\clz_{\psi_0}$ of $\pi_{\psi_0}(\tilde{\mbox{UHF}}_d \otimes \mbox{UHF}_d )''$, we conclude that
$$\clj_{r_{\zeta}}(E)F_0=EF_0$$
for all projection $E$ in $\clz_0$.

\vsp 
As $\Lambda$ takes $\clz_{\psi}$ to itself, we also get 
$$\clj_{r_{\zeta}}(\Lambda(E))F_0=\Lambda(E)F_0$$
for all projection $E \in \clz_{\psi_0}$. Since $\clj_{r_{\zeta}}(\Lambda(X))=\tilde{\Lambda}(\clj_{r_{\zeta}}(X))$, we arrive at
$\tilde{\Lambda}(E)F_0=\Lambda(E)F_0$.

\vsp 
The action $X \raro S^*_iXS_i$ also takes elements in $\clz_{\psi}$ to itself and so we may take element 
$S^*_iES_i$ in place of $E$ for 
$$\;\tilde{\Lambda}(S^*_iES_i)F_0$$
$$= \Lambda(S_i^*ES_i)F_0 $$
$$=\sum_{1 \le k  \le d}S_kS_i^*ES_iS_k^*F_0$$
$$=\sum_{1 \le k \le d}S_kS_i^*S_iS_k^*EF_0$$
$$=EF_0$$

\vsp 
However the map $XF_0 \raro \tilde{\Lambda}(S_i^*XS_i)F_0$ is the left shift on
$\tilde{\mbox{UHF}}_d \otimes \mbox{UHF}_d$ and so 
we conlcude that 
$$\theta^{-1}(E)F_0=EF_0$$
i.e. $\theta(E)F_0=EF_0$.

\vsp 
Since $\theta(F_0)=F_0$, where we recall $F_0=[\pi_{\psi_0}(\mbox{UHF}_d \otimes \mbox{UHF}_d)''\zeta_{\psi}]$, we have
$$\theta(E_0)=E_0,$$
where $E_0=EF_0$ is a projection in the centre of $\pi_{\omega}(\IM)''$. The state $\omega$ being an ergodic state of
$\IM$, we conclude $E_0$ is either $0$ or $F_0$. Since there is an isomorphism between the centre of $\pi_{\omega}(\IM)''$ with $\clz_{\psi_0}F_0$, we conclude that $\omega$ is a factor state of $\IM$. This completes the proof of (a).

\vsp 
We will modify our proof for (a) in order to prove (b). We will prove that $(\clo_d),\Lambda,\psi)$ is ergodic and thus its restriction to $(\mbox{UHF}_d,\Lambda,\omega_R)$ is also ergodic. This will prove (b) by (a). 

\vsp 
To that end, let $E'$ be a projection in the centre of $\pi_{\psi}(\clo_d)''$ equivalently 
$\Lambda(E')=E'$. So the projection $E=\clj_{r_{\zeta}}(E')E'$ is in the centre of $\pi_{\psi_0}(\tilde{\clo}_d \otimes \clo_d)''$ and $\clj_{r_{\zeta}}(E)=E$ for odd values of $d$.

\vsp 
We recall that $\theta(X)=S^*XS$ on $X \in \pi_{\psi_0}(\tilde{\mbox{UHF}}_d \otimes \mbox{UHF}_d)''$, where $S=\sum_{1 \le i \le d} \tilde{S}_iS_i^*$ is a unitry operator by Cuntz relations (16) with its inverse $S^*=\sum_{1 \le i \le d}S_i\tilde{S}^*$. We 
verify that 
$$\theta(E')=\Lambda(E')=E'$$ 
and 
$$\theta(\clj_{r_{\zeta}}(E'))$$
$$=\clj_{r_{\zeta}}(\theta^{-1}(E'))$$
$$=E'$$
So $\theta(E)=E$ and the linear map $$\psi_E: X \raro \psi(X E)$$
on $\pi_{\psi_0}(\tilde{\mbox{UHF}}_d \otimes \mbox{UHF}_d)''$ satisfies 
$$\psi_E(\clj_{r_{\zeta}}(X^*))$$
$$=\psi(\clj_{r_{\zeta}}(X^*)E)$$
$$=\psi(\clj_{r_{\zeta}}(X^*)\clj_{r_{\zeta}}(E))$$
$$=\psi(\clj_{r_{\zeta}}(X^*E))$$
$$=\psi(XE)$$
$$=\psi_E(X)$$
Since $\omega=\psi_{|}\mbox{UHF}_d$ is an extremal element in $S_{\theta,\IZ_2}$, we have
$$\psi_E(X) = \psi(X) \psi(E)$$ 
for all $X \in \pi_{\psi}(\tilde{\mbox{UHF}}_d \otimes \mbox{UHF}_d))''$ and $\psi(E)$ is equal to either $0$ or $1$. 

\vsp 
The state $\omega$ being reflection positive with twist $r_0$ on $\IM$ and $SU_2(\IC)$, 
we have 
$$\omega(\clj_{r_{\zeta}}(X)X) \ge 0$$ for all $X \in \pi_{\psi}(\mbox{UHF}_d)''$.  

\vsp
We consider now the map
$(X,Y) \raro \omega(\clj_{r_{\zeta}}(X)Y)$ that is linear in right side and conjugate linear in the left side variables on $\pi_{\psi}(\mbox{UHF}_d)''$.
The map $$(X,Y) \raro \psi(\clj_{r_{\zeta}}(E'X)E'Y)=\psi(E)\psi(\clj_{r_{\zeta}}(X)Y)$$
is a pre-inner product 
$$\langle \langle X \zeta_{\psi},Y \zeta_{\psi} \rangle \rangle_{E'}=\omega(\clj_{r_{\zeta}}(E'X)E'Y)$$ 
on the vector space $[\pi_{\psi}(\mbox{UHF}_d)''\zeta_{\psi}].$  

\vsp 
However we computed above that $\omega(\clj_{r_{\zeta}}(X)\clj_{r_{\zeta}}(E')E'X) = 0$ by the first part if $\psi(\clj_{r_{\zeta}}(E')E')=0$. In such case, we have 
$E'X\zeta_{\psi}=0$ for all $X \in  \pi_{\psi}(\mbox{UHF}_d)''$. Similarly, if $\psi(E'\clj_{r_{\zeta}}(E'))=1$, then $\psi(E')=1=\psi(\clj_{r_{\zeta}}(E'))$ and so by replacing the role of $E'$ by $I-E'$, we arrive at $(I-E')X\psi_{\zeta}=0$ for all $X \in \pi_{\psi}(\mbox{UHF}_d)''$. Thus $E'$ is either $0$ or $I$ on $\pi_{\psi}(\mbox{UHF}_d)''\zeta_{\psi}]$. So $\omega$ is an extremal element in $S_{\theta}$ and by (a),
$\omega$ is a factor state. 

\end{proof}

\vsp 
In the following theorem, we sum up our main result required as our appliation in section 5 and 6. Let $S_{\theta,G,+}$ be convex subset of $S_{\theta,G}$ consist of 
reflection positive states with twist $\beta_{r_{\zeta}}$, where $G=SU_2(\IC) \otimes \IZ_2$. Similarly, we also set $S_{\theta,\IZ_2,+}$ for subset of $S_{\theta,\IZ_2}$ consist of reflection positive states with twist $\beta_{r_{\zeta}}$.

\vsp 
\begin{thm} The following statements hold:

\vsp 
\NI (a) Let $\omega \in S_{\theta,\IZ_2,+}$ and $d$ be an odd integer then there exists the following direct integral factor decomposition: 
$$\omega = \int^{\oplus} \omega_{\alpha}d\mu_{\omega}(\alpha)$$
for some Borel probability measure $\mu_{\omega}$, where factors 
$\omega_{\alpha}$ are elements in $S_{\theta,\IZ_2,+}$ for $\mu_{\omega}$-almost 
everywhere;
 
\vsp 
\NI (b) Let $\omega$ be an element in $S_{\theta,G,+}$ and 
$$\omega = \int \omega_{\alpha}d\mu_{\omega}(\alpha)$$
be an extremal decomposition in the convex set $S_{\theta,G,+}$, 
where $\mu_{\omega}$ is a Borel probability measure $\mu_{\omega}$. Then the following statements hold true:

\vsp 
\NI (i) For $\mu$-almost everywhere $\omega_{\alpha}$ are also extremal elements in the convex set $S_{\theta,\IZ_2}$;

\vsp 
\NI (ii) If $d$ is an odd integer then extremal decomposition given in (b) is a factor decomposition.  

\vsp 
\NI (c) Let $\omega$ be an element in $S_{\theta,G,_+} \bigcap \{\omega: \omega = \tilde{\omega} \}$ and 
$$\omega = \int \omega_{\alpha}d\mu_{\omega}(\alpha)$$
be an extremal decomposition in the convex set $S_{\theta,G,+} \bigcap \{\omega:\tilde{\omega}=\omega \}$, where $\mu_{\omega}$ is a Borel probability measure. 
Then the following statements hold true:
\vsp 
\NI (i) For $\mu$-almost everywhere $\omega_{\alpha}$ are also extremal elements in the convex set $S_{\theta,\IZ_2} \bigcap \{\omega: \tilde{\omega} = \omega \}$; 

\vsp 
\NI (ii) If $d$ is an odd integer then extremal decomposition given in (b) is a factor decomposition.  
\end{thm} 

\vsp 
In general the statement (ii)  in Theorem 3.7 (b) is false for even values of $d$ and counter examples are included in section 5. Alternatively, in Theorem 3.7 (a) 
$\omega_{\alpha}$ need not be $SU_2(\IC)$ invariant even when $\omega$ is an element in $S_{\theta,G,+}$ unless $d$ is an odd integer.

\vsp 
\begin{proof} 
For (a), we consider extremal decomposition of $\omega$ in $S_{\theta,\IZ_2,+}$ and apply Proposition 3.6 (b). For (b), we consider extremal decomposition of $\omega$ in $S_{\theta,G,+}$ and apply Proposition 3.5 (d) and then apply Proposition 3.6 (b) for the required result.  For (c) we verify that $\tilde{\omega \beta_{u(g)}}=\tilde{\omega} \beta_{u(g)}$ for all $g \in SU_2(\IC)$ for any $\omega \in S_{\theta}$. Result follows once we
restrict our method employed to prove (a) for the convex subset $S_{\theta,G,+} \bigcap \{ \omega: \omega = \tilde{\omega} \}$. We omit the details.
\end{proof} 

\vsp 
\begin{cor} 
Let $\omega \in S_{\theta,G,+} \bigcap \{\omega:\tilde{\omega} = \omega \}$. Then $\omega$ is a real i.e. $\omega=\bar{\omega}$, where $\bar{\omega}$ is defined in (9).
If $\omega=\int \omega_{\alpha}d\mu(\alpha)$ is an extremal decomposition in the convex set 
$S_{\theta,G,+} \bigcap \{\omega:\tilde{\omega} = \omega \}$ then $\omega_{\alpha}$ in the direct integral factor decomposition of $\omega$ given in Theorem 3.7 (c) are real states 
for $\mu_{\omega}$ almost everyhwhere.    
\end{cor} 

\vsp 
\begin{proof}  
Since $r_{\zeta} \in u(h)$ for some $h \in SU_2(\IC)$ and $\omega$ is $SU_2(\IC)$ invariant, we have $\omega = \omega \beta_{r_{\zeta}}$. As 
$\omega =\tilde{\omega}$ and $\bar{\omega} = \tilde{\omega} \beta_{r_{\zeta}}$ ($\omega(x)=\omega(\clj_{r_{\zeta}}(x^*)$), we have 
$\bar{\omega}=\omega$.  
\end{proof} 

\vsp 
For an integer $m \ge 1$, we consider the two-sided lattice $\IZ_m \otimes \IZ$ with $m$-many legs, where $\IZ_m=\{0,1,..,m-1\}$ and the group action $\psi \raro \psi \beta_{u^m(g)}$, where $u^{(m)}(g)=u(g) \otimes u(g) ..\otimes u(g)$ -$m$-fold tensor product. So we have 
$$\bar{u^{(m)}(g)} = r_0^{(m)}u^{(m)}(g)r_0^{(m)}$$
and 
$$(\clj^{(m)}_{r_{\zeta}})^2 = \beta_{\mu^m}$$
  
\vsp 
\begin{thm} 
We consider the lattice $\IZ_m \otimes \IZ$ and associate UHF$_d$ algebra $\IM_{\IZ_m \otimes \IZ}$. Let $\omega$ be an element in $S_{\theta,G,+}$ and 
$$\omega = \int \omega_{\alpha}d\mu_{\omega}(\alpha)$$
be an extremal decomposition in the convex set $S_{\theta,G,+}$, where $\mu_{\omega}$ is a Borel probability measure $\mu_{\omega}$. Then 

\vsp 
\NI (i) For $\mu$-almost everywhere $\omega_{\alpha}$ are also extremal elements in the convex set $S_{\theta,\IZ_2}$;

\vsp 
\NI (ii) If $\mu^m=1$ then the extremal decomposition is a factor decomposition.  

\end{thm} 

\vsp
\begin{proof}
An easy adaptation of Proposition 3.6 gives a proof.
\end{proof}

\section{Integer spin $s$ and half-odd integer spin $s$ ($2s+1=d)$:}

\vsp 
Let $\omega$ be an extremal element in the convex set of translation invariant states of $\IM$ and $\omega$ is real and lattice symmetric with twist
$\beta_{r_0}$ i.e. $\tilde{\omega} \beta_{r_0}=\bar{\omega}=\omega$. So
$\omega \in S_{\theta,\IZ_2}$ since $\bar{\omega}=\tilde{\omega}\beta_{r_{\zeta}}$ on $\IM$.

\vsp
As in Theorem 3.5 in \cite{[22]}, we fix an extremal element $\psi \in K_{\omega}$.
Since $\tilde{\omega}\beta_{r_0}=\omega$, $\tilde{\psi} \beta_{r_0} \in K_{\omega}$
and the element is also an extremal in $K_{\omega}$, So by Proposition 2.2, $\tilde{\psi} \beta_{r_0}=\psi \beta_{\zeta_0}$ for some $\zeta_0 \in S^1$. We use
$\tilde{\tilde{\psi}}=\psi$ and $r_0^2=I_d$ to conclude $\zeta_0^2 \in H$. Besides,
$\tilde{(\psi \circ \beta_z)}=\tilde{\psi} \circ \beta_z$ for all $z \in S^1$ and so $\tilde{\psi} \beta_{r_0} = \psi \beta_{\zeta_0}$ holds for all $\psi \in K_{\omega}$.

\vsp
Unlike reflection symmetry with twist $r_0$, we have $\bar{\psi \beta_z}=\bar{\psi} \beta_{\bar{z}}$ for any element $\psi \in K_{\omega}$. We aim to choose an extremal element $\psi \in K_{\omega}$ so that $\bar{\psi}=\psi \zeta_0$. We fix an arbitary extremal element $\psi'$ in $K_{\omega}$ and get $\bar{\psi}=\psi \beta_{z_0}$ for some $z_0 \in S^1$ so that $z_0^2 \in H$. If so then we check that
$$\bar{\psi'\beta_z}=\bar{\psi'} \beta_{\bar{z}} = \psi \beta_{z_0 \bar{z}}$$
for all $z \in S^1$. We choose $\psi = \psi' \beta_z$ for which $\bar{z}^2z_0=\zeta_0$. Such a choice for $z \in S^1$ is possible and so we get the required relation $\bar{\psi}=\psi \beta_{\zeta_0}$.

\vsp
Thus as in Theorem 3.5 in \cite{[22]}, there exists an extremal element $\psi \in K_{\omega}$ such that $\tilde{\psi} \beta_{r_0}=\bar{\psi}=\psi \beta_{\zeta_0}$ for some $\zeta_0 \in S^1$ such that $\zeta_0^2 \in H$. Furthermore, if $\omega$
is also pure then Popescu elememts $(\clk,v_i,1\le i \le d, \zeta_{\psi})$ of
$\psi$ in the support projection $\clk=[\pi_{\psi_0}(\clo_d)'\zeta_{\psi_0}]$ described as in Proposition 2.3 then their exists a unitary operator $v_{r_0}$
on $\clk$ satisfying the following properties:

\vsp
\NI (a) $v_{r_0}\zeta_{\psi}=\zeta_{\psi}$ and $v_{r_0}$ commutes with $\clj$
and $\Delta^{1 \over 2}$;

\vsp
\NI (b) $v_{r_0}\beta_{r_0}(\tilde{v}_{I'}\tilde{v}_{J'}^*v_Iv_J^*)v_{r_{\zeta}}^*=\clj v_{I'}v_{J'}^*\tilde{v}_I\tilde{v}_J \clj$ for all $|I'|,|J'|I|$ and $|J|
< \infty$;

\vsp
\NI (c) $v_{r_0}u_z=u_{\bar{z}}v_{r_0}$;

\vsp
\NI (d) $v_{r_0}$ is self-adjoint if and only if $\bar{r_0}=r_0$.

\vsp
So such an element $\psi$ is also an element in $K_{\omega, \IZ_2}$ i.e. $\bar{\psi} = \tilde{\psi} \beta_{r_{\zeta}}$ if $\zeta \in H$ and the elememt
is obviously an extremal point in $K_{\omega,\IZ_2}$.

\vsp
Let $\omega$ be also $SU_2(\IC)$ invariant. By Proposition 2.6, we have $\psi_0 \beta_{u(g)}=\psi_0$ for all $g \in SU_2(\IC)$ on $\tilde{\clo}_d \otimes \clo_d$. Since $r_{\zeta}=u(i\sigma_y)$, there exists a unitary operator $\hat{r}_{\zeta}:\tilde{\clh} \otimes_{\clk} \clh \raro \tilde{\clh} \otimes_{\clk} \clh$ such that $\hat{r}_{\zeta} \zeta_{\psi}=\zeta_{\psi}$ and
$$Ad_{\hat{r}_\zeta}(\pi_{\psi_0}(x))= \pi_{\psi_0}(\beta_{r_{\zeta}}(x))$$
for all $x \in \tilde{\clo}_d \otimes \clo_d$.

\vsp 
\begin{pro} 
Let $\omega,\psi$ be as in Proposition 3.1 and $\omega$ be also $SU_2(\IC)$ invariant. We consider the anti-automorphism $\clj_{r_{\zeta}}$ on $\tilde{\clo}_d \otimes \clo_d$ and its induced anti-automorphism map
$\clj_{r_{\zeta}}$ on $\pi_{\psi_0}(\tilde{\clo}_d \otimes \clo_d)''$
defined by
\be
\clj_{r_{\zeta}}(\pi_{\psi_0}(x)) = \pi_{\psi_0}(\clj_{r_{\zeta}}(x))
\ee
for all $x \in \tilde{\clo}_d \otimes \clo_d$ as in Proposition 3.1.

If $\omega$ is also pure then $\hat{\clj}_{r_{\zeta}}(P)=P$ and the corner anti-automorphism, defined by
$$\hat{\clj}_{r_{\zeta}}(a) = P\hat{\clj}_{r_{\zeta}}(PaP)P$$
for all $a \in \clb(\clk)$ satisfies the following: 
\be 
\hat{\clj}_{r_{\zeta}}(a) = \clj \gamma \hat{r}_{\zeta} a  
\hat{r}_\zeta^* \gamma^* \clj 
\ee 
for all $a \in \clb(\clk)$. Furthermore, we have the following consequences: 

\vsp 
\NI (a1) $\hat{\clj}^2_{r_{\zeta}}=\beta_{\mu}$;

\vsp 
\NI (a2) $\beta_{\bar{r_{\zeta}}}(\tilde{S}_I\tilde{S}_J^*S_{I'}S_{J'}^*) \clj_{\gamma} \hat{r_{\zeta}} = \clj_{\gamma}\hat{r_{\zeta}} S_IS_J^*\tilde{S}_{I'}\tilde{S}^*_{J'}$ for all $|I'|,|J'|,|I|$ and $J| < \infty$.

\vsp 
\NI (a3) $Ad_{\hat{U}(g)} \hat{\clj}_{r_{\zeta}}=\hat{\clj}_{r_{\zeta}} Ad_{\hat{U}(g)}$for all $g \in SU_2(\IC)$;

\vsp 
\NI (b1) $Ad^2_{\gamma_{r_{\zeta}}}=\beta_{\zeta^2 I_d}$, where 
$\gamma_{r_{\zeta}}= \gamma \hat{r}_\zeta$ commutes with modular elements $\clj$ and $\Delta^{1 \over 2}$;

\vsp 
\NI (b2) $\beta_{\bar{r_{\zeta}}}(\tilde{v}_I\tilde{v}_J^*v_{I'}v_{J'}^*)  \clj \gamma_{r_{\zeta}} = \clj \gamma_{r_{\zeta}} v_Iv_J^*\tilde{v}_{I'}\tilde{v}^*_{J'}$ for all $|I'|,|J'|,|I|$ and $J| < \infty$;

\vsp 
\NI (b3) $\gamma_{r_{\zeta}}$ commutes the representation $\{\hat{u}(g):g \in SU_2(\IC) \}$;

\vsp 
There exists a unique unitary operator $\Gamma_{r_{\zeta}}$ and an anti-unitary operator extending $\clj$ on $\tilde{\clh} \otimes_{\clk} \clh$ extending unitary $\gamma_{r_{\zeta}}:\clk \raro \clk$ and anti-unitary operator $\clj:\clk \raro \clk$ respectively such that

\vsp 
\NI (c1) $Ad_{\Gamma_{r_{\zeta}}}^2=\beta_{\zeta^2}$; $Ad_{\Gamma_{r_{\zeta}}}$ acts on $\pi(\clo_d)''$ and $\pi(\mbox{UHF}_d)''$ ( $\pi(\tilde{\clo}_d)''$ and 
$(\pi(\tilde{\mbox{UHF}}_d)''$ ) respectively; 

\vsp 
\NI (c2) $\beta_{\bar{r_{\zeta}}}(\tilde{S}_I\tilde{S}_J^*S_{I'}S_{J'}^*)  \clj \Gamma_{r_{\zeta}} = \clj \Gamma_{r_{\zeta}} S_IS_J^*\tilde{S}_{I'}\tilde{S}^*_{J'}$ 
for all $|I'|,|J'|,|I|$ and $J| < \infty$;

\vsp 
\NI (c3) $\Gamma_{r_{\zeta}}$ and $\clj$ commutes the representation $\{\hat{U}(g):g \in SU_2(\IC) \}$;

\end{pro}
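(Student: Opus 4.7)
The strategy is to construct the anti-automorphism $\hat{\clj}_{r_{\zeta}}$ in two equivalent ways at the algebra level, then propagate the result to the corner $\clb(\clk)$ and finally to the full amalgamated Hilbert space. The starting observation is the decomposition $\clj_{r_{\zeta}} = \clj_{_{I_d}}\beta_{r_{\zeta}}$ from (32), together with two implementers already at hand: by Proposition 3.1 the anti-unitary $\clj_{\gamma}$ implements $\clj_{_{I_d}}$ on $\pi(\tilde{\clo}_d\otimes\clo_d)$, and since $\psi_0$ is $\beta_{r_{\zeta}}\otimes\beta_{r_{\zeta}}$-invariant by Proposition 2.6(c), the automorphism $\beta_{r_{\zeta}}$ is implemented by a unitary $\hat{r}_{\zeta}$ fixing the cyclic vector $\zeta_{\psi}$. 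Composing these implementers on generators and extending normally to $\pi(\tilde{\clo}_d\otimes\clo_d)''$ gives both the algebraic defining property (45) and the spatial formula (46) simultaneously.

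To obtain $\hat{\clj}_{r_{\zeta}}(P)=P$, notice that $\hat{r}_{\zeta}$ fixes $\zeta_{\psi}$ and hence preserves the support projection of $\psi_{\zeta_{\psi}}$ on $\pi(\clo_d)''$, while $\clj_{\gamma}$ preserves $\clk$ because Proposition 3.1 constructs it as the extension of the anti-unitary $\clj\gamma$ living on $\clk$. Hence $\clj_{\gamma}\hat{r}_{\zeta}P\hat{r}_{\zeta}^{*}\clj_{\gamma}=P$, and compressing (46) by $P$ yields the corner formula (47) with $\gamma_{r_{\zeta}}=\gamma\hat{r}_{\zeta}|_{\clk}$. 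Statement (a1) is immediate from (39): $\clj_{r_{\zeta}}^{2}=\beta_{\bar{r_{\zeta}}r_{\zeta}}=\beta_{\mu I_{d}}$, which passes through $\pi$. For (a2) one rearranges (46) using $\clj_{\gamma}^{2}=I$ and unitarity of $\hat{r}_{\zeta}$ to get the identity $\hat{\clj}_{r_{\zeta}}(X)\clj_{\gamma}\hat{r}_{\zeta}=\clj_{\gamma}\hat{r}_{\zeta}X$ and applies it at $X=S_{I}S_{J}^{*}\tilde{S}_{I'}\tilde{S}_{J'}^{*}$, computing $\hat{\clj}_{r_{\zeta}}(X)=\beta_{\bar{r_{\zeta}}}(\tilde{S}_{I}\tilde{S}_{J}^{*}S_{I'}S_{J'}^{*})$ from the action of $\clj_{r_{\zeta}}$ on generators given in (31). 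Statement (a3) is a direct consequence of $\clj_{r_{\zeta}}\beta_{u(g)}=\beta_{u(g)}\clj_{r_{\zeta}}$, itself following from the hypothesis $r_{\zeta}u(g)r_{\zeta}^{*}=\overline{u(g)}$ through (34).

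Passing to the corner (b1)--(b3), the unitary $\gamma_{r_{\zeta}}=\gamma\hat{r}_{\zeta}|_{\clk}$ squares to an implementer of $\beta_{r_{\zeta}^{2}}=\beta_{\mu I_{d}}=\beta_{\zeta^{2}I_{d}}$, which gives (b1). Its commutation with $\clj$ and $\Delta^{1/2}$ is inherited from Proposition 3.1(a) for $\gamma$ and from Proposition 2.5(c) applied to the $SU_{2}(\IC)$-implementer at $g=e^{it_{0}\sigma_{y}}$ (since $r_{\zeta}=\zeta u(g)$). Items (b2) and (b3) are the $P$-compressions of (a2) and (a3).

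For (c1)--(c3), I would lift $\gamma_{r_{\zeta}}$ to a unitary $\Gamma_{r_{\zeta}}$ on $\tilde{\clh}\otimes_{\clk}\clh$ by the prescription $\Gamma_{r_{\zeta}}\tilde{S}_{\bar{I}}S_{I}\xi=\beta_{r_{\zeta}}(\tilde{S}_{\bar{I}}S_{I})\gamma_{r_{\zeta}}\xi$ for $\xi\in\clk$; well-definedness is forced by the fact that $\hat{r}_{\zeta}$ already implements $\beta_{r_{\zeta}}$ on the full amalgamated algebra, so the definition agrees on overlaps. The anti-unitary extension of $\clj$ is provided, up to the unitary twist $\gamma$, by $\clj_{\gamma}$ of Proposition 3.1. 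The invariance of $\pi(\clo_{d})''$, $\pi(\mbox{UHF}_{d})''$, $\pi(\tilde{\clo}_{d})''$, and $\pi(\tilde{\mbox{UHF}}_{d})''$ under $Ad_{\Gamma_{r_{\zeta}}}$ follows from the corresponding invariance at the generator level under $\beta_{r_{\zeta}}$, and (c2), (c3) transfer from (b2), (b3) by density of the linear span of vectors $\tilde{S}_{\bar{I}}S_{I}\clk$ in the amalgamated space. The main obstacle I anticipate is the careful bookkeeping of the scalar twist $\mu=\zeta^{2}$ across the odd ($\mu=1$) and even ($\mu=-1$) cases, ensuring that the squared identities $\hat{\clj}_{r_{\zeta}}^{2}=\beta_{\mu}$ and $Ad_{\Gamma_{r_{\zeta}}}^{2}=\beta_{\zeta^{2}}$ emerge with exactly the stated phase, and verifying that neither $\gamma$ nor $\hat{r}_{\zeta}$ introduces spurious cocycle phases against the modular data $(\clj,\Delta^{1/2})$ or against the $SU_{2}(\IC)$-implementers $\hat{U}(g),\hat{u}(g)$.
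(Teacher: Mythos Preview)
Your proposal is correct and follows essentially the same route as the paper: the paper's argument (embedded in the text just before the proposition rather than in a separate proof environment) also factors $\clj_{r_{\zeta}}=\clj_{_{I_d}}\beta_{r_{\zeta}}$, implements $\beta_{r_{\zeta}}$ by the unitary $\hat{r}_{\zeta}$ coming from Proposition 2.6(c) (using $r_{\zeta}=u(i\sigma_y)$) and $\clj_{_{I_d}}$ by the anti-unitary $\clj_{\gamma}$ of Proposition 3.1, and then verifies the identity (45) on Wick monomials to get (46). Your treatment of the corner map, of the consequences (a1)--(a3), (b1)--(b3), and of the lift $\Gamma_{r_{\zeta}}$ matches the intended derivation; one small correction is that $r_{\zeta}=u(e^{it_0\sigma_y})$ exactly (not $\zeta u(g)$), so $\hat{r}_{\zeta}|_{\clk}=\hat{u}(e^{it_0\sigma_y})$ and Proposition 2.5(c) applies directly without any extra phase.
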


\vsp 
\begin{pro} 
Let $\omega$ and $\psi$ be as in Proposition 3.1 and $\omega$ is also pure and 
reflection positive with twist $r_0$. Then we have 
$Ad_{\gamma_{r_{\zeta}}}(a)=a$ for all $a \in \clm_0$, where $\clm_0=\{ a \in \clm: \beta_z(a)=a \forall z \in H \}=P\pi_{\psi}(\mbox{UHF}_d)''P$.
\end{pro}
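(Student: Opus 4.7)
My strategy is to combine reflection positivity with an eigenspace decomposition of the involution $Ad_{\gamma_{r_\zeta}}$ on $\clm_0$. I first collect the structural facts I will need. By Prop 3.2(b1), $Ad_{\gamma_{r_\zeta}}^2 = \beta_{\zeta^2 I_d}$, which acts trivially on the gauge-fixed $\mbox{UHF}_d$ (for any parity of $d$), so $Ad_{\gamma_{r_\zeta}}$ is an involutive $*$-automorphism of $\clm_0$. The identity $\gamma_{r_\zeta}\zeta_\phi = \zeta_\phi$ gives $\phi \circ Ad_{\gamma_{r_\zeta}} = \phi$, and $\phi$ is faithful on $\clm$ by Prop 2.3(c), hence on $\clm_0$. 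In the factor case, these facts together with $\gamma_{r_\zeta}^2 \zeta_\psi = \zeta_\psi$ force $\gamma_{r_\zeta}^2 = I$ (for odd $d$), so that $\clj$, $\gamma$, and $\hat r_\zeta$ pairwise commute on $\clk$ (Prop 3.1(a) for $\clj$ and $\gamma$; Prop 2.6(c) for $\clj$ and $\hat r_\zeta$, noting $r_\zeta = u(e^{it_0\sigma_y}) \in u(SU_2(\IC))$; and the commutation of $\gamma$ and $\hat r_\zeta$ follows from $(\gamma\hat r_\zeta)^2 = I$ plus self-adjointness of each).

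The key technical step will be to translate the reflection positivity $\omega(\clj_{r_0}(Q)Q) \ge 0$ on $\mbox{UHF}_d$ into the corner inequality
$$\phi(a^* Ad_{\gamma_{r_\zeta}}(a)) \ge 0 \quad \forall\, a \in \clm_0.$$
Starting from (44), I will write $\pi(\clj_{r_0}(Q)) = \clj_\gamma\hat r_\zeta\pi(Q)\hat r_\zeta^*\clj_\gamma$, use $\clj_\gamma\zeta_\psi = \zeta_\psi = \hat r_\zeta\zeta_\psi$ together with the sub-harmonic identities $P S_k^* P = S_k^* P$ of Prop 2.3(a) to project $\pi(Q)\zeta_\psi$ to the corner $\clk$, and express the reflection-positive pairing in terms of $\hat Q = P\pi(Q)P \in \clm_0$. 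The commutativity of $\clj$, $\gamma$, $\hat r_\zeta$, together with the modular relation $\clj x\zeta_\phi = \sigma_{-i/2}(x^*)\zeta_\phi$ and the dual identity (27) relating $\phi$ on $\clm$ and $\tilde\clm$, will be used to collapse the amalgamated expression into the clean corner form above. Density of Wick monomials in $\clm_0$ then extends the inequality from the generators $v_I v_J^*$ to all $a \in \clm_0$.

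With the corner inequality in hand, the conclusion follows by a standard eigenspace argument. Decompose $\clm_0 = \clm_0^+ \oplus \clm_0^-$ into the $\pm 1$-eigenspaces of $Ad_{\gamma_{r_\zeta}}$. Invariance of $\phi$ forces the cross terms to vanish: $\phi(a_+^* a_-) = \phi(Ad_{\gamma_{r_\zeta}}(a_+^* a_-)) = -\phi(a_+^* a_-)$, so $\phi(a_+^* a_-) = 0$ and similarly $\phi(a_-^* a_+) = 0$. For $a = a_+ + a_- \in \clm_0$ one then computes
$$\phi(a^* Ad_{\gamma_{r_\zeta}}(a)) = \phi(a_+^* a_+) - \phi(a_-^* a_-) \ge 0.$$
Applying this to $a = a_- \in \clm_0^-$ yields $-\phi(a_-^* a_-) \ge 0$, which combined with $\phi(a_-^* a_-) \ge 0$ forces $\phi(a_-^* a_-) = 0$. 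Faithfulness of $\phi$ on $\clm_0$ then gives $a_- = 0$, so $\clm_0^- = \{0\}$ and $Ad_{\gamma_{r_\zeta}}|_{\clm_0}$ is the identity.

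The main obstacle is the translation step. The subtlety is that $Q \mapsto P\pi(Q)P$ is not a $*$-homomorphism, so $\pi(Q)\zeta_\psi$ generically has nontrivial components in $\clk^\perp$, and the reflection-positive pairing mixes corner and off-corner contributions. Reconciling these requires carefully exploiting the commutations of $\clj$, $\gamma$, $\hat r_\zeta$, the sub-harmonic identities of Prop 2.3(a), the modular relation, and the dual identity (27), to collapse the pairing onto the corner algebra in the clean form $\phi(a^* Ad_{\gamma_{r_\zeta}}(a))$.
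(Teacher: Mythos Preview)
Your overall architecture is correct and matches the approach behind the cited result. The paper's own proof here is simply a pointer to Theorem~3.5(d) of [22], together with the remark that $Ad_{\gamma_{r_\zeta}}=Ad_{\gamma_{r_0}}$ on $\clm_0$ (since $r_\zeta=\zeta r_0$ and $\beta_\zeta$ is trivial on the gauge-fixed algebra). The substance of that cited theorem is precisely the two-step argument you describe: reflection positivity on $\IM_R$ yields a corner inequality of the form $\phi\bigl(a^*\,Ad_{\gamma_{r_\zeta}}(a)\bigr)\ge 0$ on $\clm_0$, after which the involution-plus-faithfulness eigenspace argument forces $\clm_0^-=0$. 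Your third paragraph (the eigenspace step) is clean and complete.

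The gap is that you have not actually executed the translation step; you have only announced it and flagged the obstruction. That translation is the entire content of the result being cited, and your outline does not yet show how the obstruction is overcome. Concretely: $\clj_{r_0}(Q)\in\IM_L$ while $Q\in\IM_R$, so $\omega(\clj_{r_0}(Q)Q)$ is an amalgamated pairing between $\tilde{\mbox{UHF}}_d$ and $\mbox{UHF}_d$; to land in $\phi$ on $\clm_0$ one must use (b2) of Proposition~3.2 (the intertwining $\clj\gamma_{r_\zeta}\,v_Iv_J^*=\beta_{\bar r_\zeta}(\tilde v_I\tilde v_J^*)\,\clj\gamma_{r_\zeta}$) together with $\tilde v_I^*\zeta_\phi=v_{\tilde I}^*\zeta_\phi$ and the commutation of $\gamma_{r_\zeta}$ with the modular data, so that the $\tilde v$-side collapses back onto $\clm_0$ and the anti-linearity of $\clj$ produces the adjoint $a^*$. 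The fact that $P\pi(Q)P$ is not a homomorphism is a red herring here: for $Q\in\mbox{UHF}_d$ a Wick monomial $s_Is_J^*$ with $|I|=|J|$, one has $\pi(Q)\zeta_\psi=v_Iv_J^*\zeta_\psi\in\clk$ directly by the sub-harmonic relations, so no off-corner terms appear in the pairing and density of such monomials suffices. Filling in this calculation (essentially a careful unwinding of Proposition~3.2(b2) plus the modular identity (27)) is what is missing from your proposal; once that is done, your proof coincides with the paper's cited argument.
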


\vsp 
\begin{proof} 
We recall that $r_{\zeta}=\zeta r_0$ and thus Theorem 3.5 (d) in \cite{[23]} gives a proof as $$Ad_{\gamma_{r_{\zeta}}}=Ad_{\gamma_{r_0}}$$ 
on $\clm_0$, where $Ad_{\gamma_0}$ is defined in Theorem 3.5 in \item{[23]} using invariance property 
$$\psi_0 \beta_{r_0} \otimes \beta_{r_0}=\psi_0$$ 
on $\IM$, which is identified as before with $\tilde{\mbox{UHF}_d} \otimes\mbox{UHF}_d$. Alternatively, we can use the same argument directly with $Ad_{r_{\zeta}}$ using our hypothesis that $\omega$ is reflection positive with the twist $r_0$ to conclude $Ad_{r_{\zeta}}$ on $\pi(\mbox{UHF}_d)''$ is $I$.  
\end{proof}

\vsp 
\begin{pro} 
Let $\omega$ be a translation invariant real lattice symmetric pure state of $\IM$ with a twist $r_0$ and $\gamma_{r_{\zeta}}$ be the unitary operator described in Proposition 4.1. If $\omega$ is also reflection positive with the twist $r_0$ then
\be 
\langle \Delta^{-{1 \over 2}}v_i^*\zeta_{\psi},  \beta_{r_0}(v_j^*) \zeta_{\psi} \rangle
=\langle \Delta^{1 \over 2} v_j^*\zeta_{\psi}, \beta_{r_0}(v_i^*) \zeta_{\psi} \rangle
\ee 
\be 
\langle \Delta^{-{1 \over 2}} \sigma_t(v_i^*) \zeta_{\psi}, \beta_{r_0}(\sigma_s(v_j^*)) \zeta_{\psi} \rangle
=\langle \sigma_s(v^*_j) \zeta_{\psi}, \Delta^{1 \over 2} \beta_{r_0}(\sigma_t(v_i^*)) \zeta_{\psi} \rangle \forall s,t \in \IR
\ee
\be 
\langle \Delta^{-{1 \over 2}}  (v^{\delta_1}_i)^*  \zeta_{\psi}, \beta_{r_0}((v^{\delta_2}_j)^*) \zeta_{\psi} \rangle
=\langle (v^{\delta_2}_j)^*\zeta_{\psi}, \Delta^{1 \over 2} \beta_{r_0}((v^{\delta_1}_i)^*) \zeta_{\psi} \rangle \forall \delta_1,\delta_2 > 0
\ee
and 
\be 
\langle \zeta_{\psi}, v^{\delta_1}_i \sigma_{i(y-{1 \over 2})}(\beta_{r_0}((v^{\delta_2}_j)^*)) \zeta_{\psi} \rangle
=\langle \zeta_{\psi}, v^{\delta_2}_j \sigma_{i(y+{1 \over 2})}(\beta_{r_0}((v^{\delta_1}_i)^*))\zeta_{\psi} \rangle
\ee
for all $1 \le i,j \le d,\;t \in \IR$ and $s=1$, where 
$$x^{\delta} = { 1 \over \sqrt{2\pi}} \int_{\IR} exp^{ -{1 \over 2} {t^2 \over \delta^2}} \sigma_t(x)dt$$ for $\delta> 0$ and $z \raro \sigma_z(x)$ is the analytic extension of $t \raro \sigma_t(x)$ for an analytic element $x \in \clm$ to $\IC$.  
\end{pro}

\vsp 
\begin{proof} 
We recall, $x\zeta_{\psi}$ is an element in the domain of $\Delta^{1 \over 2}$ for $x \in \clm$ and $y\zeta_{\psi}$ is an element in the domain of $\Delta^{-{1 \over 2}}$ for $y \in \clm'$ [6]. We also recall that $v_i^*\zeta_{\psi} = \tilde{v}_i^*\zeta_{\psi}$ and thus both sides of equalities in (46)-(49) are well defined. We need to establish those equalities.

\vsp 
The element $\sigma_t(v_i) \clj \beta_{\bar{r_{\zeta}}}(\sigma_s(\tilde{v}_j)) \clj \in \clm_0$, i.e. $(\beta_z:z \in H )$ invariant element in $\clm$ as $\beta_z(v_i)=zv_i$ and $\beta_z(\clj \beta_{\bar{r_{\zeta}}}(\tilde{v}_j) \clj) = \bar{z} \clj \beta_{\bar{r_{\zeta}}}(\tilde{v}_j) \clj$, where we used commuting property of $(\beta_z;z \in H )$ with the modular group $(\sigma_t)$ on $\clm$ and commuting property of $(u_z:z \in H)$ with $\clj$ as $\beta_z=Ad_{u_z}$ and $\beta_z(\tilde{v}_j) = z\tilde{v}_j$ for $z \in H$. 

By our hypothesis that $\omega$ is reflection positive with twist $r_0$, any element in $\clm_0$ is $Ad_{\gamma_{r_{\zeta}}}$ invariant by Proposition 4.2. So we have the following equality for any $1 \le i,j \le d$:
$$\sigma_t(v_i) \clj \beta_{\bar{r_{\zeta}}}(\sigma_s(\tilde{v}_j))\clj$$
$$=Ad_{\gamma_{\bar{r_{\zeta}}}}(\sigma_t(v_i))Ad_{\gamma_{r_{\zeta}}}(\clj\beta_{\bar{r_{\zeta}}}(\sigma_s(\tilde{v}_j))\clj)$$
$$=\clj \beta_{\bar{r_{\zeta}}}(\sigma_t(\tilde{v}_i))\clj \sigma_s(v_j)$$ 
where we used again modular group commutes with any automorphism that preserves the faithful normal state $\phi$ on $\clm$. 
Since $r_{\zeta}=\zeta r_0$, we get
\be 
\sigma_t(v_i) \clj \beta_{\bar{r_0}}(\sigma_s(\tilde{v}_j)) \clj = \clj \beta_{\bar{r_0}}(\sigma_t(\tilde{v}_i))\clj \sigma_s(v_j)
\ee
for all $s,t \in \IR$. 
\vsp 
Now we use $x^*\zeta_{\psi}=Sx\zeta_{\psi} =\clj \Delta^{1 \over 2}x\zeta_{\psi}$ for
$x \in \clm$ and $y^*\zeta_{\psi} = \clj \Delta^{-{1 \over 2}}y\zeta_{\psi}$ for $y \in \clm'$ to verify (46) as we can compute the following equalities at ease:
$$\langle \Delta^{-{1 \over 2}} \tilde{v}_i^* \zeta_{\psi}, \beta_{r_0}(v_j^*) \zeta_{\psi}\rangle $$
$$=\langle  v_i^* \zeta_{\psi},\Delta^{-{1 \over 2}} \beta_{r_0}(v_j^*) \zeta_{\psi}\rangle$$
$$=\langle  v_i^* \zeta_{\psi},\Delta^{-{1 \over 2}} \beta_{r_0}(\tilde{v}_j^*) \zeta_{\psi}\rangle $$
$$=\langle  v_i^* \zeta_{\psi},\clj \clj \Delta^{-{1 \over 2}} \beta_{r_0}(\tilde{v}_j^*) \zeta_{\psi}\rangle $$
$$=\langle \zeta_{\psi}, v_i \clj \beta_{\bar{r_0}}(\tilde{v}_j) \clj \zeta_{\psi} \rangle $$
( now by (50) )
$$=\langle \zeta_{\psi}, \clj \beta_{\bar{r_0}}(\tilde{v}_i) \clj v_j \zeta_{\psi} \rangle $$
$$=\langle \clj \beta_{\bar{r_0}}(\tilde{v}^*_i) \clj \zeta_{\psi}, v_j\zeta_{\psi} \rangle$$
$$=\langle \clj \beta_{\bar{r_0}}(v^*_i) \zeta_{\psi}, v_j\zeta_{\psi} \rangle$$
$$=\langle \clj v_j  \zeta_{\psi}, \beta_{r_0}(v^*_i)\zeta_{\psi} \rangle$$
(we used conjugate linear property of $\clj$)
$$=\langle \clj \clj \Delta^{1 \over 2}v^*_j  \zeta_{\psi}, \beta_{r_0}(v^*_i)\zeta_{\psi} \rangle$$
$$=\langle \Delta^{1 \over 2} v_j^* \zeta_{\psi}, \beta_{r_0}(v^*_i) \Delta^{-{1 \over 2}}\zeta_{\psi} \rangle$$
We can verify (47) along the same line. The equality (48) is a simple consequence of (47).
For (49), we recall that $x^{\delta}$ is an analytic element for the modular group $(\sigma_t)$ for any $x \in \clm$ or $\clm'$ and $\delta > 0$.
\end{proof}

\vsp 
\begin{thm}
Let $g \raro v^i_j(g)$ be an irreducible representation of $SU_2(\IC)$ and the state $\omega$ in Proposition 4.1 be also $SU_2(\IC)$ invariant. Then the following holds:

\NI (a) $d$ is an odd integer;

\NI (b) $\Delta=I$ and $\clm=\clm_0$ is a finite type-I factor and $\phi$ is the normalised trace on $\clm$; 

\NI (c) $H$ is the trivial subgroup of $S^1$ and $Ad_{\gamma_{r_{\zeta}}}=I$; 

\NI (d) $v_i^* = \beta_{r_{\zeta}}(v_i)$;

\NI (e) There exists an irreducible representation $g \raro \hat{u}(g) \in \clm$ such that 

\be 
\hat{u}(g)v_i^*\hat{u}(g)^*=\beta_{u(g)}(v_i^*)
\ee
and representation $g \raro \hat{u}(g)$ is an odd integer or even integer representation of $SU_2(\IC)$. 
\end{thm}

\begin{proof}

By Clebsch-Gordan theory valid for irreducible representation $g \raro u(g)$
of the group $SU_2(\IC)$, the representation $g \raro u(g) \otimes \bar{u(g)}$ 
in $\IC^d \otimes \IC^d$ admits a unique invaiant subspace. The state $\omega$ being $SU_2(\IC)$-invariant, the vectors $\langle \zeta_{\psi}, S_i^*S_j \zeta_{\psi}\rangle$ and $\langle \zeta_{\psi}, S_jS^*_i \zeta_{\psi}\rangle$ are  
$g \raro \bar{u(g)} \otimes u(g)$ invariant by Proposition 2.5 (a) and thus 
$$\langle \zeta_{\psi}, S_iS_j^*\zeta_{\psi} \rangle = {\delta^i_j \over d}$$

\vsp 
The state $\omega$ being $SU_2(\IC)$-invariant, the vector $\langle \zeta_{\psi}, v_i\Delta^{s}v^*_j \zeta_{\psi} \rangle$ is also $g \raro u(g) \otimes \bar{u(g)}$ invariant for any real $s$ since $\Delta$ commutes with $\hat{u(g)}$ for all
$g \in SU_2(\IC)$ by Proposition 2.5 (c). 

\vsp 
For the time being, we fix $\delta > 0$ and simplify notation $v_i^{\delta}$ for $v_i$ and compute that  
\be 
\langle \zeta_{\psi}, v_i\Delta^{y}v_j^* \zeta_{\psi} \rangle = \delta^i_j c_y
\ee
for some positive constant $c_y$ independent of $i,j$. 

\vsp 
Now we use (52) in the equality (49) for $y={1 \over 2}$ to conclude that
$$ 
(r_0)^j_i \langle \zeta_{\psi} v_i v_i^* \zeta_{\psi} \rangle
=(r_0)^i_j \langle \zeta_{\psi} v_j \Delta v_j^* \zeta_{\psi} \rangle
$$ 
for all $1 \le i,j \le d$. Since $r_0^2=I$, i.e. $r_0=r_0^*$ and each row or column vector is non zero, we conclude that
\be 
\langle \zeta_{\psi} v_i v_i^* \zeta_{\psi} \rangle
=\langle \zeta_{\psi}v_j \Delta v_j^* \zeta_{\psi} \rangle
\ee 
for some $i,j$ and hence for all $1 \le i,j \le d$ by (52). Similarly, we also use (52) in the equality (49) for $y=-{1 \over 2}$ to conclude that
\be 
\langle \zeta_{\psi} v_i \Delta^{-1}v_i^* \zeta_{\psi} \rangle
=\langle \zeta_{\psi}v_j v_j^* \zeta_{\psi} \rangle
\ee 
for some $i,j$ and hence for all $1 \le i,j \le d$ by (52).

\vsp 
So we have by (53) and (54)
$$||[\Delta^{1 \over 2}v_i^*-\Delta^{-{1 \over 2}}v_i^*]\zeta_{\psi}||^2$$
$$=||\Delta^{1 \over 2} v_i^*\zeta_{\psi}||^2+||\Delta^{-{1 \over 2}}v_i^*\zeta_{\psi}||^2-2||v_i^*\zeta_{\psi}||^2$$
$$=0$$

\vsp 
By separating property for $\zeta_{\psi}$ for $\clm$, we get
$$\Delta v_i^*\Delta^{-1} = v_i^*$$
i.e. $\Delta$ commutes with each $v_i^*$. 
$\Delta$ being self-adjoint, $\Delta$ also commutes with each $v_i$ i.e. $\Delta$ commutes with each $v_i^{\delta}$ for any $\delta > 0$ once we remove simplified notation. Thus $\Delta$ commutes with each $v_i$ and so $\Delta \in \clm'$. Since $\clj \Delta \clj = \Delta^{-1}$, we also conclude $\Delta \in \clm$. The von Neumann algbera $\clm$ being a factor and $\Delta \zeta_{\psi}=\zeta_{\psi}$,
we conclude $\Delta=I$.

\vsp 
We claim that $\clm$ is a finite type-I factor rather than a type-$II_1$ finite factor.
Suppose not. Then $\clm_0$ is also a type-$II_1$ finite factor. The von-Neumann factor $\clm_0$ being the corner of $\pi_{\omega}(\mbox{UHF}_{d})''$ by $P \in \pi_{\psi}(\mbox{UHF}_d)''$, $\pi_{\omega}(\mbox{UHF}_{d})''$ is also a type-II von-Neumann factor. 

\vsp 
We consider the GNS space $(\clh_{\omega_R},\pi_{\omega_R},\zeta_{\omega_R})$ associated with 
$(\IM_R,\omega_R)$. So $\pi_{\omega_R}(\IM_R)''$ is a type-II factor and $\zeta_{\omega_R}$ is cyclic for $\pi_{\omega_R}(\IM_R)''$ in $\clh_{\omega_R}$. We will rule out the following two possible cases: As in (17), we identify $\IM_R$ with $\mbox{UHF}_d$ with respect to a orthonormal basis $(e_i)$ for $\IC^d$. 

\vsp 
\NI (i) $\pi_{\omega_R}(\mbox{UHF}_d)''$ is a type-II$_1$ factor.  

\vsp 
In such a case $\pi_{\omega_R}(\mbox{UHF}_d)''$ admits a unique tracial state say $\omega_0$ \cite{[Dix]}. Since $\omega_0 \Lambda$ is also a tracial state on $\pi_{\omega_R}(\mbox{UHF}_d)''$, we get by unisqueness of tracial state, $\omega_0 = \omega_0 \Lambda$. But $\omega \Lambda = \omega$ on $\pi_{\omega_R}(\mbox{UHF}_d)''$ and $\omega$ is a factor state, in particular, an ergodic state i.e. unique invariant state of right translation dynamics $(\mbox{UHF}_d,\theta)$. Thus $\omega = \omega_0$ on $\mbox{UHF}_d$. So $\omega$ is the unique trace on $\IM$, contradicting our hypothesis that $\omega$ is pure.  

\vsp 
\NI (ii) $\pi_{\omega_R}(\mbox{UHF}_d)''$ is a type-II$_{\infty}$ factor. 

\vsp 
In this case, $P$ is a finite projection in $\pi_{\omega_R}(\mbox{UHF}_d)''$ and $\clm_0=P\pi_{\omega_R}(\mbox{UHF}_d)''P$ is type-II$_1$ factor and $\pi_{\omega}(\mbox{UHF}_d)''$ is isomorphic to $\clm_0 \otimes \clb(\clh)$, where $\clm_0$ is type-II$_1$ factor acting on $\clk$ and $\clh$ is an infinite dimensional Hilbert space \cite{[Dix]}.  
Note that $\clm_0'$ is also a type-II$_1$ factor and $\pi_{\omega_R}(\mbox{UHF}_d)'$ is isomorphic to $\clm_0'$.     

\vsp 
More generally, we claim that the commutant of $\Lambda^n(\pi_{\omega_R}(\mbox{UHF}_d))''$ 
is also a type-II$_1$ factor isomorphic to $\clm_0 ' \otimes \{ S_IS_J^*:|I|=|J|=n \}''$. 
That the type-II$_1$ factor $\Lambda^n(\pi_{\omega_R}(\mbox{UHF}_d))'$ contains $\clm_0 ' \otimes \{S_IS_J^*:|I|=|J| \}''$ is obvious. The factor being a hyperfinite type-II$_1$ factor, we may write $\pi_{\omega_R}(\Lambda^n(\mbox{UHF}_d))' = \cln_0 \otimes \{S_IS_J^*:|I|=|J|=n \}''$ for some type-$II_1$ factor $\cln_0$ and $\clm_0' \subseteq \cln_0$. For the reverse inclusion, if $X \in \cln_0$ then $X \in 
\pi_{\omega_R}(\mbox{UHF}_d)'$ and so $X \in \clm'_0$.  

\vsp 
So $\clm_0' \otimes \pi_{\omega}(\mbox{UHF}_d)''$ admits a tracial state and it is a 
type-II$_1$ factor. However, Cuntz relation (16) gives 
$$\bigcap_{n \ge 1} \Lambda^n(\pi_{\omega_R}(\mbox{UHF}_d))'' \subseteq \pi_{\omega_R}(\mbox{UHF}_d)'' \bigcap \pi_{\omega}(\mbox{UHF}_d)'$$ 
So by the factor property of $\omega_R$, we also have 
$\clm'_0 \otimes \pi_{\omega}(\mbox{UHF}_d)''$ is the algebra of all bounded 
operators on $\clh_{\omega_R}$. This brings a contradiction. 

\vsp 
That, $d$ can not be an even integer, is given in \cite{[20]} since 
$\omega_R(\mbox{UHF}_d)''$ is a type-I factor. It also follows by a more general result \cite{[24]}, where we could drop additional assumption that $\omega$ is reflection positive with the twist $r_0$ but here (a) is valid for reflection positive with twist case.   

\vsp 
Thus $d$ is an odd integer and $\omega_R$ is a type-I factor state of $\mbox{UHF}_d$ with its corner $P\pi_{\omega}(\mbox{UHF}_d)''P$ equal to a finite type-I factor $\clm$ and 
by Proposition 2.2 (e) in \cite{[24]} we have 
$$\pi_{\psi}(\clo_d)''=\pi_{\psi}(\mbox{UHF}_d)''$$ 
Since $\beta_z(S_i)=zS_i$ for any $z \in H$ but $\beta_z(X)=X$ for all $X \in \pi_{\psi}(\mbox{UHF}_d)''=\pi_{\psi}(\clo_d)''$. So we have $S_i=zS_i$ 
each $1 \le i \le d$ and $z \in H$. This shows $z=1$ as $S_i^*S_i=I$ for each $1 \le i \le d$. Thus $H=\{1\}$.    

\vsp 
This also shows that $\clm_0=\clm$ since $\clm_0=P\pi_{\psi}(\mbox{UHF}_d)''P$ and $\clm=P\pi_{\psi}(\clo_d)''P$. Thus $Ad_{\gamma_{r_{\zeta}}}=I$ on $\clm$ as well and so 
$$v^*_i=Ad_{\gamma_{r_{\zeta}}}(v^*_i) $$
$$= \clj \beta_{\bar{r_{\zeta}}}(\tilde{v}^*_i) \clj$$
$$= \beta_{r_{\zeta}}(v_i)$$
since $\Delta=I$ and so $\tilde{v}^*_i = \clj v_i \clj$ for $1 \le i \le d$

\vsp 
We are left to prove the last statement (e). The factor $\clm$ being type-I and $SU_2(\IC)$ being simply connected, first part follows by a standard result in representation theory \cite{[18]}. 

\vsp 
We will prove now that the group action $\alpha_g:x \raro \hat{u}(g)x\hat{u}(g)^*$ on $\clm$ is ergodic i.e. there exists no no-trivial invariant element for the group action. Let $(p_i:1 \le i \le m)$ be a maximal set of orthogonal minimal projections in $\clm_G=\{x \in \clm:\alpha_g(x)=x,\;\forall g \in SU_2(\IC) \}$ i.e. elements in $\clm$ that are invariant for the group action $(\alpha_g:g \in SU_2(\IC))$ and $u$ be a unitary element in 
$\clm$ invariant for the group action $\alpha_g$ as well. 

\vsp 
So $V$ as well as $V_{u}=(uv_iu^*)$ satisfies the inter-twinning relation (28). By the uniqueness of Clebs-Gordon coefficients, we get 
\be 
p_iuv_ku^*p_j=c^i_j(u)p_iv_kp_j
\ee
for all $1 \le k \le d$ with some scalers $c^i_j(u) \in \IC$.   

\vsp 
We compute now the following 
\be 
\sum_{1 \le k \le d}p_iuv_ku^*p_juv_k^*u^*p_i = |c^i_j(u)|^2 \sum_{1 \le k \le d}p_iv_kp_jv_k^*p_i
\ee

\vsp 
Since $\sum_k v_kp_jv^*_k$ is also $\alpha_g$-invariant, the left hand side of (56) is independent of $u$, is in the centre of $\clm_G$ and so $|c^i_j(u)|=1$.
By (d), we also have $v_k^*= \beta_{r_0}(v_k)$ and the family of vectors $\{p_iv_kp_j\zeta_{\psi}:1 \le k \le d \}$ are mutually orthogonal for each fix $1 \le i,j \le d$ 
( as $(\phi(p_iv_kp_jv^*_lp_i))$ is a invariant vector for the representation $g \raro u(g)\otimes \overline{u(g)}$ of $SU_2(\IC)$ ), $c^i_j(u)$ is a real number and so either equal to 
$1$ or $-1$. The set of invariant unitary elements in the centre of $\clm_G$ is a connected set and the map $u \raro c^i_j(u)$ is continuous. Thus $c^i_j(u)=1$. So we get $uv_ku^*=v_k$ for all $k$ i.e. $u \in \clm'$. Since $\clm$ is a factor, we conclude that $u$ is a scaler multiple of identity element of $\clm$. This shows that $\clm_G$ is a subfactor of finite type-I factor $\clm$. 

\vsp 
Without loss of generality we write $\clm = \clm'_G \otimes \clm_G$ and $\hat{u}(g)=\hat{u}(g) \otimes I_{\clm_G}$ for all $g \in SU_2(\IC)$ and $\clm'_G = \{\hat{u}(g): g \in SU_2(\IC) \}''$. 

\vsp 
If $u \in \clm_G$ in (56), is only an element that commutes with each $(p_i)$, then the left hand side of (56) is also independent of $u$ since $G$-invariant element $p_i\tau(p_j)p_i$ is a scaler multiple of $p_i$ as each $p_i$ is a minimal projection in $\clm_G$. Now we follow the same argument used above to conclude that $u$ is a scaler multiple of identity operator. Thus $\clm_G$ is trivial. This completes the proof for irreducibility of the representation $g \raro \hat{u}(g)$ of $SU_2(\IC)$.

\vsp 
That the dimension of the representation $g \raro \hat{u}(g)$ could be an odd integer or even integer follows once we appeal to Clebsch-Gordon theorem for 
$$u_s(g) \otimes u_t(g) \equiv u_{|t-s|}(g) \oplus u_{|t-s|+1}(g) \oplus ..\oplus u_{s+t}(g)$$
to verify that for any integer value of $s$, $u_t(g)$ is present in the decomposition of $u_s(g) \otimes u_t(g)$ irrespective of the value $t$ that could be either an integer spin 
or ${1 \over 2}$-odd integer spin representation.  
\end{proof}

\vsp 
Now we sum up our main result of this section in the following theorem with a natural 
generalisation.  

\vsp 
\begin{thm} 
Let $G$ be a simply connected group and $g \raro u^i_j(g)$ is a $d$-dimensional irreducible representation of $G$ such that $g \raro u(g) \otimes \bar{u(g)}$ admits a unique one dimensional invariant subspace in $\IC^d \otimes \IC^d$ and $\omega$ be a real, lattice reflection symmetric with a twist $r_0$, translation invariant pure state of $\IM$. If $\omega$ is also $G$-invariant and reflection positive with the twist $r_0$ then there exists an extremal element $\psi \in K_{\omega}$ so that its associated elements in Proposition 2.5 $(\clk,\clm,v_k:1 \le k \le d )$ satisfies the following:

\NI (a) $\Delta=I$ and $\clm=\clm_0$ is a finite type-I factor; 

\NI (b) $H$ is the trivial subgroup of $S^1$ and $Ad_{\gamma_{r_{\zeta}}}=I$; 

\NI (c) For each $1 \le i \le d$, we have $v_i^* = \beta_{r_{\zeta}}(v_i)$;

\NI (d) Two-point spatial correlation functions of $\omega$ decay exponentially.   

\end{thm}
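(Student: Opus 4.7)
The plan is to mimic Proposition 4.2 for items (a), (c), (d), and then extract (e) from the resulting finite type-I factor structure of $\clm$. I would start with the extremal $\psi \in K_{\omega}$ produced by Proposition 3.1 (which uses the reality and lattice-symmetry hypotheses), observe that Proposition 2.5 (e) applies because $G$ is simply connected and therefore $\psi_0 \beta_{u(g)} \otimes \beta_{u(g)} = \psi_0$ for all $g \in G$, and then use the hypothesis on $u \otimes \bar{u}$ as a direct replacement for $SU_2(\IC)$ Clebsch–Gordan. Since $\hat{u}(g)$ commutes with $\Delta$ by Proposition 2.5 (c), the matrix $\langle \zeta_{\psi}, v_i \Delta^s v_j^* \zeta_{\psi} \rangle$ is $g \raro u(g) \otimes \bar{u(g)}$-invariant in $\IC^d \otimes \IC^d$, and uniqueness of the one-dimensional invariant subspace forces it to be scalar, i.e.\ $\langle \zeta_{\psi}, v_i \Delta^s v_j^* \zeta_{\psi} \rangle = \delta^i_j c_s$. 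Substituting this into Proposition 4.1 (equation (49)) and repeating the norm computation of Proposition 4.2 gives $\|\Delta^{2s} v_i^* \zeta_{\psi}\| = \|v_i^* \zeta_{\psi}\|$ for every real $s$, so $\Delta$ commutes with each $v_i$; combined with $\clj \Delta \clj = \Delta^{-1}$ and factoriality of $\clm$ this yields $\Delta = I$.

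With $\Delta = I$ the state $\phi$ is tracial on the factor $\clm$, hence $\clm$ is a finite factor. Purity of $\omega$ rules out the type-II$_1$ case by the result cited from [20], so $\clm$ is a finite type-I factor. Proposition 2.3 (c) then yields $\pi_{\psi}(\clo_d)'' = \pi_{\psi}(\mbox{UHF}_d)''$, and $\beta_z(S_i) = zS_i$ forces $H = \{1\}$; consequently $\clm_0 = \clm$, Proposition 3.3 gives $Ad_{\gamma_{r_{\zeta}}} = I$ on $\clm$, and applying this to $v_i^*$ together with $\tilde{v}_i^* = \clj v_i \clj$ (valid because $\Delta = I$) yields $v_i^* = \beta_{r_{\zeta}}(v_i)$. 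This establishes (a), (c), (d).

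For (e), the finite-dimensionality of $\clm$ makes $\omega$ a finitely correlated state in the Fannes–Nachtergaele–Werner sense. Using Proposition 2.3 (b), the connected two-point function $\omega(Q_1 \theta^n(Q_2)) - \omega(Q_1)\omega(Q_2)$ for local $Q_1, Q_2$ and $n$ larger than their supports takes the form $\langle \zeta_{\psi}, V_1 (\tau^n - E_\phi)(V_2) \zeta_{\psi} \rangle$, where $V_i \in \clm$ depend only on $Q_i$, $\tau(a) = \sum_k v_k a v_k^*$ is the transfer operator on the finite-dimensional factor $\clm$ with faithful invariant trace $\phi$, and $E_\phi$ is the conditional expectation onto the fixed points of $\tau$. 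Factoriality of $\clm$ together with the triviality $H = \{1\}$ forces $\tau$ to be ergodic and aperiodic, so by finite-dimensional Perron–Frobenius for completely positive maps the spectrum of $\tau$ on the kernel of $E_\phi$ lies in some disk $\{|z| \le r\}$ with $r < 1$; the connected two-point function is therefore $O(r^n)$, giving Definition 1.1. The main obstacle — the one genuinely new input beyond direct invocation of Proposition 4.2 — is ruling out non-trivial peripheral spectrum of $\tau$: simplicity of the eigenvalue $1$ is immediate from factoriality of $\clm$ and uniqueness of the $\tau$-invariant trace, but absence of other unimodular eigenvalues relies on the standard correspondence between the peripheral spectrum of $\tau$ and the gauge subgroup $H \subseteq S^1$ from Proposition 2.2, combined with the triviality of $H$ established above.
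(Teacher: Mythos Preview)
Your treatment of (a), (c), (d) matches the paper's own proof, which explicitly says these follow ``by simple inspection of the proof'' of Proposition~4.2, using only the unique one-dimensional invariant subspace of $u \otimes \bar u$ and simple connectedness of $G$ in place of the $SU_2(\IC)$ Clebsch--Gordan input.

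For (e) the two routes diverge. The paper does not invoke Perron--Frobenius for completely positive maps or any correspondence between the peripheral spectrum of $\tau$ and $H$. Instead it uses part (d) directly: the relation $v_i^* = \beta_{r_\zeta}(v_i)$ together with unitarity of $r_\zeta$ gives $\sum_i v_i^* y v_i = \sum_j v_j y v_j^* = \tau(y)$, so the transfer operator $T:x\zeta_\psi \mapsto \tau(x)\zeta_\psi$ is \emph{self-adjoint} on the finite-dimensional space $\clk$. With $T=T^*$ and $\|T\|\le 1$ the peripheral spectrum is automatically in $\{-1,1\}$, and the paper only needs that the eigenvalue $1$ of $T^2$ is simple; this it obtains from the weak clustering $\tau^{2n}(x)\to\phi(x)I$, a consequence of $\omega$ being a factor state on the quasi-local algebra. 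Your aperiodicity argument via $H=\{1\}$ is correct (a unitary peripheral eigenvector $u$ with $\tau(u)=zu$ satisfies $v_i u = z u v_i$, so $\mathrm{Ad}_u$ implements $\beta_{\bar z}$ on $\clm$ and, $\phi$ being tracial, $\psi\beta_{\bar z}=\psi$, i.e.\ $\bar z\in H$), but this is not what Proposition~2.2 literally states; it is rather extracted from the BJKW framework of reference~[8]. The paper's approach is more self-contained: self-adjointness of $T$ is a free consequence of (d), which you have already established, and it dissolves the aperiodicity question entirely.
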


\vsp 
\begin{proof}
First part of the statement is a simple generalisation of Theorem 4.2 and its proof follows by simple inspection of the proof where we have used those properties of the representation rather than explicit use of it. 

\vsp 
Now we consider the contractive operator $Ta\zeta_{\psi}= \tau(a)\zeta_{\psi}, a \in \clm$, where $\tau(a)= \sum _i v_i a v_i^*,x \in \clm$ and the tracial state $a \raro  \langle \zeta_{\psi}, a \zeta_{\psi} \rangle$ on the finite matrix algebra $\clm$ is invariant for $\tau$. The equality in (c) in particular says that $T$ is also self adjoint and so $T^2$ is positive. 

\vsp 
Thus the exponentially decaying property of two point spatial correlation would be 
related with the mass gap in the spectrum of $T^2$ from $1$ once we show that any inviant vector of $T^2$ is a scaler multiple of $\zeta_{\psi}$. 

\vsp 
Let $f$ be an invariant vector for $T^2$. Then we get 
$$\langle f,  a \zeta_{\psi} \rangle$$
$$=\langle T^{2n} f, a \zeta_{\psi} \rangle$$
$$=\langle f, \tau^{2n}(a) \zeta_{\psi} \rangle$$  
for all $n \ge 1$ and $a \in \clm$. Taking $n \raro \infty$, we conclude
that $$\langle f, a \zeta_{\psi} \rangle = \langle 
\zeta_{\psi}, a \zeta_{\psi} \rangle \langle f, \zeta_{\psi} \rangle$$ 
for all $a \in \clm$ i.e. $f=0$ if $f$ is orthogonal to $\zeta_{\psi}$.

\vsp 
Let $0 \le \delta < 1$ and $\delta^2$ be the highest eigen value of $T^2-|\zeta_{\psi}\rangle \langle \zeta_{\psi}|$ and $\beta > 0$ so that $e^\beta \delta < 1$. So we have 
$||T-|\zeta_{\psi} \rangle \langle \zeta_{\psi}||| \le \delta I$ and for any $A,B \in \IM$
$$e^{\beta n}|\omega(A\theta^n(B))-\omega(A)\omega(B)|$$
$$= e^{\beta n}|\langle a^*\zeta_{\psi}, [T-|\zeta_{\psi}\rangle \langle \zeta_{\psi}|]^n b\zeta_{\psi}\rangle|$$
$$\le (e^\beta \delta)^n ||a|| ||b|| \raro 0$$
as $n \raro \infty$, where $a=P\pi_{\psi}(A)P$ and $b=P\pi_{\psi}(B)P$ are elements in $\clm$ and $e^\beta \delta < 1$.  
 
\end{proof}

\section{Ground states of Hamiltonian in quantum spin chain }

\vsp 
We are left to discuss few motivating examples for this abstract framework, developed so far to study symmetries of Hamiltonian $H$ that satisfies (3) and (14). Before we take few specific examples, we recall some well known results in the following proposition for our reference and its conquences in light of results proved in section 3 and 4.     

\vsp 
\begin{pro} 
Let $H$ be a Hamiltonian in quantum spin chain $\IM= \otimes_{\IZ} \!M_d(\IC)$ 
that satisfies relation (3) with $h_0 \in \IM_{loc}$. Then the following statements are true:

\vsp 
\NI (a) There exists a unique KMS state $\omega_{\beta}$ for $(\alpha_t)$ at each inverse positive temperature $\beta={1 \over kT} > 0$ and $\omega_{\beta}$ is a translation invariant factor state of $\IM$. 

\vsp 
\NI (b) If $H$ also satisfies relation (14) with $J > 0$ and $r_0 \in U_d(\IC)$, then the unique KMS state $\omega_{\beta}$ is reflection positive with twist $r_0$. Furthermore, any weak$^*$ limit point of $\omega_{\beta}$ as $\beta \raro \infty$ is also reflection positive with twist $r_0$; 

\vsp 
\NI (c) If $H$ is also $SU_2(\IC)$ -invariant i.e. $\beta_{u(g)}(h_0)=h_0$ and $\omega$ be a low temperature limit point 
ground state for $H$ described in (b) then $\omega \in S_{\theta,SU_2(\IC) \otimes \IZ_2,+}$ and $\omega=\int^{\oplus}_{X} \omega^{\alpha}d\mu(\alpha)$ be its extremal decomposition in $S_{\theta,SU_2(\IC) \otimes \IZ_2,+}$. Then the following hold:

\NI (i) For any odd integer $d \ge 3$, $\mu$-allmost everywhere, $\omega^{\alpha}$ are pure 
ground states of $H$. 
 
\NI (ii) If $H$ is also real then for any even integer $d \ge 2$, ground states $\omega^{\alpha}$ are not even  extremal elements in the convex set of translation invariant states of $\IM$ for a $\mu$-positive Borel set
of $\alpha$.  
 
\end{pro}

\begin{proof} 
For (a), we refer to H. Araki's work \cite{[4]} and also \cite{[17]}. For the first statement in (b), we refer to \cite{[12]}. Last part of (b) is trivial as reflection positive property (11) is closed under weak$^*$ limit. 

\vsp 
For the statement of (c), we use Therem 3.7. In particular, $SU_2(\IC)$-invariance ensures that $\mu$-almost everywhere $\omega^{\alpha}$ are stationary states for the Hamitonian dynamics $H$ for which $\beta_{u(g)}(h_0)=h_0$ for all $g \in SU_2(\IC)$ and so $\mu$-almost everywhere $\omega^{\alpha}$ are as well ground states of $H$, since the set of ground states is a face in the convex set of stationary states \cite{[7]}. 

\vsp 
We use now a standard fact that factor decomposition coincides with extremal decompostion for a ground state of a Hamiltonian $H$ \cite{[7]} and so extreme points are pure states of $\IM$ for odd values of $d$. For even values of $d$, these extreme points are not extremal elements in the convex set of translation invariant states of $\IM$ as otherwise these states would have been factor states of $\IM$ by Proposition 3.6 (a) and so we would have been pure states of $\IM$, contradicting the fact that there exists no pure state that is real, lattice symmetric and
$SU_2(\IC)$-invariant \cite{[24]}.   
\end{proof} 

\vsp
\begin{proof} (Theorem 1.3 ) Proof for (a) is given Proposition 5.1 (c) and for (b) we recall that there is no real, lattice symmetric, $SU_2(\IC)$ and translation invariant pure state of $\IM$ for even values of $d$ \cite{[24]}.
\end{proof}

\vsp 
For an even integer, such extremal elements $\omega_{\alpha} \in S_{\theta,G,+}$ in the decompostion given in (ii) of Proposition 5.1 (c) is also real but far from being extremal in the convex set of translation invariant states of $\IM$.  Nevertheless, we have
$$4\omega_{\alpha} = \omega'_{\alpha} + \bar{\omega'_{\alpha}},$$
where $\omega'_{\alpha} = \omega^1_{\alpha} + \tilde{\omega^1}_{\alpha} \beta_{r_{\zeta}}$
for some translation invariant ergodic states $\omega^1_{\alpha}$ of $\IM$.
In particular, this shows that translation invariant ergodic state $\omega^1_{\alpha}$ are $SU_2(\IC)$ invariant ground states of $H$ for $\mu$-almost everywhere but fails to be reflection positive with twist $\beta_{r_{\zeta}}$. In other words, spontaneous $\IZ_2 \times \IZ_2$ symmetries
$\omega \raro \bar{\omega}$ or $\omega \raro \tilde{\omega}$ breaks down rather than $SU_2(\IC)$ symmetry \cite{[24]} if these extremal states in $S_{\theta,\IZ_2,+}$ given in Proposition 5.1 (c) for even values of $d \ge 2$
are decomposed further into translation invaiant ergodic states. This feature is
a stricking contast to the classic case of Ghosh Mazumdar model [GM] that fails to be reflection positive with the twist $\beta_{r_0}$. We end this section with the well known example and compare with our main results of this paper.

\vsp
\begin{exam} Ghosh-Majumdar Model \cite{[13]}: The following well known model with $J > $
$$h_0^{GM}=J(\sigma_x^{(0)} \otimes \sigma_x^{(1)} +\sigma_y^{(0)} \otimes \sigma_y^{(1)} +\sigma_z^{(0)} \otimes \sigma_z^{(1)} + {J \over 2}(\sigma_x^{(0)} \otimes \sigma_x^{(2)} + \sigma_y^{(0)} \otimes \sigma_y^{(2)}+\sigma_z^{(0)} \otimes \sigma_z^{(2)}$$
admits two fold degeneracy in its ground states i.e. the model has two pure ground states for $d=2$. These two pure states are $SU_2(\IC)$ invariant but not translation invariant. However their mean state is translation invariant and extremal in the convex set of all translation invariant state. The mean state being the unique state that admits translation and $SU_2(\IC)$ symmetry, it is the unique low temperature limiting state. The Hamiltonian $H^{GM}=\sum \theta^n(h_0^{GM})$ being not of the form given in (14).  The unique transition invaiant ground state is an ergodic but fails to be a factor state. The state is not reflection positive with twist $\beta_{r_0}$. Thus Propsotion 5.1 is
not valid without our assumption that $\omega$ is reflection positive with twist $r_0$.
\end{exam}

\section{Haldane's conjecture:} 

\vsp 
In the last section we consider Heisenberg anti-ferromagnetic model $H^{XXX}$
model with odd integer $d=2s+1$ i.e. integer degrees of freedom $s$ for spin chain
electrons placed in a one dimensional lattice $\IZ$. We will also discuss briefly
Heisenberg anti-feromagnetic model $H_{XXX}$ on higher lattice dimension.

\vsp
If $H_{XXX}$ admits unique ground state then the ground state $\omega_{XXX}$ is pure, translation invariant, $SU_2(\IC)$-invariant, reflection symmtric with twist and positive. Theorem 4.5 says that such a ground state is also finitely corelated and its spatial corelation functions decay exponentially. The following statement is an easy consequence of standard results \cite{[8],[19]}.

\vsp 
\begin{thm} 
Let $\omega_{\beta}$ be the unique thermal equilibrium or KMS factor state at inverse temperature $\beta$ for anti-ferromagnetic $H^{XXX}$ model with odd integer $d=2s+1 \ge 3$ 
($s$ is an integer greater than equal to $1$) and $\omega$ be a limit point of $\omega_{\beta}$ as $\beta \raro \infty$. Then following holds:

\vsp 
\NI (a) Then $\omega=\int\omega_r d\mu(r)$, where $\omega_r$ is the state defined by 
$$\omega_r(e^{i_1}_{j_1} \otimes ...\otimes e^{i_n}_{j_n})=\phi(v_Iv_J^*)$$ 
and $v=(v_i)$ is the unique solution to Clebsch-Gordon equation (52) as described in Theorem 4.2 satisfying (d) and (e) with irreducible representations $g \raro u(g)=u_s(g)$ and $g \raro \hat{u}(g)=u_r(g)$ of $SU_2(\IC)$ with finite $I_{2r+1}$ factor $\clm$ for an integer $r \ge 1$ or half-odd integer. 

\vsp 
\NI (b) In (a) the dimension of $\clm$ i.e. irreducible representation $g \raro \hat{u}_r(g)$ in Theorem 4.2 with dimension $2r+1$ with half odd-integer $r$ or integer $r$ is determined by minimising mean energy of $H^{XXX}$ over all possible solutions to (52) with irreducible representations $g \raro \hat{u}_{l}(g)$ of dimension $2l+1$ with half odd-integer or integer ( each $\omega_l$ is an invariant state for Hamiltonian flow $\hat{\sigma}_t$ of $H^{XXX}$ ) i.e.
\be 
\omega_r(h_0)=\mbox{min}_{l={1\over 2},1, {3 \over 2},.. } \omega_l(h_0)
\ee

\vsp 
\NI (c) If there exist unique $r$ for which $\omega_l(h_0^{xxx})$ attains its minimum then 
$\omega=\omega_r$ i.e. the low temperature limit of $\omega_{\beta}$ as $\beta \raro \infty$ is unique and its limiting value is $\omega_r$.

\end{thm} 

\vsp 
We illustrate our results for $d=2s+1=3$ in the following text for possible further investigation.  

\vsp 
Now we briefly discuss the situation when $d=3$ i.e. $s=1$. In such a case Pauli spin matrices are given by  
\ben
\sigma_x= 2^{-{1 \over 2}} \left (\begin{array}{llll} 0&,&\;\; 1,\;\; 0 \\ 1&,&\;\;0,\;\;1 \\ 0&,&\;\;1,\;\;0
\end{array} \right ),
\een
\ben
\sigma_y = 2^{-{1 \over 2}} \left (\begin{array}{llll} 0&,&\;\; -i,\;\; 0 \\ i&,&\;\;0,\;\;-i \\ 0&,&\;\;i,\;\;0
\end{array} \right ),
\een
\ben
\sigma_z= \left (\begin{array}{llll} 1&,&\;\; 0,\;\; 0 \\ 0&,&\;\;0,\;\;0 \\ 0&,&\;\;0,\;\;-1
\end{array} \right ).
\een
and $i\sigma_x,i\sigma_y,i\sigma_z$ are basis for Lie-algebra $su_2(\IC)$ with 
$$[i\sigma_x,i\sigma_y]=-i\sigma_z,\;\;[i\sigma_y,i\sigma_z]=-i\sigma_x,\;\;[i\sigma_z,i\sigma_x]=-i\sigma_y$$
\vsp 
A direct calculation shows that the inter-twiner $r_{\zeta}=r_0$ is a matrix with real entries given below 
\ben
r_{\zeta} = \left (\begin{array}{llll} 0&,&\;\; 0,\;\; -1 \\ 0&,&\;\;1,\;\;0 \\ -1&,&\;\;0,\;\;0
\end{array} \right ).
\een 
and 
$$h^{xxx}_0= J (\sigma_x \otimes \sigma_x + \sigma_y \otimes \sigma_y + \sigma_z \otimes \sigma_z)$$

\vsp 
By Clebsch-Gordon decomposition of $SU_2(\IC)$ representation $g \raro u_1(g) \otimes u_1(g)$, the 
commutant of $\{u_1(g) \otimes u_1(g):g \in SU_2(\IC) \}$ in $\IM_3(\IC) \otimes \IM_3(\IC)$ is equal to its centre made of orthogonal projections of dimension $1,3,5$. Since $h_0$ commutes with $u_1(g) \otimes u_1(g)$, $h_0$ is in the centre of $\{u_1(g) \otimes u_1(g): g \in SU_2(\IC)\}''$ and so $\omega_l(x h_0) = \omega_l(h_0x)$ for any $x \in \IM_3(\IC) \otimes \IM_3(\IC)$ since $\omega_l$ is $SU_2(\IC)$-invariant of $\IM$. Thus $\omega_l \sigma^{XXX}_t = \omega_l$ for all $t \in \IR$ 
on local elements of $\IM$ and so on $\IM$.

\vsp 
We compute with $J=1$
$$\omega_{l}(h^{xxx}_0) = {1 \over 2} \phi(v_1av^*_2+
v_2av^*_1 + v_3av^*_2 +v_2av^*_3))$$
$$\;\;\;\;\;\;\;\;\;\;\;\;\;-{1 \over 2}\phi(-v_1bv_2^*+v_2bv_1^*+v_3bv_2^*-v_2bv_3^*)$$
$$+\phi(v_1cv^*_1-v_3cv^*_3)$$
(where $a=v_1v^*_2+v_2v_1^*+v_3v^*_2+v_2v_3^*$, $b=-v_1v_2^*+v_2v_1^*+v_3v^*_2-v_2v_3^*$ 
and $c=v_1v^*_1-v_3v^*_3$)
$$={1 \over 2} \phi( (v^*_2v_1+v_1^*v_2+v_3^*v_2+v_2^*v_3)(v_1v_2^*+v_2v_1^*+v_3v_2^*+v_2v_3^*))$$
$$\;\;\;\;\;\;\;\;\;\;\;\;\;-{1 \over 2}\phi( (-v^*_2v_1+v_1^*v_2+v_2^*v_3-v_3^*v_2)(-v_1v_2^*+v_2v_1^*+v_3v_2^*-v_2v_3^*))$$
$$+\phi((v_1^*v_1-v_3^*v_3)(v_1v^*_1-v_3v^*_3))$$
(where we have tracial state property of $\phi$ on $\clm$ ) 
$$=\phi((v_2^*v_1+v_3^*v_2)(v_2v_1^*+v_3v_2^*))$$
$$+\phi((v_1^*v_2+v_2^*v_3)(v_1v_2^*+v_2v_3^*))$$
$$+\phi((v_1^*v_1-v_3^*v_3)(v_1v_1^*-v_3v_3^*))$$

\vsp 
Since $v_1^*=-v_3,\;v_2^*=v_2$ and $v_1v_1^*+v_2v_2^*+v_3v_3^*=I$, we simplify further that
$$\omega_{t}(h^{xxx}_0)=\phi(\delta_{v_2}(v_1)\delta_{v_2}(v_1^*)) + \phi(\delta_{v_2}(v_3)\delta_{v_2}(v_3^*))$$
\be 
+ \phi(((v_1^*v_1-v_3^*v_3)(v_1v_1^*-v_3v_3^*))
\ee
where we have used the symbol $\delta_{v_2}(a)=v_2a-av_2$ for $a \in \clm$. 

\vsp 
Solution to (52) is given by $v_1=l_+,v_3=l_-$ and $v_2=il_z$, where 
$$\sqrt{l(l+1)} l_+ = {1 \over \sqrt{2}}(\bar{\pi}_l(i\sigma_x)+i\bar{\pi}_l(i\sigma_y)),$$
$$\sqrt{l(l+1)} l_- = {1 \over \sqrt{2}}(\bar{\pi}_l(i\sigma_x)-i\bar{\pi}_l(i\sigma_y))$$
and
$$\sqrt{l(l+1)} l_z= \bar{\pi}_l(i\sigma_z)$$
where we used notation $\bar{\pi}_l$ for $2l+1$-dimensional irreducible representation of Lie-algebra $su_2(\IC)$ of the Lie-group $SU_2(\IC)$.  So 
$$l(l+1)\delta_{v_2}(v_1)$$
$$={i \over \sqrt{2}}( [\bar{\pi}_l(i\sigma_z),\pi_l(i\sigma_x)] +i [\bar{\pi}_l(i\sigma_z),\bar{\pi}_l(i\sigma_y)])$$
$$={i \over \sqrt{2}}(\bar{\pi}_l(-i\sigma_y)+i(\bar{\pi}_l(i\sigma_x))$$
$$={1 \over \sqrt{2}}( -i\bar{\pi}_l(i\sigma_y) - \bar{\pi}_l(i\sigma_x)$$
$$=-\sqrt{l(l+1)}v_1$$
i.e 
$$\sqrt{l(l+1)}\delta_{v_2}(v_1)=-v_1$$

\vsp 
Similarly, we may compute by taking ajoint that 
$$\sqrt{l(l+1)}\delta_{v_2}(v_3)=v_3$$
We also compute that 
$$l(l+1)(v_1v_1^*-v_3v_3^*)$$
$$=-l(l+1)(v_1v_3-v_3v_1)$$
$$=-{ 1 \over 2} [\pi_l(i\sigma_x)+i\pi_l(i\sigma_y),\pi_l(i\sigma_x)-i\pi_l(i\sigma_y)]$$
$$=i[\pi_l(i\sigma_x),\pi_l(i\sigma_y)]$$
$$=i\bar{\pi}_l(-i\sigma_z)$$
$$=-i\bar{\pi}_l(i\sigma_z)$$
i.e. $\sqrt{l(l+1)}(v_1v_1^*-v_3v_3^*) = -v_2$

\vsp 
Now from (59), we get 
$$\omega_l(h^{xxx}_0)= -{1 \over l(l+1)} \phi(v_1v_1^*+v_3v_3^*) - { 1 \over l(l+1)} \phi(v_2v^*_2)$$
$$= -{ 1 \over l(l+1)}\phi(v_1v_1^*+v_2v_2^*+v_3v_3^*)$$
$$=-{ 1 \over l(l+1)}$$

\vsp 
The above computation could have been simplified by using $SU_2(\IC)$ symmtry of the state 
$\omega_l$ to write 
$$\omega_l(h^{xxx}_0)=3\omega_l(\sigma_z \otimes \sigma_z)$$
$$=-{3 \over l(l+1)} \omega(v_2v_2^*)$$
$$=-{ 1 \over l(l+1)}$$

\vsp 
The map $l \raro \omega_l(h^{xxx}_0)$ increases strictly to zero as $l={1 \over 2},1, {3 \over 2},...$ increases to infinity and its minimum value at $l={1 \over 2}$ is $-{4 \over 3}$. This shows that there is an unique limit point in $\omega_{\beta}$ as $\beta \raro \infty$ and limiting value is $\omega_{1 \over 2}$. 

\vsp 
\begin{thm}
Anti-feromagnetic $H^{XXX}$ model with $d=3$ admits unique low temperature limiting ground state and the state is given by $\omega_{1 \over 2}$. 
\end{thm} 

\vsp 
For any arbitary odd values of $d$, for uniqueness of low temperature limit points, we need to prove that the mean energey of $H^{xxx}$ i.e. $\omega_l(h^{xxx}_0)$ gets minimised by a unique state $\omega_r$ for some ${1 \over 2}$-odd integer or integer spin $r$. Uniqueness of low temperature limiting states as well holds for any odd integer $d=2s+1$ if 
the function $l \raro \omega_l(h^{xxx}_0)$ is a monotonically increasing function in the varaiable $l$. For a possible quick proof, we verify using $SU_2(\IC)$ symmetry that 
$${1 \over 3} \omega_l(h^{xxx}_0)$$
$$=\omega_l(\sigma_z \otimes \sigma_z)$$  
$$=\phi_l(\sum_{1 \le k,k' \le s} k k' ((v^*_{k'}v_{k'}-v_{2s+1-{k'}}^*v_{2s+1-{k'}})(v_kv_k^*-v_{2s+1-k}v_{2s+1-k}^*)),$$
 where $\phi_l$ is the normalized trace on $\IM_{2l+1}$ and $V^*=(v^*_k)$ is the Clebsch-Gordon isometry that intertwins two representations $\pi_l \otimes \pi_s$ and $\pi_l$ of 
$SU_2(\IC)$. Note that there exists a unique intertwiner isometry provided $l \ge |l-s|$ i.e. $l \ge {s \over 2}$. So we leave it for future investigation as conjecture that low temperature limiting ground state of $H_{XXX}$ for odd values of $d=2s+1$ is 
$\omega_{s \over 2}$. 

\vsp 
For further illustration of our main result, we consider now well-studied AKLT model $H^{AKLT}$ [1] for which 
$$h^{aklt}_0= J({1 \over 3} + {1 \over 2}(h^{xxx}_0 + {1 \over 3}(h^{xxx}_0)^2)$$
It is well known that $\omega_{1 \over 2}$ is the unique ground state for $H^{AKLT}$ with $\omega_{1 \over 2}(h^{aklt}_0)=0$. 

\vsp  
We may as well compute 
$$\omega_l(\sigma_x^2 \otimes \sigma_x^2+\sigma_y^2 \otimes \sigma_y^2+\sigma_z^2 \otimes \sigma_z^2)$$
$$={1 \over 4}\phi((v_1v_1^*+v_1v_3^*+2v_2v_2^*+v_3v_1^*+v_3v_3^*)(v_1^*v_1+v_3^*v_1+2v_2^*v_2+v_1^*v_3+v_3^*v_3))$$
$$+{1 \over 4}\phi((v_1v_1^*-v_1v_3^*+2v_2v_2^*-v_3v_1^*+v_3v_3^*)(v_1^*v_1-v_3^*v_1+2v_2^*v_2-v_1^*v_3+v_3^*v_3))$$
$$+\phi((v_1v_1^*+v_3v_3^*)(v_1^*v_1+v^*_3v_3))$$
$$={1 \over 4}\phi((I+v_1v_3^*+v_2v_2^*+v_3v_1^*)(I + v_3^*v_1+v_2^*v_2+v_1^*v_3))$$
$$+{1 \over 4}\phi((I -v_1v_3^*+v_2v_2^*-v_3v_1^*)(I -v_3^*v_1+v_2^*v_2-v_1^*v_3))$$
$$+\phi((I-v_2v_2^*)(I-v_2^*v_2))$$
$$={1 \over 2}\phi((I+v_2v_2^*)(I +v_2^*v_2))
  +{1 \over 2}\phi((v_1v_3^*+v_3v_1^*)(v_3^*v_1+v_1^*v_3))$$
$$+\phi((I-v_2v_2^*)(I-v_2^*v_2))$$
$$={3 \over 2}(1+\phi(v_2^4)) -\phi(v_2^2) +{1 \over 2}\phi((v_1v_3^*+v_3v_1^*)(v_3^*v_1+v_1^*v_3))$$

\vsp 
We also compute $$l(l+1)(v_1v_3^*+v_3v_1^*)$$
$$={1 \over 2}\{ (\bar{\pi}_l(i\sigma_x)+i\bar{\pi}_l(i\sigma_y)) (-\bar{\pi}_l(i\sigma_x)- i\bar{\pi}_l(i\sigma_y))+(\bar{\pi}_l(i\sigma_x)-i\bar{\pi}_l(i\sigma_y))(-\bar{\pi}_l(i\sigma_x)+ i\bar{\pi}_l(i\sigma_y))\}$$
$$=\bar{\pi}_l(i\sigma_y)^2-\bar{\pi}_l(i\sigma_x)^2$$
So 
$$l^2(l+1)^2\phi((v_1v_3^*+v_3v_1^*)(v_3^*v_1+v_1^*v_3))$$
$$=\phi((\bar{\pi}_l(i\sigma_y)^2 - \bar{\pi}_l(i\sigma_x)^2)^2)$$
$$=\phi(\bar{\pi}_l(i\sigma_y)^4 + \bar{\pi}_l(i\sigma_x)^4) - 2\phi(\bar{\pi}_l(i\sigma_x)^2\bar{\pi}_l(i\sigma_y)^2)$$
($\phi$ being tracial state on $\clm$, $\phi(\bar{\pi}_l(i\sigma_y)^2\bar{\pi}_l(i\sigma_x)^2)=\phi(\bar{\pi}_l(i\sigma_x)^2 \bar{\pi}_l(i\sigma_y)^2$)

\vsp 
Using symmetry and tracial property of $\phi$ on $\clm$, we write 
$$\alpha_l=\phi(\bar{\pi}_l(i\sigma_x)^4)=\phi(\bar{\pi}_l(i\sigma_y)^4)=\phi(\bar{\pi}_l(i\sigma_z)^4)$$
and 
$$\beta_l=\phi(\bar{\pi}_l(i\sigma_x)^2 \bar{\pi}_l(i\sigma_y)^2) =\phi(\bar{\pi}_l(i\sigma_y)^2 \bar{\pi}_l(i\sigma_z)^2)= \phi(\bar{\pi}_l(i\sigma_z)^2 \bar{\pi}_l(i\sigma_x)^2)$$
We use the following identities: 
$$l^2(l+1)^2 = \phi( (\bar{\pi}_l(i\sigma_x)^2 + \bar{\pi}_l(i\sigma_y)^2 + \bar{\pi}_l(i\sigma_z)^2)^2)$$
$$=3\alpha_l + 6 \beta_l$$
and 
$$\alpha_l = {1 \over 2l+1}\sum_{-l \le m \le l} m^4$$
to deduce 
$$\omega_l(\sigma_x^2 \otimes \sigma_x^2+\sigma_y^2 \otimes \sigma_y^2+\sigma_z^2 \otimes \sigma_z^2)$$
$$={3 \over 2} -{1 \over 3} + {1 \over l^2(l+1)^2} ({3\alpha_l \over 2} + \alpha_l-\beta_l)$$
$$={7 \over 6} + {1 \over l^2(l+1)^2}( {5\alpha_l \over 2} - {1 \over 6}(l^2(l+1)^2-3\alpha_l))$$
$$=1 + {3\alpha_l \over l^2(l+1)^2} $$

\vsp 
We can as well use symmetry and tracial state property of $\phi$ to compute 
$$\omega_l((h_0^{xxx})^2) = 3 \omega_l(\sigma^2_z \otimes \sigma^2_z) + 6 \omega_l(\sigma_z \sigma_x \otimes \sigma_z\sigma_x)$$
$$=3 \phi((I-v_2^2)(1-v_2^2)) + 3 \phi((v_3v_1^*-v_1v_3^*)(v_1^*v_3-v_3^*v_1))$$
$$= 1 + 3 \phi(v_2^4) + 6 \phi(v_2^4)$$
$$= 1 + 9 \phi(v_2^4)$$

So 
$$\omega_l(h_0^{aklt}) = {1 \over 3} + { 1 \over 2}({1 \over 3}+ 3 \phi(v_2^4)-{1 \over l(l+1)})$$ 
$$= {1 \over 3} + {1 \over 2}({1 \over 3} - {1 \over l(l+1)}  + {3 \over l^2(l+1)^2(2l+1)} \sum_{-l \le m \le l}m^4)$$

\vsp 
This clearly shows that the unique ground state $\omega=\omega_{1 \over 2}$ of $H^{AKLT}$ \cite{[1]} is also the low temperature limiting ground state of $H^{XXX}$. So we have the following well known result \cite{[1]}. 

\vsp 
\begin{thm} 
Low temperature limit point as $\beta \raro \infty$ of the unique temperature states $\omega_{\beta}$ for $H^{AKLT}$ at inverse temperatures $\beta > 0$ is unique and its 
limiting value is also given by $\omega_{1 \over 2}$.
\end{thm}

\vsp
\begin{thm}
Let $\omega$ be a low temperature limiting ground state of anti-feromagnetic Heisenberg nearest neighbour isospin model
$$H_{XXX}=\sum_{|\ul{i}-\ul{j}|=1} \sigma^{\ul{i}}_{x} \otimes \sigma^{\ul{j}}_x + \sigma^{\ul{i}}_y \otimes \sigma^{\ul{j}}_y+\sigma^{\ul{i}}_z \otimes \sigma^{\ul{j}}_z$$
on higher lattice dimension $\IZ^q= \IZ \otimes \IZ \otimes \IZ$ for $q \ge 2$. Then $\omega \in S_{\theta,G,+}$, where $S_{\theta,G,+}$ are defined as natural generalisation in higher lattice dimension with reflection symmetries of lower dimensional lattices.  We consider extremal decomposition of $\omega=\int \omega_{\alpha} d\mu(\alpha)$ in $S_{\theta,G,+}$. Then the folloing statements are true:

\vsp
\NI (a) If $d$ is an odd integer then extremal elememts $\omega_{\alpha}$ in the decomposition are pure.

\vsp
\NI (b) If $d$ is an even integer then extremal elememts $\omega_{\alpha}$ in the decomposition are not even extremal in convex set of translation invariant states.

\end{thm}

\vsp
\begin{proof}
Going alonng the same line used in the proof for Proposition 3.6 (b) we deduce that $\omega_{\alpha}$ are extremal elements in the convex set of transition invariant states using reflection positivity around lower dimensional lattice. That these ergodic states are factors for odd values of $d$ needs additional argument. We will use induction on lattice dimension. We already proved the statement for one lattice dimension. We recall Power's criteria \cite{[29]} and use standard approximation to note that factor proprty of $\omega$ is equivalent to show factor propery of $\omega_{Y}$ for all finite subset $Y$ of $\IZ^2$, where $\omega_Y$ is the restriction of $\omega$ to $\IM_Y$. Fix any finite subset $Y$ of $\IZ^2$, we find
an integer $m \ge 1$ such that $Y \subset \IZ \times \{k: -m \le k \le m\}$. Since the state $\omega$ restricted to $\IM_{\IZ \times \{k: -m \le k \le m \}}$ is an ergodic state and reflection positive with twist, is a factor state. Thus $\omega$ restricted to $\IM_{Y}$ is also a factor state.

\vsp

\end{proof}

\vsp

\section{Disclaimer} 
 
Present manuscript being on pure mathematics, to my knowledge, I have no conflict
of interest with scientific community.
\section{Data Availability} 

Data sharing is not applicable to this article as no new data were created or analyzed
in this study.

\end{document}